%% file: main.tex
\documentclass[11pt]{article}

\usepackage[T1]{fontenc}
\usepackage[utf8]{inputenc}
\usepackage[a4paper,margin=27mm]{geometry}
\usepackage{amsmath,amssymb,amsfonts,mathtools}
\usepackage{bm}
\usepackage{booktabs}
\usepackage{graphicx}
\usepackage{longtable}
\usepackage{enumitem}
\usepackage{nicefrac}
\usepackage{url}
\usepackage{xcolor}
\usepackage{float}
\usepackage{caption}
\usepackage{microtype}
\usepackage[numbers,sort&compress]{natbib}
\usepackage[hidelinks]{hyperref}
\usepackage{amsthm}

\newcommand{\mathbbm}[1]{\mathbb{#1}}
\newtheorem{theorem}{Theorem}
\newtheorem{lemma}{Lemma}

\newtheorem{corollary}{Corollary}
\theoremstyle{remark}
\newtheorem{remark}{Remark}

\graphicspath{{./}{basic_interval_outputs_v5/}{generated_figures_main/}}

\DeclareMathOperator{\Var}{Var}
\DeclareMathOperator{\Cov}{Cov}
\DeclareMathOperator{\CI}{CI}
\DeclareMathOperator{\E}{\mathbb{E}}

\title{Wild Bootstrap and Efron's Bootstrap for Debiased Cox Regression}
\author{%
Lena Schemet$^{1}$ \qquad Sarah Friedrich-Welz$^{1,2}$\\[0.6em]
\small $^{1}$Mathematical Statistics and Artificial Intelligence in Medicine, University of Augsburg, Augsburg, Germany\\
\small $^{2}$Center for Advanced Analytics and Predictive Sciences (CAAPS), University of Augsburg, Augsburg, Germany\\[0.4em]
\small Correspondence: lena.schemet@uni-a.de
}
\date{August 2026}

\begin{document}
\maketitle

\begin{abstract}
Cox regression with Lasso penalization is widely used for variable selection in time-to-event data, but reliable coefficient inference after selection remains difficult. We investigate bootstrap inference for the debiased Cox estimator after Cox Lasso selection. Two score-based procedures are considered: a wild bootstrap using independent multipliers and an Efron bootstrap using centered multinomial resampling weights. Both procedures keep the original Cox Lasso fit and debiasing matrix fixed and apply bootstrap weights to the score contributions that determine the first-order distribution of the debiased estimator. We establish asymptotic validity for both bootstrap schemes and study their finite-sample behavior in extensive simulations. The bootstrap intervals improve coverage accuracy over first-order debiased Wald intervals in many small- and moderate-sample settings, with the size of the improvement depending on the simulation setting and tuning rule. A SUPPORT2 application illustrates the procedures in practice.
\end{abstract}

\noindent\textbf{Keywords:} bootstrap inference; Cox proportional hazards model; debiased estimation; Lasso; post-selection inference; survival analysis

\section{Introduction}
\label{sec:intro}

The Cox proportional hazards model \cite{cox1972regression}
is a cornerstone of survival analysis and is
widely used in biomedical and epidemiological research
\cite{therneau2000cox}.
In classical low-dimensional settings, inference for regression coefficients is
typically based on the partial likelihood estimator together with first-order
asymptotic normal approximations
\cite{andersen1982cox,andersen1993model}.
In modern applications, however, Cox regression is frequently combined with
penalization techniques such as the Lasso \cite{tibshirani1996regression}
to handle settings with many covariates and to perform variable selection.
Penalized Cox models with a growing or moderate number of covariates
have been studied extensively in recent years
\cite{fan2002variable,tibshirani1997lasso,zhang2007adaptive}.
Theoretical guarantees for $\ell_1$-penalized estimation in the Cox model
are established in Huang et al.~\cite{huang2013lasso},
while asymptotic inference for models with a diverging number of covariates
is developed in Xia et al.~\cite{xia2023coxdiverging}
and Yu et al.~\cite{yu2021coxci}.

While $\ell_1$-penalized Cox estimators are effective for prediction and model
selection, valid statistical inference after variable selection remains
challenging.
Naive Wald-type confidence intervals applied to selected coefficients are well
known to suffer from substantial coverage distortions, as the selection step
alters the sampling distribution of the estimator
\cite{LeebPotscher2005,LeebPotscher2006,Berk2013PostSelection}.
These effects are particularly pronounced in finite samples and in the presence
of censoring, which is intrinsic to time-to-event data.

Recent methodological developments address these challenges by constructing
debiased or desparsified estimators
\cite{vandegeer2014optimal,zhang2014confidence}
and adapting these ideas to Cox regression
\cite{xia2023coxdiverging,kong2021robustcox}.
These estimators admit an asymptotically linear representation and enable
first-order valid inference for low-dimensional components of the regression
parameter in regularized Cox regression settings.
However, inference based solely on asymptotic normal approximations may still
exhibit non-negligible finite-sample distortions, especially under moderate
sample sizes, strong covariate correlations, or heavy censoring.

Bootstrap methods are a standard tool for refining inference beyond
first-order asymptotics
\cite{efron1993bootstrap,davison1997bootstrap,hall1992bootstrap}.
However, once variable selection is involved, their theoretical validity
is no longer guaranteed.
For $\ell_1$-penalized estimators, the limiting distribution is non-smooth,
and classical bootstrap procedures are known to fail
\cite{knight2000asymptotics,chatterjee2011bootstrapping}.
This has contributed to the widespread view that bootstrap-based inference
after regularized variable selection is fundamentally unreliable.

In Cox regression, the situation appears even more delicate due to
censoring and the semiparametric structure of the partial likelihood
\cite{hjort1985,spiekerman1998}.
Taken together, these considerations suggest that bootstrap methods
should be treated with caution in regularized survival settings.

\begingroup
Importantly, these negative results do not apply to debiased estimators, which
behave asymptotically like smooth, regular estimators.
This opens the door to bootstrap procedures applied to debiased Cox estimators,
rather than to the original penalized estimators themselves.
In particular, weighted resampling schemes such as Efron's bootstrap
\cite{efron1993bootstrap}
and the wild bootstrap
\cite{lin1993,spiekerman1998,vanderVaartWellner1996}
naturally respect the martingale structure underlying the
partial likelihood score and avoid resampling of censoring times or risk sets.

In this manuscript, we investigate bootstrap-based inference for the debiased Cox
estimator after Cox Lasso variable selection.
We focus on the wild bootstrap and Efron's bootstrap applied to the debiased estimator
and study their performance for post-selection inference.
Rather than proposing an exact selective inference procedure
\cite{lee2016exact,taylor2018post},
our goal is to assess whether bootstrap refinement of debiased Cox inference provides a
practical and reliable approach for constructing confidence intervals for
selected regression coefficients in finite samples.

We study the proposed methods from both a theoretical and an empirical
perspective.
Simulation studies illustrate their finite-sample behavior under a range of
survival-analysis scenarios \cite{morris2019using},
and an illustrative data example
demonstrates their practical applicability.

The remainder of the manuscript is organized as follows.
Section~\ref{sec:methods} introduces the Cox model, the penalized Cox estimator,
and the debiased Cox estimator.
Section~\ref{sec:bootstrap} presents the wild bootstrap, Efron's bootstrap,
and the resulting confidence interval constructions.
Section~\ref{sec:asymptotics} gives the asymptotic justification of the proposed procedures.
Section~\ref{sec:simstudy} describes the simulation study and evaluates their finite-sample performance.
Section~\ref{sec:example} presents the real-data example, and
Section~\ref{sec:discussion} concludes.\endgroup

\section{Methods}
\label{sec:methods}

This section introduces the Cox model, the penalized Cox estimator, and the debiased estimator that forms the basis for the
bootstrap procedures developed in the following section.

\subsection{Data structure and Cox model}
\label{sec:model}

\begingroup
We keep the manuscript notation at the level needed to define the estimator
and the bootstrap procedures. Proof-specific empirical-process notation,
detailed risk-set sums, and extended regularity arguments are collected in
the Supplementary Material. This separates the standard Cox model setup from the
methodological contribution, namely bootstrap calibration of the debiased
Cox estimator.

We introduce the statistical framework underlying the Cox proportional hazards
model under right censoring
\cite{cox1972regression,andersen1982cox,andersen1993model,therneau2000cox}.

In the manuscript, we use only the notation required for the estimator and the
bootstrap algorithm. For a vector $\mathbf{x}\in\mathbb{R}^p$, let
$x^{(j)}$ denote its $j$th component and let $(\cdot)^\top$ denote
transposition. We write $\mathbbm{1}(\cdot)$ for the indicator function.
Norms, tensor powers, empirical risk-set sums, and the filtration notation used
in the martingale arguments are given in the Supplementary Material.\endgroup

Let $T_i$ denote the event time of interest and $C_i$ the censoring time.
We observe
\[
\tilde T_i=\min(T_i,C_i),
\qquad
\delta_i=\mathbbm{1}(T_i\le C_i),
\]
together with a $p$-dimensional covariate vector
\[
\mathbf{X}_i
=
(X_i^{(1)},\ldots,X_i^{(p)})^\top
\in\mathbb{R}^p.
\]

The observed data consist of independent and identically distributed
random variables
\[
O_i=(\tilde T_i,\delta_i,\mathbf{X}_i),
\qquad i=1,\ldots,n.
\]

We assume non-informative censoring in the sense that,
conditional on $\mathbf{X}_i$, the censoring time $C_i$ is independent of
the event time $T_i$ and does not depend on the regression parameter
\cite{andersen1993model,therneau2000cox}.

Define the counting process and at-risk indicator by
\[
N_i(t)=\mathbbm{1}(\tilde T_i\le t,\delta_i=1),
\qquad
Y_i(t)=\mathbbm{1}(\tilde T_i\ge t),
\qquad t\ge 0,
\]
as in the counting-process formulation of the Cox model
\cite{andersen1982cox,andersen1993model}.

We consider the Cox proportional hazards model
\cite{cox1972regression,andersen1982cox,therneau2000cox},
which specifies the conditional hazard function as
\[
\lambda(t\mid \mathbf{X}_i)
=
\lambda_0(t)\exp(\mathbf{X}_i^\top\boldsymbol{\beta}),
\]
where $\lambda_0(t)$ is an unknown baseline hazard function and
$\boldsymbol{\beta}\in\mathbb{R}^p$ is the regression parameter vector.
We denote its true value by $\boldsymbol{\beta}^0$.

The corresponding conditional survival function is
\[
S(t\mid \mathbf{X}_i)
=
\exp\!\left\{
-\Lambda_0(t)\exp(\mathbf{X}_i^\top\boldsymbol{\beta})
\right\},
\]
where
\[
\Lambda_0(t)=\int_0^t\lambda_0(u)\,du
\]
denotes the baseline cumulative hazard
\cite{cox1972regression,therneau2000cox}.

Estimation of $\boldsymbol{\beta}$ is based on the Cox partial likelihood
\cite{cox1972regression,andersen1982cox,therneau2000cox}.
The log partial likelihood is
\begin{equation}
\label{eq:partiallik}
\ell(\boldsymbol{\beta})
=
\sum_{i=1}^n \delta_i
\left(
\mathbf{X}_i^\top\boldsymbol{\beta}
-
\log
\sum_{j:\tilde T_j\ge \tilde T_i}
\exp(\mathbf{X}_j^\top\boldsymbol{\beta})
\right).
\end{equation}

The score function can be written as
\[
\mathbf{U}(\boldsymbol{\beta})
=
\nabla_{\boldsymbol{\beta}}\ell(\boldsymbol{\beta})
=
\sum_{i=1}^n \boldsymbol{\psi}_i(\boldsymbol{\beta}),
\]
where $\nabla_{\boldsymbol{\beta}}$ denotes the gradient with respect to
$\boldsymbol{\beta}$. The individual score contributions have the
counting-process representation
\cite{andersen1982cox,andersen1993model}
\[
\boldsymbol{\psi}_i(\boldsymbol{\beta})
=
\int_0^\tau
\left\{
\mathbf{X}_i-\bar{\mathbf{X}}(t,\boldsymbol{\beta})
\right\}
\,dN_i(t),
\]
where
\[
\bar{\mathbf{X}}(t,\boldsymbol{\beta})
=
\frac{
\sum_{j=1}^n
Y_j(t)\mathbf{X}_j
\exp(\mathbf{X}_j^\top\boldsymbol{\beta})
}{
\sum_{j=1}^n
Y_j(t)
\exp(\mathbf{X}_j^\top\boldsymbol{\beta})
}.
\]

For the methodological construction below, the individual score contributions
$\boldsymbol{\psi}_i(\boldsymbol{\beta})$ are the essential objects.
The full counting-process machinery, including the predictable integrand
$\boldsymbol{\phi}_i(t,\boldsymbol{\beta})$, the Doob--Meyer decomposition,
the filtration, and the explicit martingale representation of
$\mathbf{U}(\boldsymbol{\beta}^0)$, is collected in the Supplementary Material.
In the manuscript, we only use the resulting fact that the Cox score is based on  martingale increments, which motivates multiplier resampling of the score
contributions.
\subsection{Penalized Cox regression}

We consider a penalized Cox regression setting in which the number of
covariates may be non-negligible relative to the sample size. In this setting,
unpenalized partial-likelihood estimation can be unstable, and regularization
is useful; see, for example,
\cite{tibshirani1997lasso,fan2002variable,huang2013lasso}.

To stabilize estimation and exploit sparsity in the underlying regression
structure, we use an $\ell_1$-penalized Cox estimator. Specifically,
\[
\hat{\boldsymbol{\beta}}
=
\arg\max_{\boldsymbol{\beta}\in\mathbb{R}^p}
\left\{
\ell(\boldsymbol{\beta})-\gamma_n\|\boldsymbol{\beta}\|_1
\right\},
\]
where $\gamma_n>0$ is a tuning parameter whose theoretical rate
requirements are specified in the Supplementary Material.

This is the Cox Lasso estimator introduced by
Tibshirani~\cite{tibshirani1997lasso}; see also
\cite{fan2002variable,zhang2007adaptive,hastie2015statistical}.

The $\ell_1$ penalty promotes sparse solutions by shrinking small
coefficients toward zero and setting some coefficients exactly to zero,
thereby combining estimation and variable selection.

The active set selected by the penalized estimator is
\[
\widehat M=\{j:\hat\beta_j\neq 0\}.
\]

Inference below is reported componentwise for coordinates $j\in\widehat M$.

\begingroup
\subsection{Debiased Cox estimator}

Although the $\ell_1$-penalized Cox estimator is well suited for sparse
regularized estimation, the shrinkage induced by the penalty introduces bias
that complicates direct distributional inference for the regression
coefficients
\cite{tibshirani1996regression,knight2000asymptotics,hastie2015statistical}.
One way to address this issue is to construct a debiased or desparsified
estimator; see \cite{zhang2014confidence,vandegeer2014optimal} for the general
framework and \cite{yu2021coxci,kong2021robustcox,xia2023coxdiverging}
for corresponding developments in the Cox model.

Let
\[
\ell_n(\boldsymbol{\beta})
=
n^{-1}\ell(\boldsymbol{\beta}),
\qquad
\boldsymbol{\dot\ell}_n(\boldsymbol{\beta})
=
\nabla_{\boldsymbol{\beta}}\ell_n(\boldsymbol{\beta})
=
n^{-1}\mathbf{U}(\boldsymbol{\beta})
\]
denote the normalized log partial likelihood and its score, respectively.

Let $\Sigma_0$ denote the population information matrix appearing in the
linear expansion of the Cox score and let
\[
\Theta_0=\Sigma_0^{-1}.
\]
In practice, $\Theta_0$ is replaced by a sparse estimator $\hat\Theta$
constructed by nodewise Lasso from the original data. The detailed
construction and corresponding rate conditions are given in the online
supplement. Once estimated, $\hat\Theta$ is kept fixed throughout the
bootstrap replications.

Following the general debiasing strategy
\cite{zhang2014confidence,vandegeer2014optimal}
and its adaptations to the Cox model
\cite{yu2021coxci,kong2021robustcox,xia2023coxdiverging},
we define the debiased estimator by the one-step correction
\[
\tilde{\boldsymbol{\beta}}
=
\hat{\boldsymbol{\beta}}
+
\hat\Theta\,
\boldsymbol{\dot\ell}_n(\hat{\boldsymbol{\beta}}).
\]
For each $j\in\widehat M$, inference is based on the debiased estimator
$\tilde\beta_j$ and targets the corresponding true Cox regression
coefficient $\beta_j^0$. Thus, $\widehat M$ determines which coefficients
are reported, while the inferential target remains $\beta_j^0$.

The one-step correction reduces the leading shrinkage bias of the Cox Lasso
estimator. Under suitable regularity and sparsity conditions,
$\tilde\beta_j$ admits an asymptotically linear representation and is
asymptotically normal around $\beta_j^0$. This provides the basis for
componentwise confidence intervals.

In this manuscript, we use bootstrap resampling to approximate the sampling
distribution of the debiased estimator. The bootstrap procedures exploit its
asymptotically linear representation and resample the leading score
contribution rather than the non-smooth Cox Lasso estimator itself.
\endgroup

\begingroup
\section{Bootstrap inference}
\label{sec:bootstrap}

We consider two score-based bootstrap procedures for the debiased Cox
estimator: the wild bootstrap and Efron's bootstrap. Both procedures are
applied after the Cox Lasso estimator $\hat{\boldsymbol{\beta}}$ and the
inverse-information estimator $\hat\Theta$ have been estimated from the
original data, with $\hat\Theta$ obtained by nodewise Lasso. Neither quantity
is re-estimated within the bootstrap. Instead, each bootstrap replication
generates a weighted version of the normalized score, which is then multiplied
by the fixed matrix $\hat\Theta$.

The two bootstrap procedures differ in the weights applied to the individual
score contributions. The wild bootstrap uses independent mean-zero
multipliers, whereas Efron's bootstrap uses centered multinomial weights.
In both cases, the weighted score is used to generate bootstrap replicates of
the debiased estimator. Thus, the bootstrap approximates the first-order
sampling variation of the debiased estimator without repeating the Cox Lasso
estimation or variable-selection step in each bootstrap replication.\endgroup

\subsection{Wild bootstrap}
\label{sec:wild}

The wild bootstrap exploits the counting-process and martingale structure
of the Cox model
\cite{andersen1982cox,andersen1993model,therneau2000cox}.
In the Andersen--Gill formulation, the Cox score admits a martingale
representation, which motivates multiplier-based resampling for censored
survival data; see, for example,
\cite{lin1993,spiekerman1998,vanderVaartWellner1996,kosorok2008empirical,
dobler2019confidence,bluhmki2019wild}.

For bootstrap replication $b=1,\ldots,B$, let
$\xi_1^{(b)},\ldots,\xi_n^{(b)}$ be i.i.d.\ multiplier variables satisfying
\[
\mathbb{E}(\xi_i^{(b)})=0,
\qquad
\mathbb{E}\{(\xi_i^{(b)})^2\}=1,
\qquad
\mathbb{E}\{(\xi_i^{(b)})^4\}<\infty.
\]
Such moment conditions are standard in multiplier bootstrap theory
\cite{vanderVaartWellner1996,kosorok2008empirical,chernozhukov2013gaussian}.
Possible choices include Rademacher, Gaussian, Mammen
\cite{mammen1993}, and centered exponential multipliers
\cite{beyersmann2013,dobler2019confidence,bluhmki2019wild}.

The wild bootstrap perturbs the individual score contributions evaluated at
the Cox Lasso estimator $\hat{\boldsymbol{\beta}}$. For bootstrap replication
$b$, define
\[
\boldsymbol{\dot\ell}_{n,\mathrm{WB}}^{*(b)}
(\hat{\boldsymbol{\beta}})
=
\frac{1}{n}
\sum_{i=1}^n
\xi_i^{(b)}
\boldsymbol{\psi}_i(\hat{\boldsymbol{\beta}}).
\]

This construction corresponds to the leading term in the asymptotically
linear representation
\[
\tilde{\boldsymbol{\beta}}-\boldsymbol{\beta}^0
=
\Theta_0
\frac{1}{n}
\sum_{i=1}^n
\boldsymbol{\psi}_i(\boldsymbol{\beta}^0)
+
o_p(n^{-1/2}),
\]
or equivalently,
\[
\sqrt n
(\tilde{\boldsymbol{\beta}}-\boldsymbol{\beta}^0)
=
\Theta_0
\frac{1}{\sqrt n}
\sum_{i=1}^n
\boldsymbol{\psi}_i(\boldsymbol{\beta}^0)
+
o_p(1).
\]

The wild bootstrap replicate of the debiased estimator is therefore defined
as
\[
\tilde{\boldsymbol{\beta}}_{\mathrm{WB}}^{*(b)}
=
\tilde{\boldsymbol{\beta}}
+
\hat\Theta\,
\boldsymbol{\dot\ell}_{n,\mathrm{WB}}^{*(b)}
(\hat{\boldsymbol{\beta}}).
\]
Equivalently,
\[
\sqrt n
\left(
\tilde{\boldsymbol{\beta}}_{\mathrm{WB}}^{*(b)}
-
\tilde{\boldsymbol{\beta}}
\right)
=
\hat\Theta
\frac{1}{\sqrt n}
\sum_{i=1}^n
\xi_i^{(b)}
\boldsymbol{\psi}_i(\hat{\boldsymbol{\beta}}).
\]

\begingroup
Thus, each wild bootstrap replication consists of weighting the estimated
individual score contributions and applying the fixed debiasing matrix
$\hat\Theta$. The Cox Lasso estimator and the selected set are not
re-estimated within the bootstrap. By resampling only the leading linear
score contribution, the procedure leaves the observed risk sets unchanged
and remains aligned with the martingale structure underlying the Cox score
\cite{lin1993,spiekerman1998,dobler2019confidence}.\endgroup

\begingroup
\subsection{Efron's bootstrap}
\label{sec:Efron}

In the classical Efron bootstrap, a bootstrap sample of size \(n\) is obtained
by drawing observations with replacement from the original data. Equivalently,
if \(W_i\) denotes the number of times observation \(i\) is selected, the
resulting vector of selection counts follows
\[
(W_1,\ldots,W_n)
\sim
\mathrm{Mult}\{n;1/n,\ldots,1/n\}.
\]
Thus, multinomial counts provide a weighted representation of classical
resampling with replacement.

In our score-based implementation, we use the same multinomial resampling law
without refitting the Cox Lasso in each bootstrap sample. Instead, the centered
counts are used to reweight the individual score contributions, while the Cox
Lasso estimator \(\hat{\boldsymbol{\beta}}\) and the debiasing matrix
\(\hat\Theta\), both obtained from the original data, are kept fixed. This
yields an exchangeably weighted bootstrap; see
\cite{pauly2011weighted} for a general treatment of weighted resampling with
exchangeable weights and \cite{doblerpauly2014} for related survival
resampling formulations.

The difference from the wild bootstrap is therefore the distribution of the
bootstrap weights. The wild bootstrap uses independent mean-zero multipliers,
whereas Efron's bootstrap uses centered multinomial counts, which are
exchangeable but dependent.

Specifically, for bootstrap replication \(b=1,\ldots,B\), draw
\[
(W_1^{(b)},\ldots,W_n^{(b)})
\sim
\mathrm{Mult}\{n;1/n,\ldots,1/n\},
\]
and define the centered weights
\[
\omega_i^{(b)}
=
W_i^{(b)}-1,
\qquad
i=1,\ldots,n.
\]

Using these weights, we define the normalized score for Efron's bootstrap by
\[
\boldsymbol{\dot\ell}_{n,\mathrm{Efr}}^{*(b)}
(\hat{\boldsymbol{\beta}})
=
\frac{1}{n}
\sum_{i=1}^n
\omega_i^{(b)}
\,\boldsymbol{\psi}_i(\hat{\boldsymbol{\beta}}).
\]

The corresponding debiased bootstrap estimator is
\[
\tilde{\boldsymbol{\beta}}_{\mathrm{Efr}}^{*(b)}
=
\tilde{\boldsymbol{\beta}}
+
\hat\Theta\,
\boldsymbol{\dot\ell}_{n,\mathrm{Efr}}^{*(b)}
(\hat{\boldsymbol{\beta}}).
\]

Equivalently, on the \(\sqrt n\) scale,
\[
\sqrt n
\left(
\tilde{\boldsymbol{\beta}}_{\mathrm{Efr}}^{*(b)}
-
\tilde{\boldsymbol{\beta}}
\right)
=
\hat\Theta
\frac{1}{\sqrt n}
\sum_{i=1}^n
\omega_i^{(b)}
\,\boldsymbol{\psi}_i(\hat{\boldsymbol{\beta}}).
\]

Thus, Efron's bootstrap uses the classical resampling-with-replacement law at
the level of the linear score representation of the debiased estimator,
without repeating the Cox Lasso estimation or variable-selection step within
the bootstrap.
\endgroup

\subsection{Confidence interval construction}
\label{sec:ci}

We now describe the construction of componentwise confidence intervals
based on bootstrap replicates of the debiased estimator. The formulas
below apply to both bootstrap schemes considered in this manuscript, namely
the wild bootstrap and Efron's bootstrap. To keep the notation concise,
we write $(*)$ for either resampling scheme.

Let $B$ denote the number of bootstrap replications and
$\alpha\in(0,1)$ the nominal significance level. Inference is conducted
componentwise for $j\in\widehat M$.

\begingroup
The intervals should be interpreted as intervals for reported debiased coordinates after the original selection step, not as exact selective-inference intervals for the event $\{\widehat M=M\}$. This is consistent with the construction above, where the bootstrap calibrates the distributional approximation of the debiased estimator and does not model support variability inside the bootstrap loop.\endgroup

For a fixed coordinate $j$, let
\[
\tilde\beta_j^{*(1)},\ldots,\tilde\beta_j^{*(B)}
\]
denote the bootstrap replicates of the debiased estimator
$\tilde\beta_j$, where $(*)$ stands for either the wild bootstrap or
Efron's bootstrap.

Define the bootstrap deviations by
\[
\Delta_j^{*(b)}
=
\tilde\beta_j^{*(b)}-\tilde\beta_j,
\qquad b=1,\ldots,B,
\]
and the empirical bootstrap standard deviation by
\[
\hat\sigma_{j,*}
=
\left\{
\frac{1}{B-1}
\sum_{b=1}^B
\big(
\tilde\beta_j^{*(b)}-\bar\beta_j^*
\big)^2
\right\}^{1/2},
\]
where
\[
\bar\beta_j^*
=
\frac{1}{B}
\sum_{b=1}^B
\tilde\beta_j^{*(b)}
\]
denotes the bootstrap mean.

For any finite sample $\{V^{(1)},\ldots,V^{(B)}\}$, let
$\hat Q_\gamma(V)$ denote the empirical $\gamma$-quantile.

\subsubsection*{Percentile interval}

Let
\[
\hat Q_{j,\gamma}^*
=
\hat Q_\gamma
\big(
\tilde\beta_j^{*(1)},\ldots,\tilde\beta_j^{*(B)}
\big).
\]
The percentile interval is defined as
\[
\CI_j^{\mathrm{perc},*}(1-\alpha)
=
\big[
\hat Q_{j,\alpha/2}^*,
\hat Q_{j,1-\alpha/2}^*
\big].
\]

\subsubsection*{Bootstrap normal interval}

The bootstrap normal interval is given by
\[
\CI_j^{\mathrm{norm},*}(1-\alpha)
=
\tilde\beta_j
\pm
z_{1-\alpha/2}\hat\sigma_{j,*},
\]
where $z_{1-\alpha/2}$ denotes the $(1-\alpha/2)$-quantile of the
standard normal distribution.

\begingroup
\subsubsection*{Basic bootstrap interval}

Let
\[
\hat Q_{j,\gamma}^{\Delta,*}
=
\hat Q_\gamma
\big(
\Delta_j^{*(1)},\ldots,\Delta_j^{*(B)}
\big)
\]
denote the empirical $\gamma$-quantile of the bootstrap deviations. The
basic, or reverse-percentile, bootstrap interval
\cite{efron1993bootstrap,davison1997bootstrap} is
\[
\CI_j^{\mathrm{basic},*}(1-\alpha)
=
\big[
\tilde\beta_j
-
\hat Q_{j,1-\alpha/2}^{\Delta,*},
\;
\tilde\beta_j
-
\hat Q_{j,\alpha/2}^{\Delta,*}
\big].
\]

For the asymptotic argument below, we additionally consider the bootstrap
deviations on a standardized scale. Define
\[
T_j^{*(b)}
=
\frac{\Delta_j^{*(b)}}{\hat\sigma_{j,*}},
\qquad b=1,\ldots,B,
\]
where $\hat\sigma_{j,*}$ denotes the empirical bootstrap standard deviation,
which is common to all bootstrap replications. Consequently,
\[
\hat Q_\gamma(T_j^*)
=
\frac{\hat Q_{j,\gamma}^{\Delta,*}}{\hat\sigma_{j,*}}.
\]
Since the same scale factor is used for every bootstrap replication, it
cancels when the quantiles are transformed back to the scale of
$\tilde\beta_j$. Thus, inversion of the standardized bootstrap quantiles
yields exactly the basic interval defined above. The standardized statistic
is therefore used only as a device in the theoretical argument and does not
define a separate confidence-interval construction.

This should be distinguished from the classical bootstrap-$t$ interval,
which uses a replicate-specific standard-error estimate within each bootstrap
replication. Section~\ref{sec:asymptotics} establishes consistency of the
standardized bootstrap statistic and of the corresponding bootstrap variance
estimator.

In the sequel, we write
$\CI_j^{\mathrm{perc,WB}}$, $\CI_j^{\mathrm{norm,WB}}$,
and $\CI_j^{\mathrm{basic,WB}}$ for confidence intervals based on the
wild bootstrap, and
$\CI_j^{\mathrm{perc,Efr}}$, $\CI_j^{\mathrm{norm,Efr}}$,
and $\CI_j^{\mathrm{basic,Efr}}$ for confidence intervals based on
Efron's bootstrap.
\endgroup

\begingroup
\section{Asymptotic justification}
\label{sec:asymptotics}

This section establishes the asymptotic validity of the bootstrap
confidence intervals introduced in Section~\ref{sec:ci}.
Inference is conducted componentwise for fixed
$j\in\widehat M$.

The argument proceeds in three steps.
First, we recall asymptotic linearity and asymptotic normality
of the debiased Cox estimator; see
\cite{yu2021coxci,kong2021robustcox,xia2023coxdiverging}.
Second, we show that the conditional wild bootstrap distribution
consistently approximates the sampling distribution.
Third, we combine distributional consistency with variance consistency
to establish validity of the standardized bootstrap statistic and, through
quantile inversion, of the corresponding basic bootstrap interval.

We present these arguments in detail for the wild bootstrap.
The corresponding results for Efron's bootstrap follow from the same
linearization of the debiased Cox estimator, with the conditional
multiplier argument replaced by an exchangeably weighted bootstrap
argument for the centered multinomial weights. The detailed assumptions
and proofs for both procedures are given in Sections~S1--S2 of the
Supplementary Material.

The proof strategy separates the arguments for the Cox model from the
resampling arguments. Debiasing arguments for the Cox model first provide
an asymptotically linear representation with a negligible remainder.
The resulting leading term is a sum of influence contributions.
For the wild bootstrap, a conditional multiplier central limit theorem
is applied to the independently weighted leading term. For Efron's
bootstrap, the corresponding conditional approximation is established
using exchangeably weighted bootstrap theory, based on the result of
Pauly~\cite{pauly2011weighted}. Thus, the part of the argument specific
to the Cox model is common to both procedures, whereas the final
conditional central limit argument depends on the bootstrap weights.

Throughout this section, $P^*(\cdot)$ denotes probability conditional on
the observed data, that is, probability with respect to the bootstrap
weights only.
The proof-specific notation $\E^*$, $\Var^*$, and $o_p^*(1)$ is defined
and used only in the Supplementary Material.

The results are stated under assumptions (A1)--(A9) detailed in the
Supplementary Material.
Assumptions (A1)--(A7) cover the sampling and censoring setting,
regularity conditions for the Cox model, sparsity, precision-matrix
estimation, and moment conditions for the influence components.
Assumption (A8) specifies the wild bootstrap multipliers, whereas
(A9) specifies the centered multinomial weights used in Efron's bootstrap.\endgroup

\begingroup
\begin{remark}[Tuning-parameter selection]
The theory is formulated for the Cox Lasso tuning parameter $\gamma_n$
such that the resulting Cox Lasso estimator is sufficiently sparse and
satisfies the corresponding debiasing remainder bounds.
Thus, the proofs do not rely on a particular implementation rule such
as AIC, BIC, cross-validation at $\lambda_{\min}$, or cross-validation
at $\lambda_{1se}$, but on the resulting properties of the fitted Cox Lasso
estimator; see, for example, \cite{vandegeer2014optimal} and the
implementations for the Cox model in
\cite{yu2021coxci,kong2021robustcox}.

The notation $\lambda_{\min}$ and $\lambda_{1se}$ is retained for the
conventional cross-validation rules used in implementation.
Choosing the penalty by AIC, BIC, $\lambda_{\min}$, or $\lambda_{1se}$
does not change the $\ell_1$-penalized nature of the Cox Lasso estimator.
However, the present theory covers these choices only insofar as the
resulting random tuning parameter satisfies the regularity conditions
required for the Lasso and debiasing arguments with high probability.

The four tuning rules are therefore treated as implementable
finite-sample choices rather than as theoretically ordered procedures.
The simulation study considers $\lambda_{\min}$, $\lambda_{1se}$, AIC,
and BIC under both the standard Cox Lasso and an adaptive Cox Lasso
specification. These comparisons are sensitivity analyses for data-driven
penalty selection. The theoretical results apply only insofar as the resulting
penalized estimator satisfies the sparsity and debiasing remainder conditions
stated above.
\end{remark}\endgroup

\begin{theorem}[Asymptotic linearity]
\label{thm:asymptotic_linearity}
Assume (A1)--(A7) in the Supplementary Material.
For each fixed $j\in\widehat M$,
\[
\sqrt n(\tilde\beta_j-\beta_j^0)
=
\mathbf{e}_j^\top\Theta_0
\frac{1}{\sqrt n}
\sum_{i=1}^n
\boldsymbol{\psi}_i(\boldsymbol{\beta}^0)
+o_p(1)
=
\frac{1}{\sqrt n}\sum_{i=1}^n\varphi_{ij}
+o_p(1),
\]
where
\[
\varphi_{ij}
=
\mathbf{e}_j^\top
\Theta_0
\boldsymbol{\psi}_i(\boldsymbol{\beta}^0).
\]
Consequently,
\[
\sqrt n(\tilde\beta_j-\beta_j^0)
\Rightarrow
\mathcal N(0,\sigma_j^2),
\qquad
\sigma_j^2=\Var(\varphi_{1j}).
\]
\end{theorem}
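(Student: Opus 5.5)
The plan is to follow the standard desparsified-Lasso argument of van de Geer et al., adapted to the Cox partial likelihood as in Yu et al.\ and Xia et al. Write $\ddot\ell_n(\boldsymbol\beta)$ for the normalized Hessian of $-\ell_n$, so that $\Sigma_0$ is its population limit at $\boldsymbol\beta^0$. A first-order Taylor (mean-value) expansion of the score around $\boldsymbol\beta^0$ gives
\[
\boldsymbol{\dot\ell}_n(\hat{\boldsymbol\beta})
=\boldsymbol{\dot\ell}_n(\boldsymbol\beta^0)-\ddot\ell_n(\boldsymbol\beta^0)(\hat{\boldsymbol\beta}-\boldsymbol\beta^0)+\mathbf r_n ,
\]
where $\mathbf r_n$ collects the second-order terms. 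Substituting into $\tilde{\boldsymbol\beta}=\hat{\boldsymbol\beta}+\hat\Theta\boldsymbol{\dot\ell}_n(\hat{\boldsymbol\beta})$ and multiplying by $\sqrt n\,\mathbf e_j^\top$ yields the decomposition
\[
\sqrt n(\tilde\beta_j-\beta_j^0)
=\mathbf e_j^\top\Theta_0\sqrt n\,\boldsymbol{\dot\ell}_n(\boldsymbol\beta^0)
+\mathbf e_j^\top(\hat\Theta-\Theta_0)\sqrt n\,\boldsymbol{\dot\ell}_n(\boldsymbol\beta^0)
-\sqrt n\,\mathbf e_j^\top\{\hat\Theta\ddot\ell_n(\boldsymbol\beta^0)-I\}(\hat{\boldsymbol\beta}-\boldsymbol\beta^0)
+\sqrt n\,\mathbf e_j^\top\hat\Theta\mathbf r_n .
\]
The proof then consists of showing that the last three terms are $o_p(1)$ under (A1)--(A7).

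\textbf{Second term.} I bound it by H\"older's inequality:
\[
\|\mathbf e_j^\top(\hat\Theta-\Theta_0)\|_1\,\|\sqrt n\,\boldsymbol{\dot\ell}_n(\boldsymbol\beta^0)\|_\infty .
\]
The score is a sum of martingale integrals with bounded covariates, so a maximal/Bernstein inequality gives $\|\sqrt n\,\boldsymbol{\dot\ell}_n(\boldsymbol\beta^0)\|_\infty=O_p(\sqrt{\log p})$. The nodewise-Lasso rate in the precision-matrix assumption supplies $\|\mathbf e_j^\top(\hat\Theta-\Theta_0)\|_1=O_p(s_j\sqrt{\log p/n})$. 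The product vanishes under the sparsity condition $s_j\log p/\sqrt n\to0$.

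\textbf{Third term.} I bound it by
\[
\sqrt n\,\|\hat\Theta\ddot\ell_n(\hat{\boldsymbol\beta})-I\|_\infty\,\|\hat{\boldsymbol\beta}-\boldsymbol\beta^0\|_1 .
\]
This uses the nodewise KKT bound $\|\hat\Theta\ddot\ell_n(\hat{\boldsymbol\beta})-I\|_\infty=O_p(\sqrt{\log p/n})$, plus the cost of swapping $\ddot\ell_n(\hat{\boldsymbol\beta})$ for $\ddot\ell_n(\boldsymbol\beta^0)$. It also uses the Cox Lasso oracle inequality of Huang et al., $\|\hat{\boldsymbol\beta}-\boldsymbol\beta^0\|_1=O_p(s_0\sqrt{\log p/n})$, which holds under the compatibility condition in the regularity assumptions. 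The result is $O_p(s_0\log p/\sqrt n)=o_p(1)$.

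\textbf{Fourth term (main obstacle).} This is the remainder $\mathbf r_n$, together with the Hessian swap, and I expect it to be the hardest step. The Cox Hessian is not a sum of i.i.d.\ terms: it involves risk-set averages $\bar{\mathbf X}(t,\boldsymbol\beta)$. I would try to prove a Lipschitz-type bound
\[
\|\ddot\ell_n(\bar{\boldsymbol\beta})-\ddot\ell_n(\boldsymbol\beta^0)\|_\infty\lesssim\|\bar{\boldsymbol\beta}-\boldsymbol\beta^0\|_1
\]
on the event $\max_i|\mathbf X_i^\top(\bar{\boldsymbol\beta}-\boldsymbol\beta^0)|\le 1$. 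The idea is to use bounded covariates and bounded-away-from-zero risk-set denominators on $[0,\tau]$, where $\tau$ is the end of follow-up. This would give $\sqrt n\|\mathbf r_n\|_\infty=O_p(\sqrt n\,s_0^2\log p/n)$, which vanishes by the sparsity rate in (A5). The prefactor $\|\mathbf e_j^\top\hat\Theta\|_1=O_p(\sqrt{s_j})$ or $O_p(1)$ is absorbed into the same rate condition.

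\textbf{Leading term and limit.} It remains to identify the leading term and pass to the limit. By definition of the score,
\[
\sqrt n\,\boldsymbol{\dot\ell}_n(\boldsymbol\beta^0)=n^{-1/2}\sum_i\boldsymbol\psi_i(\boldsymbol\beta^0),
\]
which gives the representation with $\varphi_{ij}=\mathbf e_j^\top\Theta_0\boldsymbol\psi_i(\boldsymbol\beta^0)$. The $\boldsymbol\psi_i$ use the empirical $\bar{\mathbf X}$, so they are not exactly i.i.d. Two routes are available:
\begin{itemize}
\item[(i)] Replace $\bar{\mathbf X}$ by its population limit $\mathbf e(t)$. The $\mathbf e_j^\top\Theta_0$-projected difference is $o_p(1)$ by the usual Andersen--Gill argument, since the sum $\sum_i\int(\bar{\mathbf X}-\mathbf e)\,dM_i$ vanishes, and then apply the Lindeberg CLT.
\item[(ii)] Apply Rebolledo's martingale CLT directly to the one-dimensional martingale $n^{-1/2}\sum_i\int\mathbf e_j^\top\Theta_0\{\mathbf X_i-\bar{\mathbf X}(t,\boldsymbol\beta^0)\}\,dM_i(t)$. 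Its predictable variation converges to $\mathbf e_j^\top\Theta_0\Sigma_0\Theta_0\mathbf e_j=\sigma_j^2$, with Lindeberg guaranteed by the moment conditions in (A7).
\end{itemize}
Slutsky's lemma then yields $\sqrt n(\tilde\beta_j-\beta_j^0)\Rightarrow\mathcal N(0,\sigma_j^2)$.

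\textbf{Random index.} Finally, $j\in\widehat M$ is random. Because the expansion holds for every fixed coordinate, and the remainder bounds above are uniform in $j$ through the $\ell_\infty$ and $\max_j$ norms, the statement holds for any reported $j$.
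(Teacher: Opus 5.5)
Your route is the same as the paper's. Its proof simply invokes the desparsification blueprint (expand the score, premultiply by $\hat\Theta$, control the remainder by sparsity and nodewise rates) and cites the Cox results of Yu et al.\ and Xia et al.\ as black boxes. Your treatment of the leading term is more careful than the paper's: you note that $\boldsymbol\psi_i(\boldsymbol\beta^0)$ contains the empirical $\bar{\mathbf X}_n$, and you identify the limit variance $\mathbf e_j^\top\Theta_0\Sigma_0\Theta_0\mathbf e_j=(\Theta_0)_{jj}$ through a martingale CLT.

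However, your remainder bookkeeping does not close under (A5). In the fourth term, and equally in the ``cost of swapping'' $\ddot\ell_n(\hat{\boldsymbol\beta})$ for $\ddot\ell_n(\boldsymbol\beta^0)$ in the third term, you combine a max-norm Lipschitz bound on the Hessian with $\ell_\infty$/$\ell_1$ H\"older. This yields $\sqrt n\,\|\hat{\boldsymbol\beta}-\boldsymbol\beta^0\|_1^2=O_p(s_0^2\log p/\sqrt n)$, which is $s_0$ times the quantity that (A5) sends to zero. So ``vanishes by (A5)'' fails unless $s_0$ stays bounded, and the prefactor $\|\mathbf e_j^\top\hat\Theta\|_1$ is not absorbed either.

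The missing idea is to measure these terms by the quadratic form rather than by the squared $\ell_1$ error. Put $\mathbf b=\hat{\boldsymbol\beta}-\boldsymbol\beta^0$ and $\mathbf v=\hat\Theta_{j\cdot}^\top$. The $u$-derivative of $\mathbf v^\top\ddot\ell_n(\boldsymbol\beta^0+u\mathbf b)\mathbf b$ is a risk-set weighted third central moment of $(\mathbf v^\top\mathbf X,\mathbf b^\top\mathbf X,\mathbf b^\top\mathbf X)$. It is therefore at most $2K\|\mathbf v\|_1$ times the weighted variance of $\mathbf b^\top\mathbf X$. Integrating against $dN_i$ bounds both terms by a multiple of $K\|\mathbf v\|_1\sqrt n\,\sup_{u\in[0,1]}\mathbf b^\top\ddot\ell_n(\boldsymbol\beta^0+u\mathbf b)\mathbf b$. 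Two results of Huang et al.\ then finish the job:
\begin{itemize}
\item the comparison $e^{-\eta}\ddot\ell_n(\boldsymbol\beta^0)\preceq\ddot\ell_n(\boldsymbol\beta^0+u\mathbf b)\preceq e^{\eta}\ddot\ell_n(\boldsymbol\beta^0)$, with $\eta\le2K\|\mathbf b\|_1$;
\item their oracle inequality for the symmetric Bregman divergence, which gives $\mathbf b^\top\ddot\ell_n(\boldsymbol\beta^0)\mathbf b=O_p(s_0\log p/n)$.
\end{itemize}
Hence both terms are $O_p(\|\hat\Theta_{j\cdot}\|_1\,s_0\log p/\sqrt n)$. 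This is $o_p(1)$ under (A5) provided $\|\hat\Theta_{j\cdot}\|_1=O_p(1)$.

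A second mismatch concerns $\hat\Theta$. In the paper the nodewise Lasso is run on $\hat\Sigma=n^{-1}\mathbf G^\top\mathbf G$ with rows $\mathbf G_i=\boldsymbol\psi_i(\hat{\boldsymbol\beta})$, an outer product of score contributions, not on the Hessian. Its KKT conditions therefore control $\|\hat\Theta\hat\Sigma-I\|_\infty$, not the bound on $\hat\Theta\ddot\ell_n(\hat{\boldsymbol\beta})-I$ that you invoke. An extra comparison step is needed:
\begin{itemize}
\item At $\boldsymbol\beta^0$, $\hat\Sigma(\boldsymbol\beta^0)-\ddot\ell_n(\boldsymbol\beta^0)=n^{-1}\sum_i\int_0^\tau\{(\mathbf X_i-\bar{\mathbf X}_n)^{\otimes2}-V_n(t)\}\,dM_i(t)$, with $V_n$ the risk-set weighted covariance. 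The compensator term vanishes by the definition of $V_n$, so this difference is $O_p(\sqrt{\log p/n})$ in max norm.
\item Moving from $\hat{\boldsymbol\beta}$ to $\boldsymbol\beta^0$ inside $\hat\Sigma$ must again be bounded through $\mathbf b^\top\ddot\ell_n(\boldsymbol\beta^0)\mathbf b$ and $\mathbf b^\top\hat\Sigma(\boldsymbol\beta^0)\mathbf b$ via Cauchy--Schwarz, not through $\|\mathbf b\|_1$.
\end{itemize}

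The rates you attribute to the assumptions are also not in (A1)--(A7) as stated. (A6) gives only $\|\hat\Theta-\Theta_0\|_\infty=o_p(1)$ with no rate, (A5) involves only $s_0$, and no compatibility condition is listed. So you must add explicitly: $\|\mathbf e_j^\top(\hat\Theta-\Theta_0)\|_1=O_p(s_j\sqrt{\log p/n})$, $s_j\log p/\sqrt n\to0$, $\|\hat\Theta_{j\cdot}\|_1=O_p(1)$, and compatibility. The paper's black-box citation carries these only implicitly.

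Finally, uniform-in-$j$ remainder bounds would give the linear expansion at a data-chosen index, but not the $\mathcal N(0,\sigma_j^2)$ limit, because $\widehat M$ is built from the same scores. The theorem should be read unconditionally, for each fixed $j$.
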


The theorem implies that the debiased estimator behaves, to first order,
like an average of i.i.d.\ influence components.
Consequently, classical central limit theory applies.

\begin{theorem}[Wild bootstrap consistency]
\label{thm:wb-cons}
Assume (A1)--(A8) in the Supplementary Material.
For each fixed $j\in\widehat M$,
\[
\sup_{t\in\mathbb R}
\left|
P^*\!\left(
\sqrt n(\tilde\beta_j^{\mathrm{WB}}-\tilde\beta_j)\le t
\right)
-
P\!\left(
\sqrt n(\tilde\beta_j-\beta_j^0)\le t
\right)
\right|
\xrightarrow{p}0.
\]
\end{theorem}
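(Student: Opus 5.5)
The proof combines Theorem~\ref{thm:asymptotic_linearity} with a conditional multiplier central limit theorem applied to the bootstrap leading term, and then applies Pólya's theorem. Because the normal limit $\mathcal N(0,\sigma_j^2)$ is continuous, Pólya's theorem turns pointwise convergence into the uniform statement. Theorem~\ref{thm:asymptotic_linearity} already gives $P(\sqrt n(\tilde\beta_j-\beta_j^0)\le t)\to\Phi(t/\sigma_j)$ uniformly in $t$. By the triangle inequality, it therefore suffices to show that
\[
\sup_{t}\Bigl|P^*\bigl(\sqrt n(\tilde\beta_j^{\mathrm{WB}}-\tilde\beta_j)\le t\bigr)-\Phi(t/\sigma_j)\Bigr|\xrightarrow{p}0,
\]
under the nondegeneracy $\sigma_j^2>0$ that I take to be part of (A1)--(A7). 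From Section~\ref{sec:wild},
\[
\sqrt n(\tilde\beta_j^{\mathrm{WB}}-\tilde\beta_j)=\frac{1}{\sqrt n}\sum_{i=1}^n \xi_i\,\hat\varphi_{ij},
\qquad
\hat\varphi_{ij}=\mathbf e_j^\top\hat\Theta\,\boldsymbol\psi_i(\hat{\boldsymbol\beta}).
\]
Conditionally on the data, this is a sum of independent mean-zero terms with conditional variance $\hat s_j^2=n^{-1}\sum_i\hat\varphi_{ij}^2$.

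The first step is to apply the Lindeberg--Feller CLT conditionally on the data. This requires two facts: (i) $\hat s_j^2\xrightarrow{p}\sigma_j^2$, and (ii) a conditional Lindeberg condition. For (ii) it is enough to show $\max_i|\hat\varphi_{ij}|/\sqrt n\xrightarrow{p}0$, combined with $\E\xi^2=1$ and the bound $\E\xi^4<\infty$ from (A8). Indeed, the Lindeberg sum is at most $\hat s_j^{2}\,\sup_{|c|\le \max_i|\hat\varphi_{ij}|/\sqrt n}\E\{\xi^2\mathbbm 1(|c\xi|>\epsilon)\}$, and this vanishes. The maximum condition itself follows from (i) together with a uniform approximation $n^{-1}\sum_i(\hat\varphi_{ij}-\varphi_{ij})^2\xrightarrow{p}0$: $\max_i|\varphi_{ij}|=o_p(\sqrt n)$ holds for i.i.d.\ square-integrable variables, and $\max_i|\hat\varphi_{ij}-\varphi_{ij}|\le\{\sum_i(\hat\varphi_{ij}-\varphi_{ij})^2\}^{1/2}=o_p(\sqrt n)$. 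With (i) and (ii) in hand, the CLT applied along subsequences gives conditional convergence in probability to $\mathcal N(0,\sigma_j^2)$, and Pólya's theorem upgrades this to the supremum.

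The main obstacle is the approximation $n^{-1}\sum_i(\hat\varphi_{ij}-\varphi_{ij})^2\xrightarrow{p}0$; (i) then follows from it and the law of large numbers for $\varphi_{ij}^2$. I would decompose
\[
\hat\varphi_{ij}-\varphi_{ij}=(\hat\Theta_j-\Theta_{0,j})^\top\boldsymbol\psi_i(\hat{\boldsymbol\beta})+\Theta_{0,j}^\top\{\boldsymbol\psi_i(\hat{\boldsymbol\beta})-\boldsymbol\psi_i(\boldsymbol\beta^0)\},
\]
where $\hat\Theta_j$ and $\Theta_{0,j}$ denote the $j$th rows. Each term is controlled by Hölder's inequality. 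For the first term I use $\|\hat\Theta_j-\Theta_{0,j}\|_1$ from the nodewise Lasso rates in (A5)--(A6), times $\max_i\|\boldsymbol\psi_i(\hat{\boldsymbol\beta})\|_\infty$, which is bounded by twice the bound on the covariates. For the second term, $\boldsymbol\psi_i(\hat{\boldsymbol\beta})-\boldsymbol\psi_i(\boldsymbol\beta^0)=-\delta_i\{\bar{\mathbf X}(\tilde T_i,\hat{\boldsymbol\beta})-\bar{\mathbf X}(\tilde T_i,\boldsymbol\beta^0)\}$. A mean-value expansion bounds this in sup-norm by $C\|\hat{\boldsymbol\beta}-\boldsymbol\beta^0\|_1$, uniformly over $t\le\tau$, using bounded covariates and a risk set bounded away from zero. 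The sparsity rates then make both products $o_p(1)$.

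A subtlety is that $\psi_i(\boldsymbol\beta^0)$ involves the empirical $\bar{\mathbf X}$, so the $\varphi_{ij}$ are not exactly i.i.d. I would handle this by replacing $\bar{\mathbf X}$ with its population limit $\mathbf e(t)$, which is uniformly close at rate $O_p(\sqrt{\log p/n})$ by the Supplementary regularity conditions. This same identification is implicit in Theorem~\ref{thm:asymptotic_linearity}, where $\sigma_j^2=\Var(\varphi_{1j})$. Finally, $j\in\widehat M$ is random, but since $j$ is fixed and the statement holds for every coordinate, conditioning on selection is not needed.
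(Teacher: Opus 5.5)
Your proof is correct at the level of rigor the paper itself adopts (Lasso and nodewise consistency used as black boxes), but it is organized differently from the paper's.

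\textbf{How the two routes differ.} The paper proceeds in three steps:
\begin{enumerate}
\item It replaces the bootstrap statistic by the oracle multiplier sum $U_{n,j}^*=n^{-1/2}\sum_i\xi_i\varphi_{ij}$. It only asserts that the remainder $r_{n,j}^*=n^{-1/2}\sum_i\xi_i(\hat\varphi_{ij}-\varphi_{ij})$ is $o_p^*(1)$ ``under (A5)--(A8)''.
\item It transfers this through the conditional Slutsky lemma.
\item It applies a multiplier CLT with the fixed weights $\varphi_{ij}$.
\end{enumerate}
The benefit of that modular layout is that the Efron proof reuses the reduction verbatim and only swaps the final CLT for Pauly's Theorem~4.1.

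You instead apply the conditional Lindeberg--Feller theorem directly to the exact statistic $n^{-1/2}\sum_i\xi_i\hat\varphi_{ij}$. No remainder or Slutsky step appears, and all Cox/Lasso input sits in the single bound $n^{-1}\sum_i(\hat\varphi_{ij}-\varphi_{ij})^2\xrightarrow{p}0$. This quantity is exactly $\Var^*(r_{n,j}^*)$. So your H\"older/mean-value estimate is precisely what, via conditional Chebyshev, would justify the step the paper leaves unproved, and your route is the more self-contained one. You also rightly note that the paper's $\varphi_{ij}$ are not i.i.d., because they contain $\bar{\mathbf X}_n$.

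\textbf{Two points to make explicit.}

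First, once $\bar{\mathbf X}_n$ is replaced by $\mathbf e(t)$, the sum $n^{-1/2}\sum_i\int(\mathbf X_i-\mathbf e)\,dN_i$ is \emph{not} asymptotically equivalent to $n^{-1/2}\mathbf U(\boldsymbol\beta^0)$. The martingale version with $dM_i$ is, and the two in general differ by a non-negligible $O_p(1)$ term. So the identification of $\sigma_j^2$ should not be credited to the expansion in Theorem~\ref{thm:asymptotic_linearity}. What makes your variance limit the right one is a second-moment identity. The term $\int(\mathbf X_1-\mathbf e)\,dN_1$ has mean zero. Since $N_1$ jumps at most once, its second moment is $\E\int(\mathbf X_1-\mathbf e)^{\otimes 2}\,dN_1=\Sigma$, the limiting information, exactly as for the martingale version. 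Hence $n^{-1}\sum_i\varphi_{ij}^2\xrightarrow{p}(\Theta_0\Sigma\Theta_0)_{jj}$, which is the variance of the limit of $\sqrt n(\tilde\beta_j-\beta_j^0)$.

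Second, your H\"older bounds use three inputs: $\|\hat\Theta_{j\cdot}-\Theta_{0,j\cdot}\|_1=o_p(1)$, $\|\Theta_{0,j\cdot}\|_1=O(1)$, and $\|\hat{\boldsymbol\beta}-\boldsymbol\beta^0\|_1=o_p(1)$. (A6) gives the first only if $\|\cdot\|_\infty$ denotes the maximal absolute row sum, or if $p$ is fixed. The other two are standard but not literally among (A1)--(A8). They are the same implicit inputs behind the paper's asserted remainder bound.

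Incidentally, the mean-value step needs no lower bound on the risk sets. The derivative $\partial\bar{\mathbf X}/\partial\boldsymbol\beta$ is a weighted covariance matrix with entries bounded by $K^2$.
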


Thus, the conditional distribution of the wild bootstrap statistic
consistently approximates the sampling distribution of
$\sqrt n(\tilde\beta_j-\beta_j^0)$.

\noindent\textit{Heuristic argument.}
By Theorem~\ref{thm:asymptotic_linearity},
$\sqrt n(\tilde\beta_j-\beta_j^0)$ is asymptotically equivalent to
the linear term
\[
\frac{1}{\sqrt n}\sum_{i=1}^n\varphi_{ij}.
\]
Conditionally on the observed data, the wild bootstrap statistic is,
up to an asymptotically negligible remainder, represented by the
corresponding multiplier sum
\[
\frac{1}{\sqrt n}\sum_{i=1}^n\xi_i\varphi_{ij}.
\]
Under (A7)--(A8), a conditional multiplier central limit theorem yields
\[
\frac{1}{\sqrt n}\sum_{i=1}^n\xi_i\varphi_{ij}
\Rightarrow^*
\mathcal N(0,\sigma_j^2)
\quad\text{in probability},
\]
while the ordinary central limit theorem gives the same limiting
distribution for
$n^{-1/2}\sum_{i=1}^n\varphi_{ij}$.
Conditional Slutsky arguments then transfer this convergence from the
linear terms to the full statistics, yielding the stated
Kolmogorov-distance result.
The rigorous argument is given in the Supplementary Material.
For wild bootstrap arguments in multiplicative intensity models, see
\cite{bluhmki2019wild}; general multiplier central limit results are
given, for example, in
\cite{vanderVaartWellner1996,kosorok2008empirical}.

\begin{theorem}[Wild bootstrap variance consistency]
\label{thm:wb-var}
Assume (A1)--(A8) in the Supplementary Material and $B\to\infty$.
Then, for each fixed $j\in\widehat M$,
\[
\hat\sigma_{j,\mathrm{WB}}^2
\xrightarrow{p}
\sigma_j^2/n.
\]
\end{theorem}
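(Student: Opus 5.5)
The plan is to split $\hat\sigma_{j,\mathrm{WB}}^2$ into a Monte Carlo part and a conditional-variance part. Write $\Delta_j^{*(b)}=n^{-1}\sum_{i=1}^n\xi_i^{(b)}\hat\varphi_{ij}$ with $\hat\varphi_{ij}=\mathbf e_j^\top\hat\Theta\boldsymbol\psi_i(\hat{\boldsymbol\beta})$. Since $\tilde\beta_j$ is fixed given the data, $\hat\sigma_{j,\mathrm{WB}}^2$ is the empirical variance of the i.i.d. (given the data) replicates $\Delta_j^{*(1)},\ldots,\Delta_j^{*(B)}$. The two relevant targets are
\[
V_n^*=\Var^*(\Delta_j^{*(1)})=\frac{1}{n^2}\sum_{i=1}^n\hat\varphi_{ij}^2
\qquad\text{and}\qquad
\frac{\sigma_j^2}{n}.
\]
The displayed equality uses only $\mathbb E\xi=0$ and $\mathbb E\xi^2=1$ from (A8). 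Because both sides shrink at rate $n^{-1}$, I interpret the statement on the natural scale, $n\hat\sigma_{j,\mathrm{WB}}^2-\sigma_j^2\xrightarrow{p}0$. The proof then reduces to two claims:
\[
\text{(i) } n\hat\sigma_{j,\mathrm{WB}}^2-nV_n^*\xrightarrow{p}0,
\qquad
\text{(ii) } nV_n^*\xrightarrow{p}\sigma_j^2 .
\]

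\emph{Step (i).} Conditionally on the data, $n\hat\sigma^2_{j,\mathrm{WB}}$ is the unbiased sample variance of $B$ i.i.d. copies of $Z^*=\sqrt n\,\Delta_j^{*(1)}$. Its conditional variance is bounded by $C\,\mathbb E^*(Z^{*4})/B$. A direct fourth-moment computation for weighted sums with independent multipliers gives
\[
\mathbb E^*(Z^{*4})\le 3\Bigl(n^{-1}\sum_{i=1}^n\hat\varphi_{ij}^2\Bigr)^2+\mathbb E\xi^4\,n^{-2}\sum_{i=1}^n\hat\varphi_{ij}^4 ,
\]
where $\mathbb E\xi^4<\infty$ by (A8). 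By step (ii) and the moment conditions in (A7), this bound is $O_p(1)$. Conditional Chebyshev, followed by the dominated-convergence argument that turns conditional convergence into unconditional convergence in probability, gives (i) as $B\to\infty$.

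\emph{Step (ii).} This is the main obstacle, because it requires consistency of the plug-in influence components. First, $n^{-1}\sum_{i=1}^n\varphi_{ij}^2\to\sigma_j^2$ by the law of large numbers. Here $\varphi_{ij}$ is mean zero because the score is a martingale at $\boldsymbol\beta^0$. Strictly, the $\boldsymbol\psi_i(\boldsymbol\beta^0)$ involve the empirical $\bar{\mathbf X}$, so I would replace $\bar{\mathbf X}$ by its population limit $\mathbf e(t)$ and control the difference uniformly in $t$ using the regularity conditions of (A2)--(A4). Second, I must show
\[
n^{-1}\sum_{i=1}^n(\hat\varphi_{ij}-\varphi_{ij})^2\xrightarrow{p}0 .
\]
Cauchy--Schwarz then transfers the limit, since $\bigl|\sqrt{a}-\sqrt{b}\bigr|\le\sqrt{|a-b|}$ applied to the root-mean-square norms suffices.

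To bound this difference, I split
\[
\hat\varphi_{ij}-\varphi_{ij}=(\hat\Theta-\Theta_0)_{j\cdot}\boldsymbol\psi_i(\hat{\boldsymbol\beta})+\Theta_{0,j\cdot}\{\boldsymbol\psi_i(\hat{\boldsymbol\beta})-\boldsymbol\psi_i(\boldsymbol\beta^0)\}.
\]
For the first term, use Hölder's inequality with $\|\hat\Theta_{j\cdot}-\Theta_{0,j\cdot}\|_1$ from the nodewise-Lasso rate in (A6), times $\max_i\|\boldsymbol\psi_i\|_\infty$. The latter is bounded by $2\max_{i,k}|X_i^{(k)}|$, which (A1) bounds. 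For the second term, expand $\bar{\mathbf X}(t,\hat{\boldsymbol\beta})-\bar{\mathbf X}(t,\boldsymbol\beta^0)$ by the mean value theorem. It is controlled by $\|\hat{\boldsymbol\beta}-\boldsymbol\beta^0\|_1$ times bounded covariates, using the Lasso $\ell_1$-rate from (A5). I then need the product of rates, for example $s\gamma_n\max|X|\to0$, which is exactly what the debiasing remainder conditions used for Theorem~\ref{thm:asymptotic_linearity} already guarantee. I expect the bookkeeping here, uniformity in $t$ and the growing $p$, to be the delicate part.
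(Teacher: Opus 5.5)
Your argument is correct, modulo regularity conditions the paper also leaves implicit. It shares the paper's two-stage skeleton: the conditional variance converges, and the Monte Carlo sample variance tracks it as $B\to\infty$. The key step, however, is done differently.

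The paper obtains $\Var^*\{\sqrt n(\tilde\beta_j^{\mathrm{WB}}-\tilde\beta_j)\}=n^{-1}\sum_i\varphi_{ij}^2+o_p(1)$ from the bootstrap linearization with an $o_p^*(1)$ remainder plus the law of large numbers. It then cites the literature for the $B\to\infty$ step. You instead use that the bootstrap statistic is exactly linear in the multipliers, so its conditional variance is exactly $n^{-1}\sum_i\hat\varphi_{ij}^2$. You then control $n^{-1}\sum_i(\hat\varphi_{ij}-\varphi_{ij})^2\xrightarrow{p}0$.

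Your route is the more rigorous one. A remainder that is only $o_p^*(1)$ does not by itself control conditional second moments. Your quantity, by contrast, is exactly the conditional variance of the remainder $n^{-1/2}\sum_i\xi_i(\hat\varphi_{ij}-\varphi_{ij})$, so it also supplies the $o_p^*(1)$ claim used in Theorem~\ref{app:thm:wb-cons}. Your Chebyshev step makes the joint limit in $n$ and $B$ explicit and shows where $\E(\xi^4)<\infty$ is used. Since $n^{-2}\sum_i\hat\varphi_{ij}^4\le(n^{-1}\sum_i\hat\varphi_{ij}^2)^2$, (A7) is not needed there. Your rescaled reading $n\hat\sigma_{j,\mathrm{WB}}^2\xrightarrow{p}\sigma_j^2$ is the meaningful one, and it is what Theorem~\ref{app:thm:stud-wb} actually uses.

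The price is bookkeeping that (A1)--(A8) do not literally deliver, so state it explicitly:
\begin{itemize}
\item $\|\hat\Theta_{j\cdot}-\Theta_{0,j\cdot}\|_1=o_p(1)$ follows from (A6) only if $\|\cdot\|_\infty$ is the maximum-row-sum norm.
\item The second term needs $\|\Theta_{0,j\cdot}\|_1\|\hat{\boldsymbol\beta}-\boldsymbol\beta^0\|_1\xrightarrow{p}0$. This requires a Cox Lasso $\ell_1$ rate (compatibility, $\gamma_n\asymp\sqrt{\log p/n}$) and an $\ell_1$ bound on the row of $\Theta_0$ beyond the eigenvalue bounds in (A4).
\item Replacing $\bar{\mathbf X}_n$ by its limit needs a uniform-in-$t$ argument.
\end{itemize}

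Two minor slips. The covariate bound is (A2), not (A1). Also, $\int\{\mathbf X_i-e(t)\}\,dN_i(t)$ is mean zero by the compensator identity, not because it is a martingale integral (that would be an integral against $dM_i$).
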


Thus, the empirical bootstrap variance consistently estimates the
asymptotic variance of $\tilde\beta_j$.

Define
\[
T_j
=
\frac{\sqrt n(\tilde\beta_j-\beta_j^0)}{\sigma_j},
\qquad
T_j^{\mathrm{WB}}
=
\frac{
\sqrt n(\tilde\beta_j^{\mathrm{WB}}-\tilde\beta_j)
}{
\sqrt n\,\hat\sigma_{j,\mathrm{WB}}
}
=
\frac{
\tilde\beta_j^{\mathrm{WB}}-\tilde\beta_j
}{
\hat\sigma_{j,\mathrm{WB}}
}.
\]

\begingroup
\begin{theorem}[Validity of the standardized wild bootstrap statistic]
\label{thm:stud-wb}
Assume (A1)--(A8) in the Supplementary Material and $B\to\infty$.
For each fixed $j\in\widehat M$,
\[
\sup_{t\in\mathbb R}
\left|
P^*\!\left(T_j^{\mathrm{WB}}\le t\right)
-
P\!\left(T_j\le t\right)
\right|
\xrightarrow{p}0.
\]
\end{theorem}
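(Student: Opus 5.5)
We combine Theorem~\ref{thm:wb-cons} and Theorem~\ref{thm:wb-var} through a conditional Slutsky argument, and then use continuity of the normal limit to upgrade weak convergence to uniform (Kolmogorov) convergence.

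First, Theorem~\ref{thm:asymptotic_linearity} gives $T_j\Rightarrow\mathcal N(0,1)$, since $\sigma_j^2=\Var(\varphi_{1j})>0$; positivity should be part of (A7), and we will check this there. Because $\Phi$ is continuous, Pólya's theorem yields $\sup_t|P(T_j\le t)-\Phi(t)|\to0$. It therefore suffices to show
\[
\sup_{t}\bigl|P^*(T_j^{\mathrm{WB}}\le t)-\Phi(t)\bigr|\xrightarrow{p}0 .
\]

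Second, write
\[
T_j^{\mathrm{WB}}=\frac{\sqrt n(\tilde\beta_j^{\mathrm{WB}}-\tilde\beta_j)}{\sigma_j}\cdot\frac{\sigma_j}{\sqrt n\,\hat\sigma_{j,\mathrm{WB}}}.
\]
By Theorem~\ref{thm:wb-cons} together with the Pólya bound above, the conditional law of the first factor converges in Kolmogorov distance, in probability, to $\mathcal N(0,1)$. By Theorem~\ref{thm:wb-var}, $n\hat\sigma_{j,\mathrm{WB}}^2\xrightarrow{p}\sigma_j^2$, so the second factor $R_n:=\sigma_j/(\sqrt n\,\hat\sigma_{j,\mathrm{WB}})\xrightarrow{p}1$.

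The delicate point is that $\hat\sigma_{j,\mathrm{WB}}$ is computed from the same $B$ replicates, so it is random under $P^*$ as well. We handle this in one of two ways.
\begin{enumerate}
\item Replace $\hat\sigma_{j,\mathrm{WB}}$ by its idealized version as $B\to\infty$, namely the conditional standard deviation $s_n=\{n^{-2}\sum_i\hat\Theta_j^\top\boldsymbol\psi_i(\hat{\boldsymbol\beta})^{\otimes2}\hat\Theta_j\}^{1/2}$. This is data-measurable and hence a constant under $P^*$. Theorem~\ref{thm:wb-var}, read through its proof, amounts to $ns_n^2\xrightarrow{p}\sigma_j^2$ plus a Monte Carlo error that vanishes as $B\to\infty$.
\item Treat $R_n$ as $P^*$-random and show $P^*(|R_n-1|>\varepsilon)\xrightarrow{p}0$.
\end{enumerate}

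In either case, fix $\varepsilon>0$ and $t$. On the event $|R_n-1|\le\varepsilon$, monotonicity gives
\[
P^*\bigl(Z_n^*\le t/(1+\varepsilon)\bigr)-P^*(|R_n-1|>\varepsilon)\le P^*(T_j^{\mathrm{WB}}\le t)\le P^*\bigl(Z_n^*\le t/(1-\varepsilon)\bigr)+P^*(|R_n-1|>\varepsilon),
\]
where $Z_n^*$ is the first factor; we handle $t<0$ with the inequalities reversed accordingly. Taking the supremum over $t$, the Kolmogorov consistency of $Z_n^*$ gives, in probability,
\[
\sup_t\bigl|P^*(T_j^{\mathrm{WB}}\le t)-\Phi(t)\bigr|\le\sup_t\max_\pm\bigl|\Phi(t/(1\pm\varepsilon))-\Phi(t)\bigr|+o_p(1).
\]
The first term is $O(\varepsilon)$ uniformly in $t$, because $\sup_t|t\phi(t)|<\infty$. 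Letting $\varepsilon\downarrow0$ completes the argument, and the triangle inequality with the first step gives the theorem.

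The main obstacle is this conditional Slutsky step with a $P^*$-random scale. It requires formalizing convergence in probability conditionally on the data, and doing so jointly in $n$ and $B\to\infty$. We would try option (1) first. Its key ingredient is the Monte Carlo bound $E^*|\hat\sigma_{j,\mathrm{WB}}^2-s_n^2|^2=O(s_n^4/B)$, which holds by the finite fourth moment of $\xi_i$ in (A8). This makes the $B$-dependence explicit, after which the scale is effectively deterministic given the data.
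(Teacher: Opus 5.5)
Your proposal is correct and follows the paper's own route. You combine Theorem~\ref{thm:wb-cons} and Theorem~\ref{thm:wb-var} through a conditional Slutsky step; the paper simply invokes its additive Lemma~\ref{app:lem:cond-slutsky}, and your multiplicative sandwich with a $P^*$-random scale is a more explicit version of that same step. Your option~2 is in fact immediate: $\mathbb{E}\{P^*(|R_n-1|>\varepsilon)\}=P(|R_n-1|>\varepsilon)\to0$ by Theorem~\ref{thm:wb-var}, so Markov's inequality already gives $P^*(|R_n-1|>\varepsilon)\xrightarrow{p}0$ without the Monte Carlo bound.
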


Together with the asymptotic normality of $T_j$, this establishes validity
of the standardized bootstrap statistic. Because the common scale
$\hat\sigma_{j,\mathrm{WB}}$ cancels under quantile inversion, the corresponding
confidence interval is the basic bootstrap interval defined in Section~\ref{sec:ci}.

Analogous consistency, variance-consistency, and validity results for the
standardized bootstrap statistic hold for Efron's bootstrap under assumptions (A1)--(A7) and
(A9). The Cox model linearization is the same as above, while the
independent multiplier argument is replaced by an exchangeably weighted
bootstrap argument for the centered multinomial weights. The detailed
proofs are given in Section~S2 of the Supplementary Material, where the
conditional approximation is established using the weighted-resampling
result of Pauly~\cite{pauly2011weighted}.\endgroup

\section{Simulation study}
\label{sec:simstudy}

This section investigates the finite-sample performance of bootstrap inference
for the debiased Cox estimator after Cox Lasso variable selection. We compare
the wild bootstrap and Efron's bootstrap with non-bootstrap inference in terms
of coverage accuracy, interval length, type~I error, and computational cost
across a range of data-generating settings.

\subsection{Design}
\label{sec:design}

\begingroup
The simulation design follows the ADEMP framework \cite{morris2019using},
which separates aims, data-generating mechanisms, estimands, methods, and
performance measures.

\subsubsection{Objectives}

The primary objective of the simulation study was to assess whether the
bootstrap-based intervals improve finite-sample coverage of the debiased
full-model coefficients relative to first-order Wald inference. Secondary
performance measures included interval width, type~I error for null
coefficients, Cox Lasso selection frequency in the original simulated data set,
and computational cost.

\subsubsection{Data-generating mechanisms}

The primary analyses focused on small and moderate sample sizes, using
\[
n \in \{30,40,60,75,80,100,125,150,175,200,250\}
\]
and \(p \in \{10,20\}\). The grid includes the intermediate values
\(n=75\) and \(n=80\), with \(n=80\) also serving as the common setting
for the tuning-rule sensitivity analyses reported in Section~S3.1.3 of
the Supplementary Material. Additional results for selected individual
simulation settings under the final tuning choices are reported in
Section~S3.1.1.

Covariates were generated from a multivariate normal distribution with mean
zero and autoregressive correlation structure
\[
\Sigma_{ij}=\rho^{|i-j|},
\qquad
\rho\in\{0,0.1\}.
\]
After generation, covariates were standardized within each simulated data set.

\begingroup
The true regression vector followed one of four coefficient patterns:
\textit{allones}, \textit{highcontrast}, \textit{realistic}, and
\textit{sparse}. For the \textit{allones} pattern, all coefficients were
set to one. For \textit{highcontrast}, the first four coefficients were
$(1,-1,1,-1)$; for \textit{sparse}, they were $(1,0,1,0)$; and for
\textit{realistic}, they were $(0.8,0.7,0.5,0.8)$.
Except for the \textit{allones} pattern, all remaining coefficients were
set to zero.

In the manuscript and the supplementary sensitivity analyses
in Section~S3 of the Supplementary Material, we focus on the realistic
pattern.\endgroup

Event times were generated from a Cox proportional hazards model with
linear predictor \(\mathbf{X}^\top\boldsymbol{\beta}^0\). Conditional on
\(\mathbf{X}\),
\[
\Pr(T>t\mid \mathbf{X})
=
\exp\left\{
-H_0(t)\exp(\mathbf{X}^\top\boldsymbol{\beta}^0)
\right\}.
\]
Using inverse-transform sampling, this yields
\[
T
=
H_0^{-1}\left(
\frac{-\log U}{\exp(\mathbf{X}^\top\boldsymbol{\beta}^0)}
\right),
\qquad
U\sim\mathrm{Unif}(0,1).
\]

We considered an exponential baseline hazard with \(H_0(t)=t\) and a
Weibull baseline hazard with \(H_0(t)=s t^k\), where \(s=1\) and \(k=2\).

Independent right censoring times were generated from an exponential
distribution. The censoring rate was calibrated separately for each scenario
to achieve the target censoring proportion
\[
\pi_C\in\{0,0.10,0.30\},
\]
following standard simulation practice for censored survival data
\cite{ramos2020sampling}.

The observed time and event indicator were
\[
\tilde T = \min(T,C), \qquad
\delta = \mathbbm{1}(T \le C).
\]
Small deterministic perturbations were added to duplicated event and censoring
times to avoid numerical problems due to ties. The perturbations were of order
\(10^{-8}\) for event times and \(5\cdot10^{-9}\) for censoring times, scaled
by the range of the respective time variable.

\subsubsection{Variable selection and tuning}

\begingroup
For each simulated data set, variable selection was performed using either the
standard Cox Lasso or an adaptive Cox Lasso. The same four tuning rules were used for the standard and adaptive Cox Lasso analyses.

For the cross-validation-based rules, fivefold cross-validation was used.
The value \(\lambda_{\min}\) denotes the penalty parameter minimizing the
cross-validated Cox partial-likelihood deviance, whereas
\(\lambda_{1se}\) denotes the largest penalty parameter within one standard
error of the minimum. For the information-criterion-based rules, the penalty
parameter was chosen by minimizing AIC or BIC along the Cox Lasso solution
path.

For the primary analyses, variable selection was performed using the standard
Cox Lasso. As a sensitivity analysis, we additionally considered an adaptive
Cox Lasso to assess whether the finite-sample behaviour of the inferential
procedures was sensitive to the form of penalization.

Following the adaptive Lasso principle
\cite{zou2006adaptive,zhang2007adaptive}, an initial standard Cox Lasso fit was
used to construct coefficient-specific penalty weights
\[
w_j=(|\hat\beta_j|+\varepsilon)^{-\kappa},
\]
with \(\kappa=1\). The weighted Cox Lasso was then fitted using the same
data-driven tuning rule as in the corresponding standard Cox Lasso analysis.

For both the primary standard Cox Lasso analyses and the adaptive sensitivity
analyses, the regularization parameter was selected using
\(\lambda_{\min}\), \(\lambda_{1se}\), AIC, or BIC. Thus, the adaptive
analyses vary the penalty specification while retaining the same set of
data-driven tuning strategies.

The resulting eight combinations,
given by four tuning rules under standard and adaptive penalization, are
therefore interpreted as implementable finite-sample analysis strategies rather
than oracle choices. The asymptotic theory does not establish validity of any
one data-driven tuning rule in full generality, but requires the resulting
penalized estimator to satisfy the sparsity and debiasing remainder conditions
stated in Section~\ref{sec:asymptotics}.
\endgroup

\begingroup
Unless stated otherwise, the simulation summaries were based on \(M=1000\)
Monte Carlo replications, with \(B=900\) bootstrap replications for each
bootstrap procedure. 
\endgroup

\begingroup
\subsection{Estimands and performance measures}
\label{sec:estimand}

\begingroup
Let \(\widehat M_{\mathrm{orig}}\) denote the model selected by the Cox Lasso in
the original simulated data set. For inference based on the debiased Cox
estimator, including the corresponding Wald, wild bootstrap, and Efron's
bootstrap intervals, the formal target is the full-model coefficient
\(\beta_j^0\). Cox Lasso selection determines which coefficients would be
reported in an application, but it does not redefine this target.

Empirical coverage is calculated coefficient-wise over the Monte Carlo
replications for which the corresponding interval is available and is not
conditioned on the event
\[
j\in\widehat M_{\mathrm{orig}}.
\]
In the simulation study, \(\beta_j^0\) is known and serves as the reference
value for coverage.

Selection behaviour is evaluated separately from coverage. We define
\[
N_{\mathrm{sel},j}
=
\sum_{m=1}^{M}
\mathbbm{1}\{j\in\widehat M_{\mathrm{orig}}^{(m)}\},
\qquad
\widehat p_{\mathrm{sel},j}
=
\frac{N_{\mathrm{sel},j}}{M},
\]
where \(\widehat p_{\mathrm{sel},j}\) denotes the empirical Cox Lasso selection
frequency of coefficient \(j\) across the original simulated data sets.
Selection frequencies are considered separately from interval availability
for the debiased procedures.

For active coefficients, we additionally summarize mean interval width. For
inactive coefficients, type~I error is used as the corresponding
null-coefficient diagnostic.
\endgroup
\subsection{Methods compared}

\begingroup
The primary numerical comparison focuses on three inferential procedures:
\begin{enumerate}
\item Wald inference based on the debiased Cox estimator;
\item the basic interval based on the wild bootstrap and the debiased Cox
estimator;
\item the basic interval based on Efron's bootstrap and the debiased Cox
estimator.
\end{enumerate}
The oracle Cox estimator fitted on the true active set is included only as an
unattainable benchmark in selected displays. The percentile and bootstrap-normal
constructions defined in Section~\ref{sec:ci} are not part of the primary
numerical comparison reported here.

The main displays emphasize representative settings with \(\lambda_{\min}\)
and AIC for compactness. The full simulation design additionally includes
\(\lambda_{1se}\) and BIC and evaluates all four tuning rules under the standard
Cox Lasso and, as a sensitivity analysis, under the adaptive Cox Lasso. The
Supplementary Material reports the corresponding selection, coverage, and interval
width results across the four tuning rules and both penalty specifications,
evaluated under the same simulation setting. The
bootstrap procedures are applied after the debiasing step and therefore refine
the finite-sample distributional approximation of the debiased estimator rather
than bootstrapping the original non-smooth Cox Lasso estimator directly.
\endgroup
\endgroup

\begingroup
\subsection{Results}
\label{sec:results}

\begingroup
We report the finite-sample behaviour coefficient by coefficient rather than
averaging coverage over all variables. This distinction is important because
active and inactive coefficients can exhibit different finite-sample behaviour.
The main simulation summaries focus on the realistic coefficient pattern.
Active coefficients are denoted by \(\beta_1,\ldots,\beta_4\); as representative
inactive coefficients, we additionally report \(\beta_5\) and \(\beta_6\).

For the coefficient-specific coverage summaries, we use two named reference
settings throughout the manuscript and Supplementary Material. \textbf{Reference
setting R1 (standard-\(\lambda_{\min}\))} uses a Weibull baseline hazard,
\(p=10\), \(n=100\), 10\% target censoring, independent covariates
(\(\rho=0\)), and the standard Cox Lasso with \(\lambda_{\min}\) tuning.
\textbf{Reference setting R2 (adaptive-AIC)} uses the same data-generating
configuration but the adaptive Cox Lasso with AIC tuning.

Table~\ref{tab:coef-coverage-min} reports coefficient-specific empirical
coverage for Reference setting~R1. For active coefficients, the debiased Wald
interval is still below the nominal 90\% level in several cases. For example,
for \(\beta_1\), empirical coverage is 0.861 for the debiased Wald interval,
0.923 for the basic interval based on the wild bootstrap, and 0.982 for the
basic interval based on Efron's bootstrap. Thus, the bootstrap intervals improve
coverage for this coefficient, with the interval based on Efron's bootstrap
being more conservative.
\endgroup

\input{generated_tables_main/table_coverage_min_adaptfalse_p10_n100.tex}

\begingroup
Table~\ref{tab:coef-coverage-aic} reports coefficient-specific empirical
coverage for Reference setting~R2. Reference settings~R1 and~R2 are not intended
as a direct comparison of tuning rules because both the tuning rule and the
penalty specification differ. Instead, they illustrate coefficient-specific
finite-sample behaviour under two practically relevant analysis choices.
Overall, the sensitivity analyses with the adaptive Cox Lasso showed
qualitatively similar coverage patterns to the primary standard Cox Lasso
analyses.
\endgroup

\input{generated_tables_main/table_coverage_aic_adapttrue_p10_n100.tex}

\begingroup
Tables~\ref{tab:coef-coverage-min} and~\ref{tab:coef-coverage-aic} therefore
report the coefficient-specific results for Reference settings~R1 and~R2,
respectively. To complement these raw coverage results, we additionally
summarize deviations from the nominal coverage level using four named focus
settings, denoted Settings~A--D and defined below.
Figure~\ref{fig:basic-representative-error-settings} reports the results for
Settings~A--D, while Figure~\ref{fig:basic-average-error-by-n} summarizes
average absolute coverage error across sample sizes. Numerical results for
Settings~A--D are reported in Table~\ref{tab:basic-representative-settings},
and the numerical summary of average absolute coverage error by sample size is
reported in Table~\ref{tab:basic-average-error-by-n}.

More detailed simulation results are provided in Section~S3.1 of the online
supplement. Section~S3.1.1 provides additional results for individual
simulation settings and the complete summary of average absolute coverage
error for active coefficients. Section~S3.1.2 gives coefficient-specific
coverage results for Reference settings~R1 and~R2. Section~S3.1.3 compares
all eight combinations of tuning rule and penalty specification under the
same simulation setting, including Cox Lasso selection frequencies from the
original simulated data sets, mean coverage for active coefficients, and
interval width. Section~S3.1.4 provides coefficient-specific interval-width
results for Reference settings~R1 and~R2.
\endgroup

\begingroup
Because raw coverage alone can make conservative overcoverage appear
preferable, we summarize performance for active coefficients by the absolute
coverage error
\[
|\widehat{\mathrm{coverage}}-0.90|.
\]
This treats undercoverage and overcoverage as deviations from the nominal
level. Type~I error is reported separately for null coefficients.

For the graphical and numerical comparisons in
Figure~\ref{fig:basic-representative-error-settings} and
Table~\ref{tab:basic-representative-settings}, we define four focus settings
that are used consistently throughout the manuscript and Supplementary Material:
\begin{itemize}
\item \textbf{Setting A (AIC-correlated-censored active):}
      \(\beta_4\), \(p=20\), AIC tuning, target censoring
      \(\pi_C=0.10\), and \(\rho=0.1\);
\item \textbf{Setting B (AIC-independent-uncensored active):}
      \(\beta_2\), \(p=20\), AIC tuning, no censoring
      (\(\pi_C=0\)), and \(\rho=0\);
\item \textbf{Setting C (\(\lambda_{\min}\)-independent-censored active):}
      \(\beta_1\), \(p=10\), \(\lambda_{\min}\) tuning, target censoring
      \(\pi_C=0.10\), and \(\rho=0\);
\item \textbf{Setting D (AIC-correlated-censored null):}
      \(\beta_{10}\), \(p=20\), AIC tuning, target censoring
      \(\pi_C=0.10\), and \(\rho=0.1\).
\end{itemize}
The realistic coefficient pattern is used throughout Settings~A--D.
For Settings~A--C, absolute coverage error is evaluated for the indicated
active coefficient; Setting~D evaluates type~I error for the null coefficient
\(\beta_{10}\).
\endgroup

\begin{figure}[t]
  \centering
  \includegraphics[width=0.98\linewidth]
  {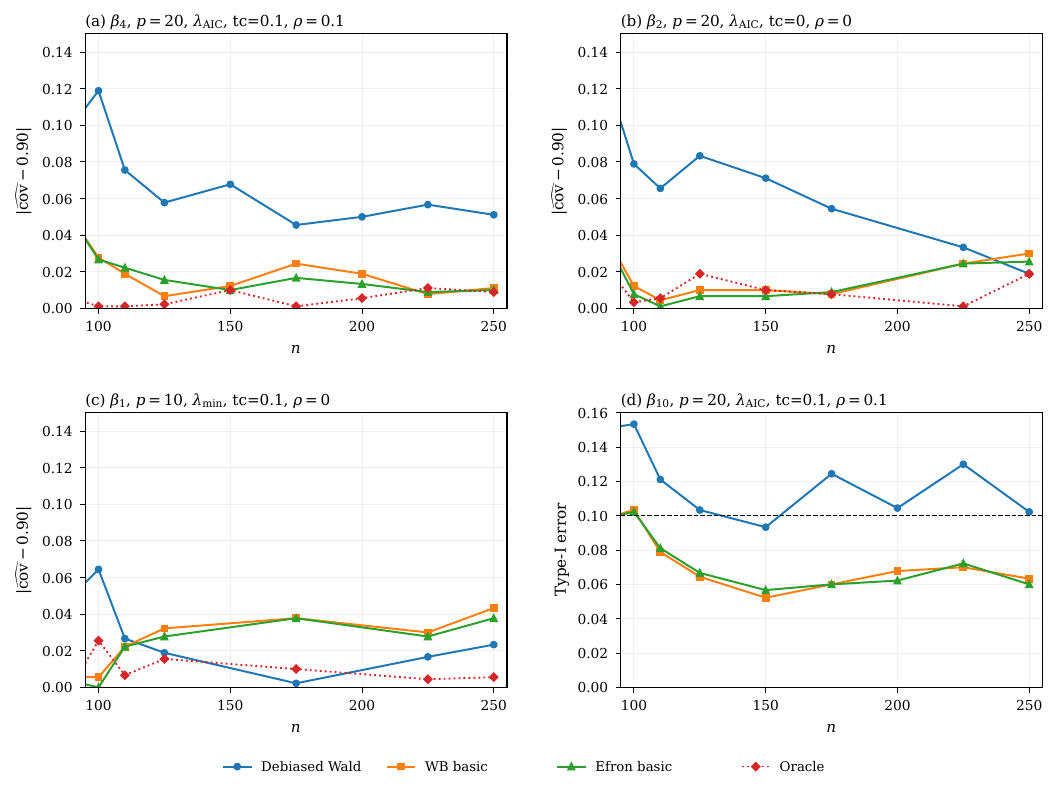}
  \caption{Finite-sample performance in focus Settings~A--D.
  Settings~A--C report absolute coverage error relative to the nominal 90\%
  level for the specified active coefficients, with smaller values indicating
  closer agreement with the nominal level. Setting~D reports type~I error for
  the null coefficient \(\beta_{10}\). The debiased Wald interval is compared
  with the basic intervals based on the wild bootstrap and Efron's bootstrap.}
  \label{fig:basic-representative-error-settings}
\end{figure}

\begingroup
Figure~\ref{fig:basic-representative-error-settings} shows that the bootstrap
intervals generally reduce the deviation from nominal coverage in
Settings~A--C relative to the debiased Wald interval, although the magnitude of
the correction differs across the three settings. In Setting~D, both bootstrap
procedures reduce the elevated type~I error observed for the debiased Wald
interval at smaller sample sizes.
\endgroup

\begin{figure}[t]
  \centering
  \includegraphics[width=0.76\linewidth]
  {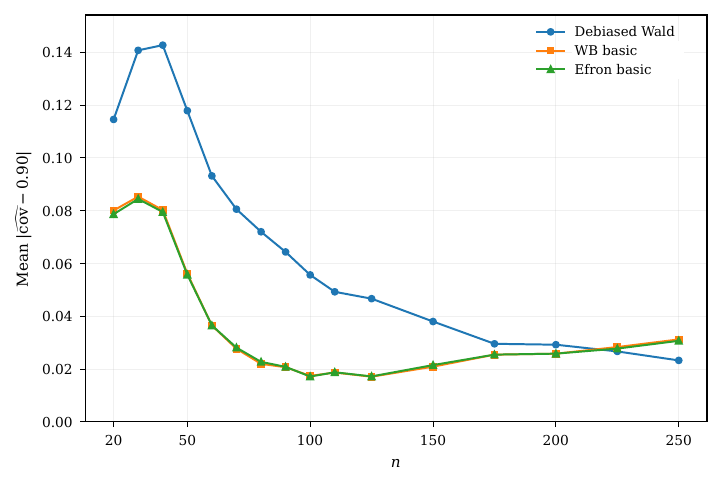}
  \caption{Average absolute coverage error for active coefficients by sample
  size across the settings included in the main comparison using
  \(\lambda_{\min}\) or AIC. Each combination of coefficient and simulation
  setting contributes equally to the average. The basic bootstrap intervals
  have smaller average error than the debiased Wald intervals, with the largest
  practical reduction at smaller and moderate sample sizes.}
  \label{fig:basic-average-error-by-n}
\end{figure}

\begingroup
Figure~\ref{fig:basic-average-error-by-n} summarizes average absolute coverage
error across the settings included in the main comparison by sample size.
The largest reductions relative to the debiased Wald interval occur at smaller
and moderate sample sizes, whereas the differences between the methods decrease
as the sample size increases.
\endgroup

\begin{table}[t]
\centering
\caption{Numerical results for focus Settings~A--D at \(n=100\) and
\(n=250\). For Settings~A--C, the entries report empirical coverage and the
reduction in absolute coverage error relative to the nominal level of 0.90.
For Setting~D, the reported metric is type~I error and the corresponding
reduction is calculated relative to 0.10. ``WB red.'' and ``Efron red.''
denote the reduction relative to the debiased Wald interval; positive values
indicate improvement. Settings~A--D correspond directly to panels
(a)--(d) of Figure~\ref{fig:basic-representative-error-settings}.}
\label{tab:basic-representative-settings}
\scriptsize
\input{basic_interval_outputs_v5/table_representative_settings_v5_latex.tex}
\end{table}

\begingroup
Table~\ref{tab:basic-representative-settings} provides the numerical results
for Settings~A--D displayed in
Figure~\ref{fig:basic-representative-error-settings}. At \(n=100\), the
bootstrap intervals reduce absolute coverage error in each of the three active
Settings~A--C and reduce type~I error in the null Setting~D. At \(n=250\), the
differences are smaller and the bootstrap intervals can become conservative,
illustrating that the finite-sample benefit depends on both sample size and
setting.
\endgroup

\begin{table}[t]
\centering
\caption{Average absolute coverage error for active coefficients by sample
size across the settings included in the main comparison using
\(\lambda_{\min}\) or AIC. To keep the manuscript compact, only the three
interval-error columns are shown.}
\label{tab:basic-average-error-by-n}
\small
\input{basic_interval_outputs_v5/table_average_absolute_error_by_n_v5_latex.tex}
\end{table}

\begingroup
Table~\ref{tab:basic-average-error-by-n} provides the numerical values for the
average absolute coverage errors across the settings included in the main
comparison displayed in Figure~\ref{fig:basic-average-error-by-n}.

The results for interval width illustrate the expected trade-off between
coverage and precision. The basic interval based on the wild bootstrap
provides a moderate correction relative to the debiased Wald interval.
The basic interval based on Efron's bootstrap tends to be more conservative
and can therefore achieve larger coverage gains at the cost of wider
intervals.
\endgroup

\subsubsection*{Computational cost}

\begingroup
The runtime comparison is intended as an indicative computational comparison
rather than as a formal benchmarking study. The wild bootstrap adds moderate
overhead, whereas Efron's bootstrap is substantially more expensive. The
runtime experiment used few repetitions and a small bootstrap size and should
therefore be interpreted with corresponding caution.
Figure~\ref{fig:runtimes} reports the runtime comparison for the investigated
procedures.
\endgroup

\begin{figure}[h]
    \centering
    \includegraphics[width=.8\linewidth]
    {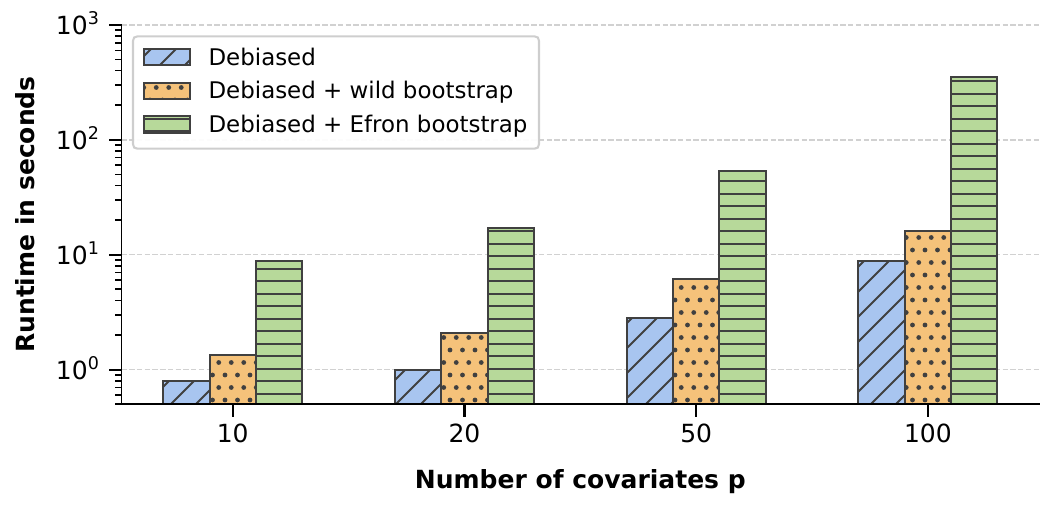}
    \caption{Runtime comparison.}
    \label{fig:runtimes}
\end{figure}

\begingroup
Overall, Tables~\ref{tab:coef-coverage-min} and
\ref{tab:coef-coverage-aic}, which report Reference settings~R1 and~R2,
together with Figure~\ref{fig:basic-representative-error-settings} and
Table~\ref{tab:basic-representative-settings}, which report focus
Settings~A--D, indicate that bootstrap-based intervals can improve finite-sample
coverage for active coefficients after Cox Lasso selection.
Figure~\ref{fig:basic-average-error-by-n} and
Table~\ref{tab:basic-average-error-by-n} show the corresponding pattern in
average absolute coverage error across sample sizes for the settings included
in the main comparison. The wild bootstrap provides a more moderate correction
at lower computational cost, whereas intervals based on Efron's bootstrap tend
to be more conservative. Figure~\ref{fig:runtimes} additionally shows the
substantially higher computational cost of Efron's bootstrap.
\endgroup

\endgroup
\endgroup
\section{Real-data example}
\label{sec:example}

\begingroup
This section illustrates the proposed bootstrap procedures in a real survival
analysis with a moderately large set of baseline covariates using the SUPPORT2
data (Study to Understand Prognoses and Preferences for Outcomes and Risks of
Treatments) \cite{support1995,support2uci}, a multicenter cohort of seriously
ill hospitalized adults in the United States.

We considered the publicly available SUPPORT2 cohort \cite{support2uci}
and analyzed time to death within 180 days after study entry.
Patients who were alive beyond 180 days were treated as right-censored.
The analysis was based on \(n=9{,}105\) patients. Although this sample size is
substantially larger than those considered in the simulation study, the
SUPPORT2 data were chosen as an empirical illustration rather than to reproduce
a particular simulation setting. In particular, the comparatively rich set of
available baseline variables provides a useful setting in which to illustrate
Cox Lasso variable selection followed by the proposed inference procedures.
Baseline information included demographic characteristics, comorbidity
measures, disease-group indicators, physiological variables, and laboratory
measurements. Continuous variables were standardized prior to model fitting,
and categorical variables were represented by dummy variables. For disease
status, we used the \texttt{dzgroup} encoding.
Table~\ref{tab:baseline} summarizes selected baseline characteristics of the
cohort.

\begin{table}[t]
\centering
\caption{Baseline characteristics of the SUPPORT2 cohort. Continuous
characteristics are summarized as mean (SD), except for follow-up, which is
reported as the median. Binary characteristics are reported as percentages,
and disease groups as \(n\) (\%).}
\label{tab:baseline}
\begin{tabular}{l r}
\toprule
\textbf{Characteristic} & \textbf{Summary} \\
\midrule
Age, years & 62.7 (15.6) \\
Female sex, \% & 43.7 \\
Mean blood pressure, mmHg & 84.5 (27.7) \\
Heart rate, beats/min & 97.2 (31.6) \\
Respiratory rate, breaths/min & 23.3 (9.6) \\
Creatinine, mg/dL & 1.8 (1.7) \\
White blood cell count, \(10^3/\mathrm{\mu L}\) & 12.3 (9.2) \\
Sodium, mEq/L & 137.6 (6.0) \\
Number of comorbidities & 1.9 (1.3) \\
\addlinespace
\multicolumn{2}{l}{\textit{Disease group, \(n\) (\%)}} \\
\quad ARF/MOSF with sepsis & 3515 (38.6) \\
\quad CHF & 1387 (15.2) \\
\quad COPD & 967 (10.6) \\
\quad Lung cancer & 908 (10.0) \\
\quad MOSF with malignancy & 712 (7.8) \\
\quad Coma & 596 (6.5) \\
\quad Colon cancer & 512 (5.6) \\
\quad Cirrhosis & 508 (5.6) \\
\addlinespace
Censored observations, \% & 53.2 \\
Median follow-up, days & 180 \\
\bottomrule
\end{tabular}
\end{table}

\begingroup
The analysis aimed to identify variables associated with the hazard of death
within 180 days and to compare uncertainty quantification across the considered
inferential procedures after Cox Lasso variable selection
\cite{tibshirani1997lasso}. Inference was carried out using Wald intervals
based on the debiased Cox estimator \cite{yu2021coxci,kong2021robustcox},
together with intervals based on the wild bootstrap and Efron's bootstrap
\cite{efron1993bootstrap,lin1993,spiekerman1998}.

The Cox Lasso estimator selected 20 covariates.
Figure~\ref{fig:example} displays the estimated hazard ratios for the selected
covariates together with 90\% confidence intervals obtained using the debiased
Wald approximation, the wild bootstrap, and Efron's bootstrap. The bootstrap
procedures use the score-based constructions defined in
Sections~\ref{sec:wild} and~\ref{sec:Efron}; consequently, the original Cox
Lasso fit and the debiasing matrix are kept fixed across bootstrap
replications.

\begin{figure}[htbp]
    \centering
    \includegraphics[width=.88\linewidth]{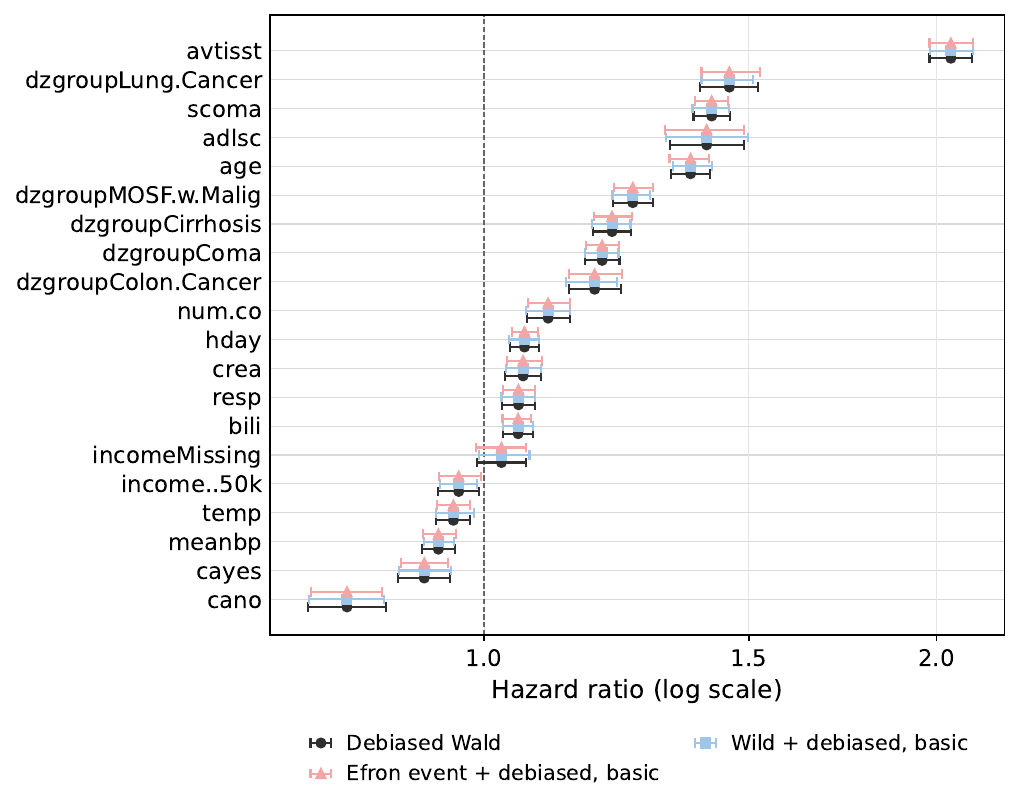}
    \caption{Estimated hazard ratios with 90\% confidence intervals after Cox
    Lasso selection in the SUPPORT2 data. The common hazard-ratio estimates
    are shown with confidence intervals based on the debiased Wald
    approximation, the wild bootstrap, and Efron's bootstrap. }
    \label{fig:example}
\end{figure}

Several selected covariates showed a clear positive association with the hazard
of death. The strongest effects were observed for the severity-related variable
\texttt{avtisst}, for age, and for the coma score \texttt{scoma}.
Among disease-group indicators, elevated hazard ratios were found in particular
for \texttt{dzgroupLung.Cancer}, \texttt{dzgroupComa},
\texttt{dzgroupCirrhosis}, and \texttt{dzgroupMOSF.w.Malig}.
These findings are clinically plausible, as they correspond to severe
underlying disease states and a higher level of physiological instability
within the SUPPORT cohort \cite{support1995,support2uci}.

Several laboratory and physiological variables showed more moderate but still
consistent associations. Higher respiratory rate, bilirubin, creatinine, and
the number of comorbidities were associated with an increased hazard, whereas
higher mean blood pressure and temperature were associated with slightly lower
hazard ratios. The variable \texttt{cano} was estimated below one and appears
protective in the present coding; however, its substantive interpretation
depends on the original SUPPORT variable coding and should therefore be made
with care \cite{support2uci}. Numerical hazard-ratio estimates and 90\%
confidence intervals for all selected covariates are reported in the online
supplement, Table~S12.

Figure~\ref{fig:example} also allows the uncertainty estimates to be compared
across methods. For the strongest signals, the confidence intervals obtained
from the different procedures were broadly similar. For coefficients closer to
the null value, differences in interval width and asymmetry were more noticeable.
In particular, the bootstrap-based procedures sometimes produced wider or more
asymmetric intervals than the debiased Wald interval, reflecting differences
between the finite-sample bootstrap distributions and the first-order
asymptotic approximation
\cite{yu2021coxci,kong2021robustcox,efron1993bootstrap}.

Overall, the SUPPORT2 analysis illustrates how the choice of inferential
procedure can affect uncertainty quantification after Cox Lasso variable
selection. The differences are most visible for more moderate effects, whereas
the conclusions for the strongest associations are comparatively stable across
the considered confidence-interval constructions.
\endgroup
\endgroup

\section{Discussion}
\begingroup
\label{sec:discussion}

\begingroup
In this manuscript, we investigated bootstrap-based inference for the debiased
Cox estimator after Cox Lasso variable selection. We considered both the wild
bootstrap and Efron's bootstrap and compared the resulting confidence intervals
with naive post-selection Wald inference and Wald inference based on the
debiased Cox estimator.

Overall, the simulation study and the real-data example indicate that bootstrap
refinement can improve finite-sample uncertainty quantification, particularly
for moderate effects for which the first-order normal approximation may still be
inaccurate. The SUPPORT2 analysis complements rather than mirrors the simulation
study: the simulations were designed to assess finite-sample behaviour under
small-to-moderate sample sizes, whereas the real-data analysis illustrates the
practical use of the procedures in a substantially larger cohort with a richer
set of candidate predictors. This is consistent with the general motivation for
bootstrap methods as a means of improving finite-sample distributional
approximations
\cite{efron1993bootstrap,davison1997bootstrap,hall1992bootstrap}.

The theoretical results are stated for the Cox Lasso with coordinatewise
\(\ell_1\)-penalization and nodewise \(\ell_1\)-regression in the debiasing
step. This keeps the proof route transparent: once the Cox Lasso estimator is
sufficiently sparse, the nodewise precision estimator is stable, and the
required Cox score linearization is available, the bootstrap argument can be
formulated for the leading score-correction term. In practice, the same
score-based bootstrap construction can be combined with common tuning choices
or closely related weighted or adaptive \(\ell_1\) variants, provided that the
resulting penalized estimator satisfies the sparsity and remainder conditions
required for debiasing \cite{zou2006adaptive,zhang2007adaptive}. Thus, the
assumptions define the setting covered by the present proof rather than
restricting the use of these variants in applied sensitivity analyses.

A further direction is to connect the present bootstrap-based post-selection
inference with focused model-selection ideas. Focused information criteria
choose among candidate models with respect to a pre-specified target parameter
rather than global model fit \cite{claeskens2003fic,claeskens2008model}. Such
target-oriented criteria may be useful when the scientific focus is known in
advance, but we did not pursue this extension here.

There are several directions for further work. One is to develop
a more explicit theory for data-driven tuning rules in debiased
Cox inference, including criteria that select the tuning parameter
with respect to a prespecified inferential target rather than overall
model fit or prediction. Another is
to investigate the classical bootstrap-\(t\) construction with
replicate-specific standard-error estimates and to assess whether this
replicate-specific studentization yields finite-sample improvements that justify the
added computational cost \cite{efron1993bootstrap,davison1997bootstrap}.

A further extension is to move beyond the present componentwise theory toward
simultaneous confidence bands for time-indexed survival quantities, such as the
cumulative baseline hazard or time-varying coefficient processes. Such results
would require functional weak-convergence arguments and could build on wild
bootstrap theory for multiplicative intensity models
\cite{dobler2019confidence,bluhmki2019wild}.

Finally, it would be useful to investigate whether the finite-sample advantages
of bootstrap refinement persist for broader penalty classes, larger covariate
sets, and more complex survival settings such as clustered event times, frailty
structures, or time-dependent covariates.\endgroup

\endgroup

\clearpage
\section*{Supplementary Material}
\addcontentsline{toc}{section}{Supplementary Material}

\begingroup
\renewcommand{\thesection}{S\arabic{section}}
\renewcommand{\thesubsection}{S\arabic{section}.\arabic{subsection}}
\renewcommand{\thesubsubsection}{S\arabic{section}.\arabic{subsection}.\arabic{subsubsection}}
\renewcommand{\thetable}{S\arabic{table}}
\renewcommand{\thefigure}{S\arabic{figure}}
\renewcommand{\theequation}{S\arabic{equation}}
\renewcommand{\thetheorem}{S\arabic{theorem}}
\renewcommand{\thelemma}{S\arabic{lemma}}
\renewcommand{\theassumption}{S\arabic{assumption}}
\renewcommand{\theproposition}{S\arabic{proposition}}
\renewcommand{\thecorollary}{S\arabic{corollary}}
\renewcommand{\theremark}{S\arabic{remark}}
\renewcommand{\theHsection}{supp.\arabic{section}}
\renewcommand{\theHsubsection}{supp.\arabic{section}.\arabic{subsection}}
\renewcommand{\theHsubsubsection}{supp.\arabic{section}.\arabic{subsection}.\arabic{subsubsection}}
\renewcommand{\theHtable}{supp.\arabic{table}}
\renewcommand{\theHfigure}{supp.\arabic{figure}}
\renewcommand{\theHequation}{supp.\arabic{equation}}
\setcounter{section}{0}
\setcounter{subsection}{0}
\setcounter{subsubsection}{0}
\setcounter{table}{0}
\setcounter{figure}{0}
\setcounter{equation}{0}
\setcounter{theorem}{0}
\setcounter{lemma}{0}
\setcounter{assumption}{0}
\setcounter{proposition}{0}
\setcounter{corollary}{0}
\setcounter{remark}{0}

\section{Asymptotic justification for the wild bootstrap}
\label{app:sec:asymptotics_wb}

We justify the asymptotic validity of the wild bootstrap (WB) for inference
based on the debiased Cox estimator after Cox Lasso variable selection in the
classical right-censoring setting.
Inference is conducted componentwise for fixed indices $j\in\widehat M$.

\subsection{Counting-process setup}

This subsection collects notation that is useful for proofs but unnecessarily detailed for the main Methods section. For a vector $\mathbf{x}\in\mathbb{R}^p$, $x^{(j)}$ denotes its $j$th component, $(\cdot)^\top$ denotes transposition, and $\|\mathbf{x}\|_q$ denotes the usual $\ell_q$-norm. For $k=0,1,2$, define
\[
\mathbf{x}^{\otimes 0}=1,
\qquad
\mathbf{x}^{\otimes 1}=\mathbf{x},
\qquad
\mathbf{x}^{\otimes 2}=\mathbf{x}\mathbf{x}^\top.
\]
For theoretical statements, let $M_0=\{j:\beta_j^0\neq0\}$ denote the true active set and $s_0=|M_0|=\|\boldsymbol{\beta}^0\|_0$ its size. The Cox Lasso tuning parameter is assumed to satisfy the usual sparse-rate requirements; a typical order is $\gamma_n\asymp\sqrt{\log(p)/n}$.

Let $(\tilde T_i,\delta_i,\mathbf{X}_i)$, $i=1,\dots,n$, be i.i.d., where
$\tilde T_i=\min(T_i,C_i)$ and $\delta_i=\mathbbm 1(T_i\le C_i)$.
Define the at-risk and counting processes
\begin{align*}
Y_i(t)&=\mathbbm 1(\tilde T_i\ge t),\\
N_i(t)&=\mathbbm 1(\tilde T_i\le t,\ \delta_i=1),
\qquad t\in[0,\tau].
\end{align*}
Let $\{\mathcal F_t\}$ be the usual filtration generated by the observed history.
Under the Cox model with true parameter $\boldsymbol{\beta}^0$ and baseline hazard $\lambda_0$,
the intensity of $N_i$ satisfies
\[
\lambda_i(t \mid \mathcal F_{t^-})
=
Y_i(t)\exp(\mathbf{X}_i^\top\boldsymbol{\beta}^0)\lambda_0(t),
\]
and the associated martingale is
\[
M_i(t)=N_i(t)-\int_0^t Y_i(u)\exp(\mathbf{X}_i^\top\boldsymbol{\beta}^0)\lambda_0(u)\,du.
\]
(See, e.g., \cite{andersen1993model}.)

For $k=0,1,2$, write the empirical risk-set sums
\begin{align*}
S_n^{(k)}(t,\boldsymbol{\beta})
&=
\frac1n\sum_{i=1}^n Y_i(t)\exp(\mathbf{X}_i^\top\boldsymbol{\beta})\mathbf{X}_i^{\otimes k},\\
\bar{\mathbf{X}}_n(t,\boldsymbol{\beta})
&=
\frac{S_n^{(1)}(t,\boldsymbol{\beta})}{S_n^{(0)}(t,\boldsymbol{\beta})}.
\end{align*}
Define the individual score contribution and the normalized aggregate score by
\begin{align*}
\boldsymbol{\psi}_i(\boldsymbol{\beta})
&=
\int_0^\tau \{\mathbf{X}_i-\bar{\mathbf{X}}_n(t,\boldsymbol{\beta})\}\,dN_i(t),\\
\boldsymbol{\dot\ell}_n(\boldsymbol{\beta})
&=
\frac1n\sum_{i=1}^n \boldsymbol{\psi}_i(\boldsymbol{\beta}).
\end{align*}
The unnormalized score is $\mathbf{U}(\boldsymbol{\beta})=\sum_{i=1}^n\boldsymbol{\psi}_i(\boldsymbol{\beta})$.

Equivalently, with
\[
\boldsymbol{\phi}_i(t,\boldsymbol{\beta})=\mathbf{X}_i-\bar{\mathbf{X}}_n(t,\boldsymbol{\beta}),
\]
we may write $\boldsymbol{\psi}_i(\boldsymbol{\beta})=\int_0^\tau \boldsymbol{\phi}_i(t,\boldsymbol{\beta})\,dN_i(t)$. At the true parameter, the Doob--Meyer decomposition yields
\[
\mathbf{U}(\boldsymbol{\beta}^0)=\sum_{i=1}^n\int_0^\tau \boldsymbol{\phi}_i(t,\boldsymbol{\beta}^0)\,dM_i(t),
\]
which is the martingale score representation used to motivate multiplier resampling in the main manuscript.

\subsubsection*{Bootstrap notation}

Let $\mathcal D_n=\{(\tilde T_i,\delta_i,\mathbf{X}_i):i=1,\dots,n\}$
denote the observed data.
For any bootstrap random variable $Z^*$, define the conditional expectation and variance

\begin{align*}
\E^*(Z^*)
&:=
\E(Z^* \mid \mathcal D_n),\\
\Var^*(Z^*)
&:=
\Var(Z^* \mid \mathcal D_n).
\end{align*}

We write $Z_n=o_p^*(1)$ if for every $\varepsilon>0$,
\[
P^*(|Z_n|>\varepsilon)
=
P(|Z_n|>\varepsilon \mid \mathcal D_n)
\xrightarrow{p} 0.
\]

All starred probability statements refer to randomness induced solely by the
bootstrap weights, conditional on the data.

\begingroup
\subsubsection*{Information matrix and nodewise Lasso}

We write
\[
\ell_n(\boldsymbol{\beta})=n^{-1}\ell(\boldsymbol{\beta}),
\qquad
\boldsymbol{\dot\ell}_n(\boldsymbol{\beta})
=
\nabla_{\boldsymbol{\beta}}\ell_n(\boldsymbol{\beta})
=
n^{-1}\mathbf{U}(\boldsymbol{\beta}).
\]
Let \(\Sigma_0\) denote the population information matrix appearing in the
linear expansion of the Cox score and let
\[
\Theta_0=\Sigma_0^{-1}.
\]

To estimate \(\Theta_0\), we use a nodewise Lasso construction as commonly
employed in debiased inference
\cite{vandegeer2014optimal,kong2021robustcox,yu2021coxci,xia2023coxdiverging}.
Let \(\hat{\boldsymbol{\beta}}\) denote the Cox Lasso estimator obtained from
the original data and define
\[
\mathbf{G}_i
=
\boldsymbol{\psi}_i(\hat{\boldsymbol{\beta}})
\in\mathbb{R}^p,
\qquad
\mathbf{G}
=
\begin{pmatrix}
\mathbf{G}_1^\top\\
\vdots\\
\mathbf{G}_n^\top
\end{pmatrix}
\in\mathbb{R}^{n\times p}.
\]
Thus, the rows of \(\mathbf{G}\) contain the individual score contributions
evaluated at the Cox Lasso estimator. For the nodewise regressions, we use
\[
\hat\Sigma
=
\frac{1}{n}\mathbf{G}^\top\mathbf{G}
=
\frac{1}{n}\sum_{i=1}^n
\mathbf{G}_i\mathbf{G}_i^\top.
\]
Under the regularity conditions for the Cox model, this empirical second-moment
matrix has the same population limit as the information matrix appearing in
the asymptotic linear representation
\cite{andersen1993model}.

For each coordinate \(j=1,\ldots,p\), let \(\mathbf{G}_j\) denote the
\(j\)th column of \(\mathbf{G}\) and let \(\mathbf{G}_{-j}\) contain the
remaining columns. We solve the nodewise Lasso problem
\[
\hat{\boldsymbol{\gamma}}_j
=
\arg\min_{\boldsymbol{\gamma}\in\mathbb{R}^{p-1}}
\left\{
\frac{1}{n}
\|\mathbf{G}_j-\mathbf{G}_{-j}\boldsymbol{\gamma}\|_2^2
+
\lambda_j\|\boldsymbol{\gamma}\|_1
\right\},
\]
where \(\lambda_j>0\) is the nodewise tuning parameter. The corresponding
residual variance estimator is
\[
\hat\tau_j^2
=
\frac{1}{n}
\|\mathbf{G}_j-\mathbf{G}_{-j}
\hat{\boldsymbol{\gamma}}_j\|_2^2.
\]
Let \(\hat{\mathbf{c}}_j\in\mathbb{R}^p\) have \(j\)th component equal to
one and the remaining components equal to
\(-\hat{\boldsymbol{\gamma}}_j\). The \(j\)th row of the estimator of
\(\Theta_0\) is then
\[
\hat\Theta_{j\cdot}
=
\frac{\hat{\mathbf{c}}_j^\top}{\hat\tau_j^2}.
\]

For the bootstrap results below, the relevant requirement is that
\(\hat\Theta\) approximates \(\Theta_0\) sufficiently well. This condition is
stated formally in Assumption~(A6).
\subsection{Assumptions}

The following conditions are sufficient for the pointwise (fixed $j$) WB results
stated below. They adapt standard Cox/martingale
regularity conditions to the sparse penalized setting combined with nodewise Lasso requirements.

\begin{enumerate}
\item[(A1)] (Sampling and independent censoring)
The observations $(\tilde T_i,\delta_i,\mathbf{X}_i)$ are i.i.d., and $C_i \perp T_i \mid \mathbf{X}_i$.

\item[(A2)] (Bounded covariates and finite horizon)
There exists $K<\infty$ such that $\|\mathbf{X}_i\|_\infty\le K$ a.s., and $\tau<\infty$.

\item[(A3)] (Baseline hazard and nondegenerate risk sets)
The baseline hazard function $\lambda_0:[0,\tau]\to[0,\infty)$
is uniformly bounded and satisfies
\[
\int_0^\tau \lambda_0(t)\,dt < \infty.
\]
Let
\[
S^{(0)}(t,\boldsymbol{\beta})
=
\E\!\left[ Y_1(t)\exp(\mathbf{X}_1^\top\boldsymbol{\beta}) \right].
\]
Assume that there exists a constant $c_0>0$ such that
\[
\inf_{t\in[0,\tau]} S^{(0)}(t,\boldsymbol{\beta}^0) \ge c_0,
\]
and that
\[
\sup_{t\in[0,\tau]}
\big|
S_n^{(0)}(t,\boldsymbol{\beta}^0)-S^{(0)}(t,\boldsymbol{\beta}^0)
\big|
\xrightarrow{p} 0.
\]

\item[(A4)] (Information matrix and invertibility)
Let
\[
\Sigma_0
:=
-\mathbb{E}
\left[
\nabla^2 \ell_n(\boldsymbol{\beta}^0)
\right]
\]
denote the population information matrix, i.e.\ the negative expectation of the
Hessian of the Cox partial log-likelihood evaluated at the true parameter
$\boldsymbol{\beta}^0$. Assume that $\Sigma_0$ is positive definite and that there exist
constants $0<c_{\min}\le c_{\max}<\infty$ such that
\[
c_{\min}
\le
\lambda_{\min}(\Sigma_0)
\le
\lambda_{\max}(\Sigma_0)
\le
c_{\max}.
\]

\item[(A5)] (Sparsity)
Let $s_0=\|\boldsymbol{\beta}^0\|_0$ and assume
\[
s_0 \frac{\log p}{\sqrt n}\to 0.
\]

\item[(A6)] (Nodewise precision estimation)
The nodewise estimator $\hat\Theta$ satisfies
\[
\|\hat\Theta-\Theta_0\|_\infty=o_p(1)
\]

\item[(A7)] (Moment condition for influence components)
For each fixed $j$,
\[
\mathbb E(\varphi_{1j}^4)<\infty,
\qquad
\varphi_{ij} := \mathbf{e}_j^\top\Theta_0\boldsymbol{\psi}_i(\boldsymbol{\beta}^0).
\]

\item[(A8)] (Wild bootstrap multipliers)
Let $\xi_1,\dots,\xi_n$ be i.i.d., independent of the data, with
$\mathbb E(\xi_i)=0$, $\mathbb E(\xi_i^2)=1$, and $\mathbb E(\xi_i^4)<\infty$.
\end{enumerate}

\begingroup
\subsection{Proofs}

\begin{remark}[What the present proof covers]
The present proof is written for the Cox model with coordinatewise $\ell_1$-penalization together with nodewise $\ell_1$-regression in the debiasing step.
Accordingly, the current argument directly covers the Cox Lasso and closely related weighted or adaptive $\ell_1$ variants only insofar as the same asymptotic linear expansion and remainder control remain valid; see, for example, ~\cite{zou2006adaptive} and ~\cite{zhang2007adaptive} for adaptive $\ell_1$ methodology.

By contrast, the proof does not automatically extend to penalties such as the elastic net~\cite{zou2005elasticnet}, group Lasso~\cite{yuan2006model}, fused Lasso~\cite{tibshirani2005sparsity}, SCAD~\cite{fan2001variable}, or MCP~\cite{zhang2010nearly}.
The reason is that these penalties alter either the KKT structure, the bias term, the target sparsity structure, or the precision-estimation step in a way that is not handled by the present derivation.
Thus, any such extension would require additional arguments and is outside the scope of the current manuscript.
\end{remark}

\subsubsection{Asymptotic linearity of the debiased Cox estimator}

Let $\tilde{\boldsymbol{\beta}}$ denote the debiased estimator (constructed from the Cox Lasso estimator and nodewise precision estimation). The key input is a
linear representation driven by martingale/score contributions.

\begin{theorem}[Asymptotic linearity]\label{app:thm:asymptotic_linearity}
Assume (A1)--(A7). For each fixed $j\in\widehat M$,
\[
\sqrt n(\tilde\beta_j-\beta_j^0)
=
\frac{1}{\sqrt n}\sum_{i=1}^n \varphi_{ij}
+o_p(1).
\]
Consequently,
\[
\sqrt n(\tilde\beta_j-\beta_j^0)\ \Rightarrow\ \mathcal N(0,\sigma_j^2),
\qquad
\sigma_j^2=\Var(\varphi_{1j}).
\]
\end{theorem}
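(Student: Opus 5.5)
The plan is the standard one-step decomposition of the debiased estimator. A Taylor expansion of the normalized score around $\boldsymbol{\beta}^0$, with the Hessian $\nabla^2\ell_n$ evaluated at an intermediate point $\bar{\boldsymbol{\beta}}$ (coordinatewise mean value), gives $\boldsymbol{\dot\ell}_n(\hat{\boldsymbol{\beta}})=\boldsymbol{\dot\ell}_n(\boldsymbol{\beta}^0)+\nabla^2\ell_n(\bar{\boldsymbol{\beta}})(\hat{\boldsymbol{\beta}}-\boldsymbol{\beta}^0)$. Substituting this into the definition of $\tilde{\boldsymbol{\beta}}$ yields
\[
\sqrt n(\tilde\beta_j-\beta_j^0)
=\mathbf{e}_j^\top\Theta_0\sqrt n\,\boldsymbol{\dot\ell}_n(\boldsymbol{\beta}^0)
+\mathbf{e}_j^\top(\hat\Theta-\Theta_0)\sqrt n\,\boldsymbol{\dot\ell}_n(\boldsymbol{\beta}^0)
+\sqrt n\,\mathbf{e}_j^\top\bigl\{I+\hat\Theta\nabla^2\ell_n(\bar{\boldsymbol{\beta}})\bigr\}(\hat{\boldsymbol{\beta}}-\boldsymbol{\beta}^0).
\]
The first term is the leading linear term. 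The other two are remainders $R_1$ and $R_2$, and each must be shown to be $o_p(1)$.

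\textbf{Leading term.} The individual contributions $\boldsymbol{\psi}_i(\boldsymbol{\beta}^0)$ are built with the empirical mean $\bar{\mathbf{X}}_n$, so they are not independent. I would therefore replace $\bar{\mathbf{X}}_n(t,\boldsymbol{\beta}^0)$ by its population version $s^{(1)}/s^{(0)}$. I would also integrate against $dM_i$ rather than $dN_i$, using the Doob--Meyer decomposition: the compensator part sums exactly to zero across $i$ for the empirical centering. The difference between the empirical and population versions is a stochastic integral of $\bar{\mathbf{X}}_n-\bar{\mathbf{X}}$ against $n^{-1/2}\sum_i dM_i$. By (A2)--(A3) and uniform convergence of $S_n^{(k)}$, this integrand tends to zero uniformly, so Lenglart's inequality makes the difference $o_p(1)$. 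What remains is a normalized sum of i.i.d.\ mean-zero terms $\varphi_{ij}$ with finite variance by (A7). The classical CLT gives the normal limit with $\sigma_j^2=\Var(\varphi_{1j})$, and Slutsky's lemma handles the remainders.

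\textbf{Remainder $R_1$.} Bound it by $\|\mathbf{e}_j^\top(\hat\Theta-\Theta_0)\|_1\,\|\sqrt n\,\boldsymbol{\dot\ell}_n(\boldsymbol{\beta}^0)\|_\infty$. Bounded covariates and martingale exponential inequalities give $\|\sqrt n\,\boldsymbol{\dot\ell}_n(\boldsymbol{\beta}^0)\|_\infty=O_p(\sqrt{\log p})$. The row-wise $\ell_1$ consistency in (A6) must then hold at rate $o_p(1/\sqrt{\log p})$; I would state this reading of (A6) explicitly, since the nodewise Lasso gives rate $s_j\sqrt{\log p/n}$.

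\textbf{Remainder $R_2$ (main obstacle).} Split $I+\hat\Theta\nabla^2\ell_n(\bar{\boldsymbol{\beta}})$ into two pieces:
\begin{itemize}
\item $I-\hat\Theta\hat\Sigma$, which is controlled by the nodewise KKT conditions, giving $\|\mathbf{e}_j^\top(I-\hat\Theta\hat\Sigma)\|_\infty\lesssim\lambda_j/\hat\tau_j^2$;
\item $\hat\Theta\{\hat\Sigma+\nabla^2\ell_n(\bar{\boldsymbol{\beta}})\}$.
\end{itemize}
Each piece is paired with the Cox Lasso bound $\|\hat{\boldsymbol{\beta}}-\boldsymbol{\beta}^0\|_1=O_p(s_0\sqrt{\log p/n})$ of Huang et al.\ via Hölder's inequality. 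This yields terms of order $s_0\log p/\sqrt n$, which vanish by (A5).

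The delicate point is the second piece. It requires showing that the empirical score outer-product matrix $\hat\Sigma$ built from $\mathbf{G}_i=\boldsymbol{\psi}_i(\hat{\boldsymbol{\beta}})$ approximates the negative Hessian at $\bar{\boldsymbol{\beta}}$ in sup-norm at rate $\sqrt{\log p/n}+s_0\sqrt{\log p/n}$. I would handle this by comparing both matrices with $\Sigma_0$. For the Hessian, I would use a Lipschitz bound of the form $\|\nabla^2\ell_n(\bar{\boldsymbol{\beta}})-\nabla^2\ell_n(\boldsymbol{\beta}^0)\|_\infty\lesssim K^3\|\hat{\boldsymbol{\beta}}-\boldsymbol{\beta}^0\|_1$, and then concentration of both matrices about $\Sigma_0$ using boundedness and the optional-variation identity for martingales. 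This is where the stated assumptions are thinnest, and I expect to need that $\hat\Theta$ has bounded row $\ell_1$ norm, which follows from sparse rows under (A4).
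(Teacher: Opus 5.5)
Your decomposition is the same desparsification blueprint the paper's proof invokes; the paper cites van de Geer et al.\ and the Cox-specific results of Yu et al.\ and Xia et al.\ as black boxes. Your treatment of the leading term is more careful than the paper's. The $\boldsymbol{\psi}_i(\boldsymbol{\beta}^0)$ are not i.i.d.\ because of $\bar{\mathbf{X}}_n$, and passing to $\int(\mathbf{X}_i-s^{(1)}/s^{(0)})\,dM_i$ is the right fix.

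\textbf{The gap: rate bookkeeping in the second piece of $R_2$.} You bound $\|\hat\Sigma+\nabla^2\ell_n(\bar{\boldsymbol{\beta}})\|_{\max}$ by $O_p\bigl((1+s_0)\sqrt{\log p/n}\bigr)$. The $s_0$ part comes from the Hessian Lipschitz bound $K^3\|\hat{\boldsymbol{\beta}}-\boldsymbol{\beta}^0\|_1$, and likewise from replacing $\boldsymbol{\psi}_i(\hat{\boldsymbol{\beta}})$ by $\boldsymbol{\psi}_i(\boldsymbol{\beta}^0)$ inside $\hat\Sigma$. You then pair this via H\"older with $\|\hat{\boldsymbol{\beta}}-\boldsymbol{\beta}^0\|_1=O_p(s_0\sqrt{\log p/n})$. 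After multiplying by $\sqrt n$, this gives $O_p\bigl(\|\hat\Theta_{j\cdot}\|_1\,s_0^2\log p/\sqrt n\bigr)$, not $s_0\log p/\sqrt n$. Assumption (A5) only controls $s_0\log p/\sqrt n$. For example, $s_0\asymp n^{0.4}$ with bounded $\log p$ satisfies (A5), yet your bound diverges. Only the KKT piece and the pure concentration piece (of order $\sqrt{\log p/n}$ in max-norm) really are of order $s_0\log p/\sqrt n$ under your pairing.

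\textbf{How to close it under (A5).} Avoid the max-norm/$\ell_1$ pairing for the parts that depend on $\hat{\boldsymbol{\beta}}$, and control them quadratically, as the cited Cox debiasing results do. Let $\boldsymbol{\Delta}=\hat{\boldsymbol{\beta}}-\boldsymbol{\beta}^0$. Differentiating the risk-set weighted covariance once more and using $\|\mathbf{X}_i\|_\infty\le K$ gives
\[
\bigl\|\{\nabla^2\ell_n(\boldsymbol{\beta}^0+s\boldsymbol{\Delta})-\nabla^2\ell_n(\boldsymbol{\beta}^0)\}\boldsymbol{\Delta}\bigr\|_\infty
\lesssim K e^{2K\|\boldsymbol{\Delta}\|_1}\,\boldsymbol{\Delta}^\top\{-\nabla^2\ell_n(\boldsymbol{\beta}^0)\}\boldsymbol{\Delta},
\qquad s\in[0,1].
\]
The Bregman-type quadratic form on the right is $O_p(s_0\log p/n)$ by the Huang et al.\ analysis.

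The same device handles $\{\hat\Sigma-\hat\Sigma_0\}\boldsymbol{\Delta}$, where $\hat\Sigma_0=n^{-1}\sum_i\boldsymbol{\psi}_i(\boldsymbol{\beta}^0)\boldsymbol{\psi}_i(\boldsymbol{\beta}^0)^\top$. Write $\boldsymbol{\psi}_i(\hat{\boldsymbol{\beta}})-\boldsymbol{\psi}_i(\boldsymbol{\beta}^0)$ as a risk-set covariance matrix applied to $\boldsymbol{\Delta}$. Then one term is directly a quadratic form in $\boldsymbol{\Delta}$. By Cauchy--Schwarz, the cross term is bounded by the geometric mean of $\boldsymbol{\Delta}^\top\{-\nabla^2\ell_n(\boldsymbol{\beta}^0)\}\boldsymbol{\Delta}$ and $\boldsymbol{\Delta}^\top\hat\Sigma_0\boldsymbol{\Delta}$. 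Both are $O_p(s_0\log p/n)$.

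Only the remaining terms $\hat\Sigma_0-\Sigma_0$ and $\Sigma_0+\nabla^2\ell_n(\boldsymbol{\beta}^0)$ should be paired with $\|\boldsymbol{\Delta}\|_1$. Without this refinement, you need the stronger condition $s_0^2\log p/\sqrt n\to0$.

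\textbf{A related point on row norms.} Sparse rows together with (A4) only give $\|\Theta_{0,j\cdot}\|_1\lesssim\sqrt{s_j}$. So bounded row $\ell_1$ norms of $\hat\Theta$ require bounded nodewise sparsity; otherwise the factor $\sqrt{s_j}$ must enter the rate condition.
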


\begin{proof}
This is a standard desparsification argument: expand the (partial) score around
$\boldsymbol{\beta}^0$, plug in the Cox Lasso estimator, and premultiply by an approximate inverse
$\hat\Theta$, so that the leading term becomes the empirical score at $\boldsymbol{\beta}^0$,
while the remainder is controlled by sparsity and nodewise Lasso rates.
For the Cox model under right censoring, the score admits a martingale
representation in the counting-process framework \cite{andersen1993model}.
Cox desparsification results can be invoked as black boxes;
see, e.g., \cite{vandegeer2014optimal} for the general blueprint and
\cite{yu2021coxci,xia2023coxdiverging} for Cox-specific statements.
\end{proof}

\begin{lemma}[Conditional Slutsky reduction]
\label{app:lem:cond-slutsky}
Let \(Z_n^*\) and \(R_n^*\) be bootstrap random variables and let
\[
P^*(\cdot)=P(\cdot\mid\mathcal D_n)
\]
denote conditional probability given the data. Suppose that
\[
R_n^*=o_p^*(1)
\]
and that, conditionally on the data, \(Z_n^*\) converges weakly in probability
to a random variable \(Z\) with a continuous distribution function. Then
\[
\sup_{t\in\mathbb R}
\left|
P^*(Z_n^*+R_n^*\le t)
-
P^*(Z_n^*\le t)
\right|
\xrightarrow{p}0.
\]
\end{lemma}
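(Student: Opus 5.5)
The plan is the usual $\varepsilon$-sandwich argument, carried out conditionally on the data and combined with uniform convergence of distribution functions to a continuous limit. Let $F$ denote the distribution function of $Z$. Since $F$ is continuous, the conditional weak convergence of $Z_n^*$ in probability upgrades, by Pólya's theorem, to uniform convergence:
\[
\Delta_n:=\sup_{t\in\mathbb R}\left|P^*(Z_n^*\le t)-F(t)\right|\xrightarrow{p}0 .
\]
To justify this, I would apply the subsequence characterisation of convergence in probability. Along any subsequence there is a further subsequence on which $P^*(Z_n^*\le t)\to F(t)$ almost surely, simultaneously for all rational $t$. The deterministic Pólya argument (monotonicity plus continuity of $F$) then gives $\Delta_n\to0$ almost surely along that sub-subsequence, and this yields the claim.

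Next, fix $\varepsilon>0$. For every $t$ we have the event inclusions
\[
\{Z_n^*\le t-\varepsilon\}\setminus\{|R_n^*|>\varepsilon\}\subseteq\{Z_n^*+R_n^*\le t\}\subseteq\{Z_n^*\le t+\varepsilon\}\cup\{|R_n^*|>\varepsilon\}.
\]
Taking $P^*$ of these sets and subtracting $P^*(Z_n^*\le t)$ gives
\[
\left|P^*(Z_n^*+R_n^*\le t)-P^*(Z_n^*\le t)\right|
\le \max\Bigl\{P^*(t-\varepsilon<Z_n^*\le t),\,P^*(t<Z_n^*\le t+\varepsilon)\Bigr\}+P^*(|R_n^*|>\varepsilon).
\]
Each interval probability is bounded by $2\Delta_n+\omega_F(\varepsilon)$, where $\omega_F(\varepsilon)=\sup_t\{F(t+\varepsilon)-F(t)\}$ is the modulus of continuity of $F$. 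Hence
\[
\sup_t\left|P^*(Z_n^*+R_n^*\le t)-P^*(Z_n^*\le t)\right|\le 2\Delta_n+\omega_F(\varepsilon)+P^*(|R_n^*|>\varepsilon).
\]

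To conclude, note that $\Delta_n\xrightarrow{p}0$ by the first step, and $P^*(|R_n^*|>\varepsilon)\xrightarrow{p}0$ by the definition of $o_p^*(1)$. A continuous distribution function is uniformly continuous, because it tends to $0$ and $1$ at $\mp\infty$ and is therefore uniformly continuous on compacts with controlled tails. Consequently $\omega_F(\varepsilon)\to0$ as $\varepsilon\downarrow0$. Given $\eta>0$, I choose $\varepsilon$ with $\omega_F(\varepsilon)<\eta/2$. Then the probability that the supremum exceeds $\eta$ tends to zero, which is the claim.

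The only step requiring care is the first one: the passage from pointwise conditional convergence in probability to uniform convergence in probability. The subsequence argument handles it, since the countable dense set of rational $t$ is enough for the Pólya step. Everything else is deterministic bookkeeping carried out on each realisation of the data.
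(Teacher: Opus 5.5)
Your proof is correct and follows the same route as the paper. Both use the $\varepsilon$-sandwich from the set inclusions, which bounds the discrepancy by the conditional mass of $Z_n^*$ in an $\varepsilon$-window plus $P^*(|R_n^*|>\varepsilon)$. Your Pólya/subsequence step and the bound $2\Delta_n+\omega_F(\varepsilon)$ make rigorous the uniform-in-$t$ control that the paper compresses into one sentence about choosing $\varepsilon$ small and then letting $n\to\infty$.
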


\begin{proof}
Fix $\varepsilon>0$.
By the elementary set inclusions
\[
\{Z_n^*\le t-\varepsilon\}\cap\{|R_n^*|\le \varepsilon\}
\subseteq
\{Z_n^*+R_n^*\le t\}
\subseteq
\{Z_n^*\le t+\varepsilon\}\cup\{|R_n^*|>\varepsilon\},
\]
we obtain
\[
P^*(Z_n^*\le t-\varepsilon)-P^*(|R_n^*|>\varepsilon)
\le
P^*(Z_n^*+R_n^*\le t)
\]
and
\[
P^*(Z_n^*+R_n^*\le t)
\le
P^*(Z_n^*\le t+\varepsilon)+P^*(|R_n^*|>\varepsilon).
\]
Hence
\[
\sup_{t\in\mathbb R}
\left|
P^*(Z_n^*+R_n^*\le t)-P^*(Z_n^*\le t)
\right|
\]
is bounded by
\[
\sup_{t\in\mathbb R}
\left|
P^*(Z_n^*\le t+\varepsilon)-P^*(Z_n^*\le t-\varepsilon)
\right|
+P^*(|R_n^*|>\varepsilon).
\]
Since \(R_n^*=o_p^*(1)\), the second term converges to zero in probability.
By the assumed conditional weak convergence of \(Z_n^*\) to a continuous
limit distribution, the first term can be made arbitrarily small by first
choosing \(\varepsilon\) sufficiently small and then letting \(n\to\infty\).
This proves the claim.
\end{proof}

\begin{remark}
\label{app:rem:no_independence_lemma1}
Lemma~\ref{app:lem:cond-slutsky} does not require independence between
$Z_n^*$ and $R_n^*$.
The argument is purely based on set inclusions and conditional probability bounds.
Any independence assumptions enter only at the subsequent step where the conditional distribution of the leading bootstrap term $Z_n^*$ is identified.

In particular, for the wild bootstrap, this step relies on the i.i.d. multiplier structure and independence of the multipliers from the data.
For Efron's bootstrap, the corresponding argument is instead based on exchangeable multinomial weights, which are independent of the data but not mutually independent.
\end{remark}

The role of Lemma~\ref{app:lem:cond-slutsky} is therefore only to show that the bootstrap remainder term is asymptotically negligible in conditional probability.
The actual bootstrap approximation is determined by the leading term.
Accordingly, the additional assumptions used in the proofs of bootstrap consistency concern the structure of that leading term:
for the wild bootstrap, i.i.d. multipliers independent of the data are used,
whereas for Efron's bootstrap the argument is based on centered multinomial bootstrap weights.

\endgroup

\subsubsection{Consistency of the wild bootstrap}

Recall that, conditionally on the observed data, the wild bootstrap replicate
is defined by
\[
\tilde{\boldsymbol{\beta}}^{\mathrm{WB}}
=
\tilde{\boldsymbol{\beta}}
+
\hat\Theta
\frac{1}{n}
\sum_{i=1}^n
\xi_i\boldsymbol{\psi}_i(\hat{\boldsymbol{\beta}}),
\]
where \(\hat{\boldsymbol{\beta}}\) and \(\hat\Theta\) are estimated from the
original data and kept fixed throughout the bootstrap. Hence, for coordinate
\(j\),
\[
\sqrt n
\left(
\tilde\beta_j^{\mathrm{WB}}-\tilde\beta_j
\right)
=
\mathbf e_j^\top\hat\Theta
\frac{1}{\sqrt n}
\sum_{i=1}^n
\xi_i\boldsymbol{\psi}_i(\hat{\boldsymbol{\beta}}).
\]

Under assumptions (A5)--(A8), replacing the estimated quantities in this
expression by their population counterparts contributes only a conditionally
negligible remainder. Therefore,
\[
\sqrt n
\left(
\tilde\beta_j^{\mathrm{WB}}-\tilde\beta_j
\right)
=
\frac{1}{\sqrt n}
\sum_{i=1}^n
\xi_i\varphi_{ij}
+
o_p^*(1),
\]
where
\[
\varphi_{ij}
=
\mathbf e_j^\top\Theta_0
\boldsymbol{\psi}_i(\boldsymbol{\beta}^0).
\]

\begin{theorem}[Wild bootstrap consistency]\label{app:thm:wb-cons}
Assume (A1)--(A8). For each fixed \(j\in\widehat M\),
\begin{align*}
\sup_{t\in\mathbb R}
\bigg|
& P^*\!\left(
\sqrt n(\tilde\beta_j^{\mathrm{WB}}-\tilde\beta_j)\le t
\right) \\
&\quad -
P\!\left(
\sqrt n(\tilde\beta_j-\beta_j^0)\le t
\right)
\bigg|
\xrightarrow{p}0.
\end{align*}
\end{theorem}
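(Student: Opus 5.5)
The plan is a triangle-inequality argument through the common Gaussian limit $\mathcal N(0,\sigma_j^2)$. Write $\Phi_{\sigma_j}$ for its distribution function. We bound
\[
\sup_t\bigl|P^*(\sqrt n(\tilde\beta_j^{\mathrm{WB}}-\tilde\beta_j)\le t)-\Phi_{\sigma_j}(t)\bigr|
+\sup_t\bigl|P(\sqrt n(\tilde\beta_j-\beta_j^0)\le t)-\Phi_{\sigma_j}(t)\bigr|.
\]
For the second term, Theorem~\ref{app:thm:asymptotic_linearity} gives weak convergence to a continuous limit, and P\'olya's theorem upgrades this to uniform convergence. This requires $\sigma_j^2>0$, which we take from (A4) together with nondegeneracy of the score; otherwise both sides degenerate and a separate trivial argument applies. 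The first term is the real work, and I would split it into a leading multiplier sum plus a remainder, then apply Lemma~\ref{app:lem:cond-slutsky}.

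For the leading term $Z_n^*=n^{-1/2}\sum_i\xi_i\varphi_{ij}$, conditionally on $\mathcal D_n$ this is a sum of independent, mean-zero terms with conditional variance $\hat s_n^2=n^{-1}\sum_i\varphi_{ij}^2$. By the law of large numbers, $\hat s_n^2\to\E\varphi_{1j}^2=\sigma_j^2$ almost surely. Here $\E\varphi_{1j}=0$ follows from the martingale representation: $\boldsymbol{\psi}_i(\boldsymbol{\beta}^0)$ is asymptotically equivalent to the i.i.d. martingale integral $\int\{\mathbf X_i-e(t)\}\,dM_i$. To be careful, I would define $\varphi_{ij}$ with the population $e(t)=S^{(1)}/S^{(0)}$, so that the summands are genuinely i.i.d. The difference from the empirical $\bar{\mathbf X}_n$ is absorbed into the remainder. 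For the Lindeberg condition conditionally on the data, the Lyapunov ratio is bounded by $\E\xi^4\,n^{-2}\sum_i\varphi_{ij}^4/\hat s_n^4$, which tends to $0$ almost surely by (A7) and (A8). Hence $Z_n^*\Rightarrow^*\mathcal N(0,\sigma_j^2)$ almost surely, and in particular in probability. P\'olya again makes this uniform in $t$.

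The remainder is $R_n^*=n^{-1/2}\sum_i\xi_i\,a_i$ with $a_i=\mathbf e_j^\top\hat\Theta\boldsymbol{\psi}_i(\hat{\boldsymbol\beta})-\varphi_{ij}$. Since the multipliers are independent of the data and centered, $\E^*(R_n^{*2})=n^{-1}\sum_i a_i^2$. By conditional Chebyshev, it therefore suffices to show $n^{-1}\sum_i a_i^2=o_p(1)$. I would decompose $a_i$ into three pieces. The first is $\mathbf e_j^\top(\hat\Theta-\Theta_0)\boldsymbol{\psi}_i(\hat{\boldsymbol\beta})$. The second is $\mathbf e_j^\top\Theta_0\{\boldsymbol{\psi}_i(\hat{\boldsymbol\beta})-\boldsymbol{\psi}_i(\boldsymbol\beta^0)\}$. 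The third is the empirical-versus-population centering term $\mathbf e_j^\top\Theta_0\int\{e(t)-\bar{\mathbf X}_n(t,\boldsymbol\beta^0)\}\,dN_i$. Bounded covariates (A2) give $\|\boldsymbol{\psi}_i\|_\infty\le 2K$.

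The second and third pieces are controlled by the Lipschitz dependence of $\bar{\mathbf X}_n$ on $\boldsymbol\beta$. This uses $S_n^{(0)}$ bounded away from zero, which holds with probability tending to one by (A3). It also uses $\|\hat{\boldsymbol\beta}-\boldsymbol\beta^0\|_1=O_p(s_0\sqrt{\log p/n})\to0$, which follows from the standard Cox Lasso rate under (A5). The third piece additionally needs uniform convergence of $S_n^{(1)}$ and $S_n^{(0)}$.

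The main obstacle is the first piece. Assumption (A6) is stated only in the entrywise sup-norm, and this gives just $|\mathbf e_j^\top(\hat\Theta-\Theta_0)\boldsymbol\psi_i|\le 2K\|\hat\Theta_{j\cdot}-\Theta_{0,j\cdot}\|_1$. For that bound to vanish one needs row-$\ell_1$ consistency. I would try first to read (A6) as the row-wise $\ell_1$ rate produced by the nodewise Lasso, namely $\|\hat\Theta_{j\cdot}-\Theta_{0,j\cdot}\|_1=O_p(s_j\sqrt{\log p/n})$ under row sparsity, citing \cite{vandegeer2014optimal}. Alternatively, I would use the quadratic form $(\hat\Theta_{j\cdot}-\Theta_{0,j\cdot})\hat\Sigma(\hat\Theta_{j\cdot}-\Theta_{0,j\cdot})^\top$, which is exactly $n^{-1}\sum_i(\cdot)^2$ of this piece, and bound it by an $\ell_2$ rate times $\lambda_{\max}(\hat\Sigma)$. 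In the fixed-$p$ regime, (A6) suffices directly.

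With $R_n^*=o_p^*(1)$ established, Lemma~\ref{app:lem:cond-slutsky} transfers the conditional limit from $Z_n^*$ to $\sqrt n(\tilde\beta_j^{\mathrm{WB}}-\tilde\beta_j)$ uniformly in $t$. The triangle inequality then concludes the proof.
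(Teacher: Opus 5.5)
Your proposal is correct and follows the same route as the paper's proof: reduction to leading sums, a conditional Lindeberg/Lyapunov CLT for $n^{-1/2}\sum_i\xi_i\varphi_{ij}$ from the fourth moments in (A7)--(A8), Lemma~\ref{app:lem:cond-slutsky} for the remainder, and the triangle inequality through $\mathcal N(0,\sigma_j^2)$; your conditional-Chebyshev bound $\E^*(R_n^{*2})=n^{-1}\sum_i a_i^2$ with the three-piece decomposition, and your caveat that (A6) must give row-wise $\ell_1$ consistency unless $p$ is fixed, supply the remainder step that the paper only asserts. One justification should be fixed: $\int\{\mathbf X_i-e(t)\}\,dN_i$ is not individually close to $\int\{\mathbf X_i-e(t)\}\,dM_i$, since they differ by the compensator $\int\{\mathbf X_i-e(t)\}Y_i(t)e^{\mathbf X_i^\top\boldsymbol\beta^0}\lambda_0(t)\,dt$, which does not vanish as $n\to\infty$; instead, derive $\E\varphi_{1j}=0$ and $\E\varphi_{1j}^2=\mathbf e_j^\top\Theta_0\Sigma_0\Theta_0\mathbf e_j$ from $\E\int h\,dN_1=\E\int h\,Y_1(t)e^{\mathbf X_1^\top\boldsymbol\beta^0}\lambda_0(t)\,dt$ with $h=\{\mathbf X_1-e(t)\}^{\otimes k}$, $k=1,2$, and the fact that $N_1$ jumps at most once, which is exactly why putting the multipliers on $dN_i$ rather than $dM_i$ still reproduces the martingale variance.
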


To transfer the conditional convergence of the leading bootstrap term to the
full bootstrap statistic, we use
Lemma~\ref{app:lem:cond-slutsky}.

\begin{proof}
Fix \(j\in\widehat M\) and recall
\[
\varphi_{ij}
=
\mathbf e_j^\top\Theta_0
\boldsymbol{\psi}_i(\boldsymbol{\beta}^0)
\]
from (A7). Define the leading terms
\[
U_{n,j}
:=
\frac{1}{\sqrt n}
\sum_{i=1}^n
\varphi_{ij},
\qquad
U_{n,j}^*
:=
\frac{1}{\sqrt n}
\sum_{i=1}^n
\xi_i\varphi_{ij},
\]
and the corresponding distribution functions
\begin{align*}
F_{n,j}(t)
&:=
P\!\left(
\sqrt n(\tilde\beta_j-\beta_j^0)\le t
\right),\\
F_{n,j}^*(t)
&:=
P^*\!\left(
\sqrt n(\tilde\beta_j^{\mathrm{WB}}-\tilde\beta_j)\le t
\right).
\end{align*}

\medskip
\noindent
\textit{Step 1: Reduction to the leading terms.}

By Theorem~\ref{app:thm:asymptotic_linearity} and assumptions (A1)--(A7),
\begin{equation}\label{app:eq:orig-lin}
\sqrt n(\tilde\beta_j-\beta_j^0)
=
U_{n,j}+r_{n,j},
\qquad
r_{n,j}=o_p(1).
\end{equation}
Moreover, the expansion above gives
\begin{equation}\label{app:eq:wb-lin2}
\sqrt n(\tilde\beta_j^{\mathrm{WB}}-\tilde\beta_j)
=
U_{n,j}^*+r_{n,j}^*,
\qquad
r_{n,j}^*=o_p^*(1).
\end{equation}

For the original statistic, ordinary Slutsky arguments together with the
continuity of the limiting distribution imply
\[
\sup_{t\in\mathbb R}
\left|
P\!\left(
\sqrt n(\tilde\beta_j-\beta_j^0)\le t
\right)
-
P(U_{n,j}\le t)
\right|
\to0.
\]
For the bootstrap statistic,
Lemma~\ref{app:lem:cond-slutsky} and
\eqref{app:eq:wb-lin2} yield
\[
\sup_{t\in\mathbb R}
\left|
P^*\!\left(
\sqrt n(\tilde\beta_j^{\mathrm{WB}}-\tilde\beta_j)\le t
\right)
-
P^*(U_{n,j}^*\le t)
\right|
\xrightarrow{p}0.
\]
Consequently, it remains to show
\begin{equation}\label{app:eq:suff}
\sup_{t\in\mathbb R}
\left|
P^*(U_{n,j}^*\le t)-P(U_{n,j}\le t)
\right|
\xrightarrow{p}0.
\end{equation}

\medskip
\noindent
\textit{Step 2: Limiting distribution of \(U_{n,j}\).}

By Theorem~\ref{app:thm:asymptotic_linearity} and
\eqref{app:eq:orig-lin},
\[
U_{n,j}
\Rightarrow
\mathcal N(0,\sigma_j^2),
\qquad
\sigma_j^2=\Var(\varphi_{1j}).
\]
Since the limiting distribution is continuous,
\begin{equation}\label{app:eq:F-limit}
\sup_{t\in\mathbb R}
\left|
P(U_{n,j}\le t)
-
\Phi(t/\sigma_j)
\right|
\to0,
\end{equation}
where \(\Phi\) denotes the standard normal distribution function.

\medskip
\noindent
\textit{Step 3: Conditional multiplier approximation.}

Conditionally on the observed data, the quantities
\(\{\varphi_{ij}\}_{i=1}^n\) are fixed, whereas the multipliers
\(\{\xi_i\}_{i=1}^n\) are i.i.d., independent of the data, and satisfy the
moment conditions in (A8). Hence
\[
\E^*(U_{n,j}^*)=0,
\qquad
\Var^*(U_{n,j}^*)
=
\frac{1}{n}
\sum_{i=1}^n
\varphi_{ij}^2.
\]
Under (A7) and the corresponding law-of-large-numbers condition,
\begin{equation}\label{app:eq:var-cons}
\Var^*(U_{n,j}^*)
=
\frac{1}{n}
\sum_{i=1}^n
\varphi_{ij}^2
\xrightarrow{p}
\sigma_j^2.
\end{equation}
Moreover, the conditional Lindeberg condition holds in probability. In
particular, using the fourth-moment conditions in (A7)--(A8), for every
\(\varepsilon>0\),
\[
\frac{1}{n}
\sum_{i=1}^n
\E^*\!\left[
\xi_i^2\varphi_{ij}^2
\mathbbm 1
\left\{
|\xi_i\varphi_{ij}|>\varepsilon\sqrt n
\right\}
\right]
\xrightarrow{p}0.
\]
A conditional multiplier central limit theorem therefore yields
\cite{pauly2011weighted}
\begin{equation}\label{app:eq:clt-cond}
U_{n,j}^*
\Rightarrow^*
\mathcal N(0,\sigma_j^2)
\qquad\text{in probability}.
\end{equation}
Since the limiting distribution is continuous,
\eqref{app:eq:clt-cond} implies
\begin{equation}\label{app:eq:kolm-cond}
\sup_{t\in\mathbb R}
\left|
P^*(U_{n,j}^*\le t)
-
\Phi(t/\sigma_j)
\right|
\xrightarrow{p}0.
\end{equation}

\medskip
\noindent
\textit{Step 4: Conclusion.}

Combining \eqref{app:eq:kolm-cond} with
\eqref{app:eq:F-limit} and applying the triangle inequality gives
\begin{align*}
\sup_{t\in\mathbb R}
\left|
P^*(U_{n,j}^*\le t)-P(U_{n,j}\le t)
\right|
&\le
\sup_t
\left|
P^*(U_{n,j}^*\le t)
-
\Phi(t/\sigma_j)
\right|\\
&\quad+
\sup_t
\left|
P(U_{n,j}\le t)
-
\Phi(t/\sigma_j)
\right|
\xrightarrow{p}0.
\end{align*}
This proves \eqref{app:eq:suff}. Together with the reductions in Step~1,
the asserted bootstrap consistency follows.
\end{proof}
\subsubsection{Consistency of the wild bootstrap variance estimate}

Let $\tilde\beta_j^{\mathrm{WB},(1)},\ldots,\tilde\beta_j^{\mathrm{WB},(B)}$
be WB replicates and define the empirical variance
\begin{align*}
\hat\sigma_{j,\mathrm{WB}}^2
&=
\frac{1}{B-1}\sum_{b=1}^B
(\tilde\beta_{j}^{\mathrm{WB},(b)}-\bar\beta_{j}^{\mathrm{WB}})^2,\\
\bar\beta_{j}^{\mathrm{WB}}
&=
\frac{1}{B}\sum_{b=1}^B \tilde\beta_{j}^{\mathrm{WB},(b)}.
\end{align*}

\begin{theorem}[Wild bootstrap variance consistency]\label{app:thm:wb-var}
Assume (A1)--(A8) and $B\to\infty$. Then, for each fixed $j\in\widehat M$,
\[
\hat\sigma_{j,\mathrm{WB}}^2 \xrightarrow{p} \sigma_j^2/n.
\]
\end{theorem}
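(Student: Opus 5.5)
Since the bootstrap replicates are computed with $\hat{\boldsymbol{\beta}}$ and $\hat\Theta$ held fixed, the plan is to split the error into a Monte Carlo part (finite $B$) and a data-level part (the exact conditional variance). Conditionally on $\mathcal D_n$, the replicates $\tilde\beta_j^{\mathrm{WB},(b)}$, $b=1,\ldots,B$, are i.i.d. Their common conditional variance is
\[
\Var^*\bigl(\tilde\beta_j^{\mathrm{WB}}\bigr)
=\frac{1}{n^2}\sum_{i=1}^n\bigl(\mathbf e_j^\top\hat\Theta\boldsymbol{\psi}_i(\hat{\boldsymbol{\beta}})\bigr)^2
=:\frac{\hat v_{n,j}}{n},
\qquad
\hat v_{n,j}:=\frac1n\sum_{i=1}^n\hat\varphi_{ij}^2,\quad \hat\varphi_{ij}:=\mathbf e_j^\top\hat\Theta\boldsymbol{\psi}_i(\hat{\boldsymbol{\beta}}).
\]
This uses $\E(\xi_i)=0$, $\E(\xi_i^2)=1$ and independence of the multipliers from (A8). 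Because both sides shrink like $1/n$, the meaningful statement is $n\hat\sigma_{j,\mathrm{WB}}^2-\sigma_j^2\xrightarrow{p}0$, and I will prove it in that normalized form. Writing $n\hat\sigma^2_{j,\mathrm{WB}}-\sigma_j^2=(n\hat\sigma^2_{j,\mathrm{WB}}-\hat v_{n,j})+(\hat v_{n,j}-\sigma_j^2)$, it suffices to show that each bracket is $o_p(1)$.

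\emph{Monte Carlo term.} Conditionally on the data, $n\hat\sigma^2_{j,\mathrm{WB}}$ is the unbiased sample variance of the i.i.d.\ variables $Z_b=\sqrt n(\tilde\beta_j^{\mathrm{WB},(b)}-\tilde\beta_j)=n^{-1/2}\sum_i\xi_i^{(b)}\hat\varphi_{ij}$. Its conditional mean is exactly $\hat v_{n,j}$. Its conditional variance is bounded by $C\,\E^*(Z_1^4)/B$, and a direct moment computation gives
\[
\E^*(Z_1^4)\le 3\hat v_{n,j}^2+\E(\xi^4)\,n^{-2}\sum_i\hat\varphi_{ij}^4\le C'\hat v_{n,j}^2 .
\]
Conditional Chebyshev then yields $P^*(|n\hat\sigma^2_{j,\mathrm{WB}}-\hat v_{n,j}|>\varepsilon)\le C''\hat v_{n,j}^2/(B\varepsilon^2)$. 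Once the second step shows $\hat v_{n,j}=O_p(1)$, this bound is $o_p(1)$ as $B\to\infty$, and taking expectations removes the conditioning.

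\emph{Plug-in term.} Split
\[
\hat v_{n,j}-\sigma_j^2=\Bigl(\tfrac1n\sum_i\hat\varphi_{ij}^2-\tfrac1n\sum_i\varphi_{ij}^2\Bigr)+\Bigl(\tfrac1n\sum_i\varphi_{ij}^2-\sigma_j^2\Bigr).
\]
The second piece is $o_p(1)$ by the weak law of large numbers under (A7), as already used in \eqref{app:eq:var-cons}. Here I use $\E\varphi_{1j}=0$, which holds by the martingale representation, so $\E\varphi_{1j}^2=\sigma_j^2$. For the first piece, note $|a^2-b^2|\le|a-b|^2+2|b||a-b|$, so by Cauchy--Schwarz it suffices to prove $n^{-1}\sum_i(\hat\varphi_{ij}-\varphi_{ij})^2=o_p(1)$. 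Decompose
\[
\hat\varphi_{ij}-\varphi_{ij}=\mathbf e_j^\top(\hat\Theta-\Theta_0)\boldsymbol{\psi}_i(\hat{\boldsymbol{\beta}})+\mathbf e_j^\top\Theta_0\bigl(\boldsymbol{\psi}_i(\hat{\boldsymbol{\beta}})-\boldsymbol{\psi}_i(\boldsymbol{\beta}^0)\bigr).
\]
By (A2), $\|\boldsymbol{\psi}_i(\boldsymbol\beta)\|_\infty\le 2K$ uniformly, because $\bar{\mathbf X}_n$ is a convex combination of covariates. The first term is therefore bounded by $2K\|\hat\Theta_{j\cdot}-\Theta_{0,j\cdot}\|_1$, which (A6) makes small, taking $\|\cdot\|_\infty$ there as the induced matrix norm (maximum row $\ell_1$-sum). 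For the second term, only the risk-set average changes, and $\bar{\mathbf X}_n(t,\cdot)$ is Lipschitz in the linear predictor on the bounded set. This gives
\[
\sup_t\|\bar{\mathbf X}_n(t,\hat{\boldsymbol{\beta}})-\bar{\mathbf X}_n(t,\boldsymbol{\beta}^0)\|_\infty\lesssim K^2\|\hat{\boldsymbol{\beta}}-\boldsymbol{\beta}^0\|_1 ,
\]
provided $\|\hat{\boldsymbol\beta}-\boldsymbol\beta^0\|_1$ stays bounded so that the exponential weights are controlled, together with (A3) to keep $S_n^{(0)}$ away from zero. The standard Cox Lasso rate $\|\hat{\boldsymbol{\beta}}-\boldsymbol{\beta}^0\|_1=O_p(s_0\sqrt{\log p/n})$ is $o_p(1)$ under (A5). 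Multiplying by $\|\Theta_{0,j\cdot}\|_1$ then gives $o_p(1)$ uniformly in $i$, which also delivers $\hat v_{n,j}=O_p(1)$ as needed in the first step.

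\emph{Main obstacle.} The delicate point is this uniform plug-in control. (A6) as stated only gives an entrywise or row-sum bound, and $\|\Theta_{0,j\cdot}\|_1$ must be finite for the row $j$ under consideration. If (A6) is read entrywise, I would use instead the row-sparsity of the nodewise estimator, implicit in the nodewise Lasso rates cited with (A6), to turn entrywise error into $\ell_1$ row error. The Cox Lasso $\ell_1$-rate is imported from the same black-box results used for Theorem~\ref{app:thm:asymptotic_linearity}. Everything else is routine conditional Chebyshev and the law of large numbers.
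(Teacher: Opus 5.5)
Your proposal is correct. Its skeleton matches the paper's: Monte Carlo error, convergence of the conditional variance, and a law of large numbers. But you carry out the central step differently and more rigorously.

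The paper writes $\Var^*\{\sqrt n(\tilde\beta_j^{\mathrm{WB}}-\tilde\beta_j)\}=n^{-1}\sum_i\varphi_{ij}^2+o_p(1)$ ``using the negligible remainder'' of the bootstrap linearization. It then cites the literature for the $B\to\infty$ step. You instead use that the wild-bootstrap statistic is exactly linear in the multipliers, so its conditional variance is exactly $\hat v_{n,j}$. You then control $\hat v_{n,j}-n^{-1}\sum_i\varphi_{ij}^2$ through $\max_i|\hat\varphi_{ij}-\varphi_{ij}|$, using (A2), (A6) and the Cox Lasso $\ell_1$-rate.

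This difference matters. The paper's remainder $r_{n,j}^*=n^{-1/2}\sum_i\xi_i(\hat\varphi_{ij}-\varphi_{ij})$ is only asserted to be $o_p^*(1)$, and that alone does not control a conditional second moment. The quantity you bound, $n^{-1}\sum_i(\hat\varphi_{ij}-\varphi_{ij})^2$, is exactly $\E^*(r_{n,j}^{*2})$. That is the $L^2$ control variance consistency actually requires.

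Your fourth-moment Chebyshev bound $C\hat v_{n,j}^2/(B\varepsilon^2)$ also makes the Monte Carlo step explicit. It shows the step holds for any $B=B_n\to\infty$, with no rate relative to $n$. Your normalized target $n\hat\sigma^2_{j,\mathrm{WB}}\xrightarrow{p}\sigma_j^2$ is the form Theorem~\ref{app:thm:stud-wb} actually uses; the literal statement is trivial, since both sides vanish.

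The cost is that you must surface inputs the paper leaves inside black boxes:
\begin{itemize}
\item the $\ell_1$-rate of $\hat{\boldsymbol\beta}$;
\item a row-wise reading of (A6);
\item $\|\Theta_{0,j\cdot}\|_1<\infty$.
\end{itemize}
For fixed $p$ these are automatic, so the obstacle you flag only matters when $p$ grows.

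Two minor points.

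First, your Lipschitz step needs neither a bound on $\|\hat{\boldsymbol\beta}-\boldsymbol\beta^0\|_1$ nor (A3). The Jacobian of $\boldsymbol\beta\mapsto\bar{\mathbf X}_n(t,\boldsymbol\beta)$ is the weighted covariance matrix of the at-risk covariates, whose entries are bounded by $K^2$ for every $\boldsymbol\beta$. The risk set is nonempty at the jump of $N_i$. So the mean value theorem gives the bound $K^2\|\hat{\boldsymbol\beta}-\boldsymbol\beta^0\|_1$ directly.

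Second, both you and the paper apply the law of large numbers to the $\varphi_{ij}$. As defined, these contain the empirical $\bar{\mathbf X}_n(t,\boldsymbol\beta^0)$, so they are exchangeable rather than independent. Strictly, one should first replace $\bar{\mathbf X}_n(t,\boldsymbol\beta^0)$ by its population counterpart. Your sup-norm device handles this, given uniform convergence of the risk-set averages.
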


\begin{proof}
Conditionally on the data,
\[
\Var^*\!\big(\sqrt n(\tilde\beta_j^{\mathrm{WB}}-\tilde\beta_j)\big)
=
\frac{1}{n}\sum_{i=1}^n \varphi_{ij}^2 + o_p(1),
\]
using $\E(\xi_i^2)=1$ and the negligible remainder.
By the law of large numbers and (A7),
$n^{-1}\sum_{i=1}^n \varphi_{ij}^2 \to_p \E(\varphi_{1j}^2)=\sigma_j^2$.
Hence the conditional variance of $(\tilde\beta_j^{\mathrm{WB}}-\tilde\beta_j)$
converges to $\sigma_j^2/n$. The sample variance over $B$ replicates consistently
estimates the conditional variance as $B\to\infty$; see, e.g.,
\cite{vanderVaartWellner1996,kosorok2008empirical,bluhmki2019wild}.
\end{proof}

\subsubsection{Validity of the standardized wild bootstrap statistic}

Define the standardized statistics
\begin{align*}
T_j
&=
\frac{\sqrt n(\tilde\beta_j-\beta_j^0)}{\sigma_j},
\\[2mm]
T_j^{\mathrm{WB}}
&=
\frac{\sqrt n(\tilde\beta_j^{\mathrm{WB}}-\tilde\beta_j)}{\sqrt n\,\hat\sigma_{j,\mathrm{WB}}}
=
\frac{\tilde\beta_j^{\mathrm{WB}}-\tilde\beta_j}{\hat\sigma_{j,\mathrm{WB}}}.
\end{align*}

\begin{theorem}[Validity of the standardized wild bootstrap statistic]\label{app:thm:stud-wb}
Assume (A1)--(A8) and $B\to\infty$. For each fixed $j\in\widehat M$,
\[
\sup_{t\in\mathbb R}
\left|
P^*\!\left(T_j^{\mathrm{WB}}\le t\right)
-
P\!\left(T_j\le t\right)
\right|
\xrightarrow{p} 0.
\]
\end{theorem}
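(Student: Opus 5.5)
The proof combines Theorem~\ref{app:thm:wb-cons} with Theorem~\ref{app:thm:wb-var} through a conditional Slutsky argument for a rescaling rather than an additive perturbation.

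\textbf{Step 1: reduce both statistics to a common limit.} Since $T_j=\sqrt n(\tilde\beta_j-\beta_j^0)/\sigma_j$ and $\sigma_j>0$ is a fixed constant, Theorem~\ref{app:thm:asymptotic_linearity} together with P\'olya's theorem gives
\[
\sup_t|P(T_j\le t)-\Phi(t)|\to0 .
\]
By the triangle inequality, it therefore suffices to show that
\[
\sup_t|P^*(T_j^{\mathrm{WB}}\le t)-\Phi(t)|\xrightarrow{p}0 .
\]

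\textbf{Step 2: treat $\hat\sigma_{j,\mathrm{WB}}$ as data-measurable.} I write
\[
T_j^{\mathrm{WB}}=\frac{S_n^*}{\sqrt n\,\hat\sigma_{j,\mathrm{WB}}},
\qquad S_n^*:=\sqrt n(\tilde\beta_j^{\mathrm{WB}}-\tilde\beta_j).
\]
The scale $\hat\sigma_{j,\mathrm{WB}}$ is computed from the $B$ Monte Carlo replicates. For the probability $P^*$ of a generic new replicate, I treat it as fixed. Formally, I condition on the data and on the $B$ replicates used to form $\hat\sigma_{j,\mathrm{WB}}$, and take the new replicate to be independent of them. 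This reading matches the remark in Section~\ref{sec:ci} that the scale is common to all replications.

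Under this convention, for any $t$,
\[
P^*(T_j^{\mathrm{WB}}\le t)=F_{n,j}^*(t\,a_n),
\qquad a_n:=\sqrt n\,\hat\sigma_{j,\mathrm{WB}} .
\]
Theorem~\ref{app:thm:wb-var} gives $a_n^2\xrightarrow{p}\sigma_j^2$, hence $a_n\xrightarrow{p}\sigma_j$ by the continuous mapping theorem.

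\textbf{Step 3: handle the random rescaling.} Theorem~\ref{app:thm:wb-cons} combined with \eqref{app:eq:F-limit} yields
\[
\sup_u|F_{n,j}^*(u)-\Phi(u/\sigma_j)|\xrightarrow{p}0 .
\]
Taking $u=ta_n$ and using that the supremum over $t$ is bounded by the supremum over $u$, I get
\[
\sup_t|F_{n,j}^*(ta_n)-\Phi(ta_n/\sigma_j)|\xrightarrow{p}0 .
\]
It remains to bound $\sup_t|\Phi(t\,c)-\Phi(t)|$ with $c=a_n/\sigma_j\xrightarrow{p}1$. The standard estimate is
\[
\sup_t|\Phi(ct)-\Phi(t)|\le \frac{|c-1|}{\min(c,1)}\,\sup_x|x\phi(x)| ,
\]
obtained from the mean value theorem applied to $\Phi(e^{s}t)$ in $s$. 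This bound tends to zero in probability, and combining the pieces with the triangle inequality completes the proof.

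\textbf{Main obstacle.} The delicate point is the order of limits in $n$ and $B$. Theorem~\ref{app:thm:wb-var} is stated with $B\to\infty$, so I would take $B=B_n\to\infty$ jointly with $n$. The convergence $a_n\xrightarrow{p}\sigma_j$ then holds unconditionally. Since $\sup_t|P^*(T_j^{\mathrm{WB}}\le t)-\Phi(t)|$ is bounded by a continuous function of $a_n$ and of the Kolmogorov error from Theorem~\ref{app:thm:wb-cons}, and both tend to zero in probability, convergence in probability of the supremum follows without any further uniformity argument.
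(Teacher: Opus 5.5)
Your proof is correct. It uses the same two inputs as the paper: Theorem~\ref{app:thm:wb-cons}, and Theorem~\ref{app:thm:wb-var} read as $n\hat\sigma_{j,\mathrm{WB}}^2\xrightarrow{p}\sigma_j^2$. You combine them differently, though.

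The paper handles the studentization in one line by appealing to the conditional Slutsky Lemma~\ref{app:lem:cond-slutsky}. That lemma is stated for an additive remainder $Z_n^*+R_n^*$, so using it for a ratio implicitly requires writing $T_j^{\mathrm{WB}}=S_n^*/\sigma_j+S_n^*(\sigma_j-a_n)/(a_n\sigma_j)$ and showing the second term is $o_p^*(1)$. That in turn needs conditional tightness of $S_n^*$ and convergence of $a_n$ in conditional probability. The paper also does not say whether $\hat\sigma_{j,\mathrm{WB}}$ is random under $P^*$, or how $B$ and $n$ grow together.

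You instead freeze the scale by conditioning on the $B$ replicates. This gives the exact identity $P^*(T_j^{\mathrm{WB}}\le t)=F_{n,j}^*(ta_n)$. You then control the rescaling deterministically through $\sup_t|\Phi(ct)-\Phi(t)|\le|\log c|\sup_x|x\phi(x)|$. This yields an explicit bound, avoids any tightness argument, and makes the choice $B=B_n\to\infty$ visible.

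The cost is a conditioning convention that differs slightly from the paper's definition of $P^*$, which conditions on $\mathcal D_n$ alone. This is harmless. Suppose the new replicate is independent of the $B$ replicates and you condition only on the data. Then $P^*(T_j^{\mathrm{WB}}\le t)=\E^*\{F_{n,j}^*(ta_n)\}$. Your error bound, truncated at one, is bounded and tends to zero in probability. Hence its conditional expectation tends to zero in $L^1$, and therefore in probability.

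One minor citation point: \eqref{app:eq:F-limit} is stated for $U_{n,j}$, not for $\sqrt n(\tilde\beta_j-\beta_j^0)$. The uniform approximation you actually need, $\sup_t|F_{n,j}(t)-\Phi(t/\sigma_j)|\to0$, is exactly your Step~1, so nothing is lost.
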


\begin{proof}
The result follows from Theorems~\ref{app:thm:wb-cons}
and~\ref{app:thm:wb-var}.
Indeed, Theorem~\ref{app:thm:wb-cons} establishes bootstrap consistency
for the unstandardized statistic, while Theorem~\ref{app:thm:wb-var} gives
\[
\frac{\sqrt n\,\hat\sigma_{j,\mathrm{WB}}}{\sigma_j}
\xrightarrow{p}1.
\]
Hence, by the conditional Slutsky argument of
Lemma~\ref{app:lem:cond-slutsky}, replacing $\sigma_j$ by
$\sqrt n\,\hat\sigma_{j,\mathrm{WB}}$ does not affect the limiting
conditional distribution. Therefore,
\[
\sup_{t\in\mathbb R}
\left|
P^*\!\left(T_j^{\mathrm{WB}}\le t\right)
-
P\!\left(T_j\le t\right)
\right|
\xrightarrow{p}0.
\]
\end{proof}

\begin{corollary}[Asymptotic coverage of the basic wild-bootstrap interval]
\label{app:cor:basic-wb}
Assume (A1)--(A8), $B\to\infty$, and continuity of the limiting distribution
of $T_j$. Let $\hat q_{j,\gamma}^{\Delta,\mathrm{WB}}$ denote the empirical
$\gamma$-quantile of the wild bootstrap deviations
$\Delta_j^{\mathrm{WB},(b)}=\tilde\beta_j^{\mathrm{WB},(b)}-\tilde\beta_j$.
Then the basic wild-bootstrap interval
\[
\CI_j^{\mathrm{basic,WB}}(1-\alpha)
=
\Big[
\tilde\beta_j-\hat q_{j,1-\alpha/2}^{\Delta,\mathrm{WB}},
\ \tilde\beta_j-\hat q_{j,\alpha/2}^{\Delta,\mathrm{WB}}
\Big]
\]
satisfies
\[
P\!\left(
\beta_j^0\in\CI_j^{\mathrm{basic,WB}}(1-\alpha)
\right)
\to 1-\alpha.
\]
\end{corollary}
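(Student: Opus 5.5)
The plan is the standard quantile-inversion argument, built on Theorem~\ref{app:thm:wb-cons} (unstandardized consistency) rather than the standardized version. Write $F_{n,j}(t)=P(\sqrt n(\tilde\beta_j-\beta_j^0)\le t)$ and $F^*_{n,j}(t)=P^*(\sqrt n(\tilde\beta_j^{\mathrm{WB}}-\tilde\beta_j)\le t)$. Because of the $\sqrt n$ scaling, the empirical quantiles satisfy $\sqrt n\,\hat q^{\Delta,\mathrm{WB}}_{j,\gamma}=\hat c^*_{\gamma}$, where $\hat c^*_\gamma$ is the empirical $\gamma$-quantile of the $B$ scaled deviations. The coverage event can then be rewritten exactly as
\[
\Big\{\hat c^*_{\alpha/2}\le \sqrt n(\tilde\beta_j-\beta_j^0)\le \hat c^*_{1-\alpha/2}\Big\}.
\]
It therefore suffices to show $\hat c^*_\gamma\xrightarrow{p}\sigma_j z_\gamma$ for $\gamma\in\{\alpha/2,1-\alpha/2\}$, where $z_\gamma=\Phi^{-1}(\gamma)$.

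\textbf{Step 1: the conditional quantiles.} Let $c^*_{n,\gamma}$ be the $\gamma$-quantile of $F^*_{n,j}$. By Theorem~\ref{app:thm:wb-cons} combined with \eqref{app:eq:F-limit}, we have $\sup_t|F^*_{n,j}(t)-\Phi(t/\sigma_j)|\xrightarrow{p}0$. Since $\Phi(\cdot/\sigma_j)$ is continuous and strictly increasing at $\sigma_j z_\gamma$ (this needs $\sigma_j>0$, which follows from (A4) and the nondegeneracy of $\varphi_{1j}$), the usual quantile-convergence lemma gives $c^*_{n,\gamma}\xrightarrow{p}\sigma_j z_\gamma$. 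To prove it, fix $\eta>0$. On the event where the sup-distance is below $\min\{\gamma-\Phi(z_\gamma-\eta/\sigma_j),\ \Phi(z_\gamma+\eta/\sigma_j)-\gamma\}$, the quantile $c^*_{n,\gamma}$ lies within $\eta$ of $\sigma_j z_\gamma$.

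\textbf{Step 2: Monte Carlo error.} Conditionally on $\mathcal D_n$, the $B$ replicates are i.i.d.\ with distribution function $F^*_{n,j}$. By the Dvoretzky--Kiefer--Wolfowitz inequality, the empirical distribution function $\hat F^*_B$ satisfies
\[
P^*\Big(\sup_t|\hat F^*_B-F^*_{n,j}|>\epsilon\Big)\le 2e^{-2B\epsilon^2}.
\]
This bound holds uniformly in the data, so as $B\to\infty$ we get $\sup_t|\hat F^*_B(t)-\Phi(t/\sigma_j)|\xrightarrow{p}0$ unconditionally. Repeating the argument of Step 1 with $\hat F^*_B$ in place of $F^*_{n,j}$ yields $\hat c^*_\gamma\xrightarrow{p}\sigma_j z_\gamma$. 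This step is where the assumption $B\to\infty$ enters.

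\textbf{Step 3: conclusion.} By Theorem~\ref{app:thm:asymptotic_linearity}, $\sqrt n(\tilde\beta_j-\beta_j^0)\Rightarrow\mathcal N(0,\sigma_j^2)$. The quantile estimators converge in probability to constants, so by the ordinary Slutsky theorem the pair $(\sqrt n(\tilde\beta_j-\beta_j^0),\hat c^*_{\alpha/2},\hat c^*_{1-\alpha/2})$ converges jointly in distribution to $(Z,\sigma_j z_{\alpha/2},\sigma_j z_{1-\alpha/2})$ with $Z\sim\mathcal N(0,\sigma_j^2)$. The limiting boundary event has probability zero, so by the portmanteau theorem
\[
P\big(\beta_j^0\in\CI^{\mathrm{basic,WB}}_j(1-\alpha)\big)\to\Phi(z_{1-\alpha/2})-\Phi(z_{\alpha/2})=1-\alpha.
\]
No independence between the data and the quantile estimators is needed, because the latter have degenerate limits.

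\textbf{Main obstacle.} The delicate point is the convergence in Step 1. It requires a limit that is not only continuous but also strictly increasing at the target quantiles. The corollary's continuity hypothesis alone does not guarantee this, so the argument relies on normality of the limit with $\sigma_j>0$, which is exactly the limit supplied by Theorem~\ref{app:thm:asymptotic_linearity}. A secondary, smaller issue is the interplay between $B$ and $n$. The DKW bound in Step 2 is uniform in the data, so $B$ may grow at any rate along with $n$.
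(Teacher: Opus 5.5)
Your proof is correct, but it takes a different route from the paper. The paper obtains the corollary from Theorem~\ref{app:thm:stud-wb}. There, variance consistency (Theorem~\ref{app:thm:wb-var}) and Lemma~\ref{app:lem:cond-slutsky} transfer the consistency of Theorem~\ref{app:thm:wb-cons} to the standardized statistic $T_j^{\mathrm{WB}}$. The paper then observes that $\hat\sigma_{j,\mathrm{WB}}$ is common to all replicates, so inverting the quantiles of $T_j^{\mathrm{WB}}$ reproduces the basic interval. Two steps are left implicit: the passage from Kolmogorov-distance consistency to coverage, and the Monte Carlo error of the $B$ empirical quantiles.

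You go straight from Theorem~\ref{app:thm:wb-cons} to coverage. This is the natural choice precisely because the scale cancels: variance consistency plays no logical role for the basic interval, so you need only Theorems~\ref{app:thm:asymptotic_linearity} and~\ref{app:thm:wb-cons}. In exchange, you make three steps explicit:
\begin{enumerate}
\item the quantile-convergence step;
\item the DKW control of the finite-$B$ empirical quantiles, which is where $B\to\infty$ enters, with no rate restriction relative to $n$;
\item the Slutsky/portmanteau conclusion.
\end{enumerate}
Your route also avoids leaning on $\hat\sigma_{j,\mathrm{WB}}^2\xrightarrow{p}\sigma_j^2/n$, whose two sides both vanish; the meaningful version is $n\hat\sigma_{j,\mathrm{WB}}^2\xrightarrow{p}\sigma_j^2$. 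It likewise avoids $P^*(T_j^{\mathrm{WB}}\le t)$, which is ambiguous when $\hat\sigma_{j,\mathrm{WB}}$ is built from the same replicates. The paper's route mainly buys the interpretation of the basic interval as the inversion of an approximately pivotal statistic.

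A few minor points:
\begin{enumerate}
\item \eqref{app:eq:F-limit} concerns $U_{n,j}$, not $F_{n,j}$. To get $\sup_t|F^*_{n,j}(t)-\Phi(t/\sigma_j)|\xrightarrow{p}0$, combine it with Step~1 of the proof of Theorem~\ref{app:thm:wb-cons}, or use \eqref{app:eq:kolm-cond} together with Lemma~\ref{app:lem:cond-slutsky}.
\item An interpolated empirical quantile lies between generalized-inverse quantiles of $\hat F^*_B$ at levels within $O(1/B)$ of $\gamma$, so Step~2 is unaffected.
\item Appealing to ``nondegeneracy of $\varphi_{1j}$'' for $\sigma_j>0$ is circular. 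A cleaner justification is $\sigma_j^2=(\Theta_0)_{jj}\ge 1/c_{\max}$, via the information identity and (A4).
\item Your ``main obstacle'' is specific to the quantile-convergence route. Arguing via $\hat F^*_B\{\sqrt n(\tilde\beta_j-\beta_j^0)\}$, which tends in distribution to a uniform variable, yields the coverage under mere continuity of the limit.
\end{enumerate}
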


The connection with Theorem~\ref{app:thm:stud-wb} is algebraic. Because
$\hat\sigma_{j,\mathrm{WB}}$ is common to all bootstrap replications,
quantiles of
$T_j^{\mathrm{WB}}=(\tilde\beta_j^{\mathrm{WB}}-\tilde\beta_j)/
\hat\sigma_{j,\mathrm{WB}}$ are the corresponding deviation quantiles divided
by $\hat\sigma_{j,\mathrm{WB}}$. Hence quantile inversion of the standardized
statistic yields exactly the basic interval above.
\endgroup

\begingroup
\section{Asymptotic justification for Efron's bootstrap} \label{app:sec:asymptotics_eb}

Let $(W_1,\dots,W_n)\sim \mathrm{Mult}\!\left(n;\frac1n,\dots,\frac1n\right)$ be
multinomial bootstrap weights, independent of the data, and define
$\bar W_i:=W_i-1$.
Define the analogue for Efron's bootstrap (conditionally on the data) by
the weighted version of the linear term:
\[
\sqrt n(\tilde\beta_j^{\mathrm{Efr}}-\tilde\beta_j)
=
\frac{1}{\sqrt n}\sum_{i=1}^n \bar W_i \,\varphi_{ij}
+o_p^*(1),
\]
where $o_p^*(1)$ denotes convergence to $0$ in conditional probability.
The following results are established under assumptions (A1)--(A7).
For Efron's bootstrap we additionally require the following condition.

\subsubsection*{(A9) Efron's bootstrap weights}
Let
\[
(W_1,\dots,W_n)
\sim
\mathrm{Mult}\!\left(n;\frac1n,\dots,\frac1n\right)
\]
be independent of the data, and set $\bar W_i=W_i-1$.

\subsection{Proofs}

This subsection provides the detailed arguments underlying the asymptotic
validity of Efron's bootstrap. We first establish consistency of the bootstrap
distribution, then prove consistency of the corresponding bootstrap variance
estimator, and finally combine both results to establish validity of the
standardized bootstrap statistic and the resulting basic bootstrap interval.

\subsubsection{Consistency of Efron's bootstrap}

We write the argument for Efron's bootstrap in the same two-layer form
as the proof for the wild bootstrap.
The first layer is Cox-specific: Theorem~\ref{app:thm:asymptotic_linearity}
and the bootstrap linearization reduce the original and bootstrap statistics
to leading sums of influence contributions, up to negligible remainders.
For the second layer, we apply Theorem~4.1 of
Pauly~\cite{pauly2011weighted} to the resulting centered
multinomially weighted sum.
Below, we explicitly identify the statistic for Efron's bootstrap with
Pauly's weighted-resampling statistic and verify the required conditions.

\begin{theorem}[Efron's bootstrap consistency]
\label{app:thm:efr-cons}
Assume (A1)--(A7) and (A9). For each fixed $j\in\widehat M$,
\begin{align*}
\sup_{t\in\mathbb R}
\bigg|
& P^*\!\left(
\sqrt n(\tilde\beta_j^{\mathrm{Efr}}-\tilde\beta_j)\le t
\right) \\
&\quad -
P\!\left(
\sqrt n(\tilde\beta_j-\beta_j^0)\le t
\right)
\bigg|
\xrightarrow{p}0.
\end{align*}
\end{theorem}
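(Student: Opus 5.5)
The proof will follow the two-layer structure of Theorem~\ref{app:thm:wb-cons}, with only the conditional central limit step changed. Set
\[
U_{n,j}=n^{-1/2}\sum_{i=1}^n\varphi_{ij},
\qquad
V_{n,j}^*=n^{-1/2}\sum_{i=1}^n\bar W_i\varphi_{ij}.
\]
By Theorem~\ref{app:thm:asymptotic_linearity}, $\sqrt n(\tilde\beta_j-\beta_j^0)=U_{n,j}+o_p(1)$, and Step~1 of the wild-bootstrap proof gives
\[
\sup_t\bigl|P(\sqrt n(\tilde\beta_j-\beta_j^0)\le t)-\Phi(t/\sigma_j)\bigr|\to0 .
\]
On the bootstrap side, the displayed linearization in Section~\ref{app:sec:asymptotics_eb} states
\[
\sqrt n(\tilde\beta_j^{\mathrm{Efr}}-\tilde\beta_j)=V_{n,j}^*+o_p^*(1).
\]
I would verify this by writing the bootstrap statistic as $\mathbf e_j^\top\hat\Theta\,n^{-1/2}\sum_i\bar W_i\boldsymbol\psi_i(\hat{\boldsymbol\beta})$ and splitting off three differences: $\hat\Theta-\Theta_0$, $\boldsymbol\psi_i(\hat{\boldsymbol\beta})-\boldsymbol\psi_i(\boldsymbol\beta^0)$, and $\Theta_0$ applied to the leading sum. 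For the first two, bound the conditional second moment. Since $\Var^*(\bar W_i)=1-1/n$ and $\Cov^*(\bar W_i,\bar W_k)=-1/n$, we have $\E^*\|n^{-1/2}\sum_i\bar W_i\mathbf a_i\|^2\le n^{-1}\sum_i\|\mathbf a_i\|^2$ for any fixed vectors $\mathbf a_i$. Combined with (A2), (A5) and (A6), this gives conditional negligibility. Lemma~\ref{app:lem:cond-slutsky} then reduces the claim to proving
\[
\sup_t\bigl|P^*(V_{n,j}^*\le t)-\Phi(t/\sigma_j)\bigr|\xrightarrow{p}0 .
\]
The triangle inequality, exactly as in Step~4, finishes the argument.

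For the conditional CLT I would identify $V_{n,j}^*$ with Pauly's exchangeably weighted statistic. Since $\sum_i\bar W_i=0$, we have
\[
V_{n,j}^*=n^{-1/2}\sum_i\bar W_i(\varphi_{ij}-\bar\varphi_{\cdot j}),
\qquad
\bar\varphi_{\cdot j}=n^{-1}\sum_i\varphi_{ij}.
\]
This is precisely the form of Theorem~4.1 in \cite{pauly2011weighted}, with exchangeable weights $\bar W_i$ and a centred triangular array of scores. Its conditions on the data are the following. First, $n^{-1}\sum_i(\varphi_{ij}-\bar\varphi_{\cdot j})^2\xrightarrow{p}\sigma_j^2$; this holds by the weak law of large numbers under (A7) and $\E\varphi_{1j}=0$. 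Second, $\max_i|\varphi_{ij}-\bar\varphi_{\cdot j}|/\sqrt n\xrightarrow{p}0$; this follows from the finite fourth moment via $P(\max_i|\varphi_{ij}|>\varepsilon\sqrt n)\le n\,\E\varphi_{1j}^4/(\varepsilon^4n^2)$. On the weight side, the conditions are $n^{-1}\sum_i\bar W_i^2\to1$ in probability and $\max_i|\bar W_i|/\sqrt n\to0$ in probability, both standard for $\mathrm{Mult}(n;1/n,\dots,1/n)$ since the counts are asymptotically Poisson(1) with $\max_iW_i=O_p(\log n)$. These are the Hájek-type conditions for exchangeable weights, and they hold under (A9).

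The main obstacle is this matching step: stating precisely which version of Pauly's weight conditions applies and confirming that the limiting conditional variance is exactly $\sigma_j^2$ and not $(1-1/n)\sigma_j^2$ or something similar. Should direct citation prove awkward, I would fall back on the Poissonization or Hájek argument. The idea is to represent the multinomial weights through i.i.d.\ Poisson(1) counts conditioned on their sum, or to apply Hájek's CLT for sampling with replacement from the finite population $\{\varphi_{ij}-\bar\varphi_{\cdot j}\}$. The Lindeberg condition for that finite-population CLT is exactly the max-condition established above. Continuity of the Gaussian limit then upgrades weak convergence in probability to Kolmogorov distance.
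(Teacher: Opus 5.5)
Your proposal is correct and follows the paper's proof essentially step for step. The shared steps are: reduction to the leading sums via Lemma~\ref{app:lem:cond-slutsky}; identification of $V_{n,j}^*$ with the statistic in Theorem~4.1 of Pauly through $\sum_i\bar W_i=0$; the max condition via the fourth-moment union bound and the variance condition via the law of large numbers; and the triangle inequality through $\Phi(t/\sigma_j)$. Your only additions are a sketch of the bootstrap linearization and an explicit check of the multinomial weight conditions, both of which the paper takes as given; note that your second-moment bound controls the $\hat\Theta-\Theta_0$ term only for fixed $p$, since $\E^*\|n^{-1/2}\sum_i\bar W_i\boldsymbol\psi_i(\hat{\boldsymbol\beta})\|_2^2$ can be of order $p$ and (A6) supplies no rate to absorb this if $p$ grows.
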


\begin{proof}
Fix $j\in\widehat M$ and define
\begin{align*}
\varphi_{ij}
&=\mathbf{e}_j^\top\Theta_0\boldsymbol{\psi}_i(\boldsymbol{\beta}^0),\\
U_{n,j}
&:=\frac{1}{\sqrt n}\sum_{i=1}^n\varphi_{ij},\\
U_{n,j}^{\mathrm{Efr},*}
&:=\frac{1}{\sqrt n}\sum_{i=1}^n(W_i-1)\varphi_{ij},
\end{align*}
where
\[
(W_1,\ldots,W_n)
\sim
\operatorname{Mult}\!\left(
n;\frac1n,\ldots,\frac1n
\right)
\]
is independent of the data.

By Theorem~\ref{app:thm:asymptotic_linearity} and the bootstrap
linearization,
\begin{align*}
\sqrt n(\tilde\beta_j-\beta_j^0)
&=
U_{n,j}+r_{n,j},\\
\sqrt n(\tilde\beta_j^{\mathrm{Efr}}-\tilde\beta_j)
&=
U_{n,j}^{\mathrm{Efr},*}+r_{n,j}^*,
\end{align*}
where
\[
r_{n,j}=o_p(1),
\qquad
r_{n,j}^*=o_p^*(1).
\]
By Lemma~\ref{app:lem:cond-slutsky}, it therefore suffices to
establish the corresponding distributional approximation for the
leading terms.

The unconditional convergence
\[
U_{n,j}\Rightarrow\mathcal N(0,\sigma_j^2)
\]
follows from the ordinary central limit theorem under (A1) and (A7),
as in the proof of the wild bootstrap consistency result.

For the conditional bootstrap approximation, we apply
Theorem~4.1 of Pauly~\cite{pauly2011weighted}.
Set
\[
Z_{n,i}
:=
\frac{\varphi_{ij}}{\sqrt n},
\qquad
W_{n,i}^{P}
:=
\frac{W_i-1}{\sqrt n},
\]
where the superscript $P$ distinguishes the weights in Pauly's
notation from the multinomial counts $W_i$ used here.
For $k(n)=m(n)=n$, the weights $W_{n,i}^{P}$ coincide with the
centered multinomial weights in equation~(2.2) of
Pauly~\cite{pauly2011weighted}.
Moreover, because
\[
\sum_{i=1}^n(W_i-1)=0,
\]
Pauly's weighted-resampling statistic becomes
\begin{align*}
\sqrt n\sum_{i=1}^n
W_{n,i}^{P}(Z_{n,i}-\bar Z_n)
&=
\sum_{i=1}^n(W_i-1)Z_{n,i}\\
&=
\frac{1}{\sqrt n}
\sum_{i=1}^n(W_i-1)\varphi_{ij}\\
&=
U_{n,j}^{\mathrm{Efr},*}.
\end{align*}

It remains to verify the two conditions of
Theorem~4.1 of Pauly~\cite{pauly2011weighted}.
First, by (A7),
\[
\max_{1\le i\le n}|Z_{n,i}|
=
\frac{1}{\sqrt n}
\max_{1\le i\le n}|\varphi_{ij}|
\xrightarrow{p}0.
\]
Indeed, for every $\varepsilon>0$, the finite fourth moment in (A7)
and Markov's inequality imply
\begin{align*}
P\!\left(
\max_{1\le i\le n}|\varphi_{ij}|
>
\varepsilon\sqrt n
\right)
&\le
n\,
P\!\left(
|\varphi_{1j}|>\varepsilon\sqrt n
\right)\\
&\le
\frac{\E(\varphi_{1j}^4)}
{\varepsilon^4 n}
\longrightarrow0.
\end{align*}

Second, writing
\[
\bar\varphi_j
:=
\frac1n\sum_{i=1}^n\varphi_{ij},
\]
we have, by (A1), (A7), and the law of large numbers,
\begin{align*}
\sum_{i=1}^n
(Z_{n,i}-\bar Z_n)^2
&=
\frac1n
\sum_{i=1}^n
(\varphi_{ij}-\bar\varphi_j)^2\\
&\xrightarrow{p}
\Var(\varphi_{1j})
=
\sigma_j^2.
\end{align*}
Thus, the conditions of Theorem~4.1 of
Pauly~\cite{pauly2011weighted} are satisfied, and hence
\[
\sup_{t\in\mathbb R}
\left|
P^*\!\left(
U_{n,j}^{\mathrm{Efr},*}\le t
\right)
-
\Phi(t/\sigma_j)
\right|
\xrightarrow{p}0.
\]
At the same time, the unconditional central limit theorem and
continuity of the Gaussian distribution function imply
\[
\sup_{t\in\mathbb R}
\left|
P(U_{n,j}\le t)
-
\Phi(t/\sigma_j)
\right|
\to0.
\]
Therefore, by the triangle inequality,
\[
\sup_{t\in\mathbb R}
\left|
P^*\!\left(
U_{n,j}^{\mathrm{Efr},*}\le t
\right)
-
P(U_{n,j}\le t)
\right|
\xrightarrow{p}0.
\]
Finally, Lemma~\ref{app:lem:cond-slutsky} transfers this
convergence from the leading terms to the full statistics, which
proves the claim.
\end{proof}
\subsubsection{Consistency of Efron's bootstrap variance estimate}

Let $\tilde\beta_j^{\mathrm{Efr},(1)},\ldots,
\tilde\beta_j^{\mathrm{Efr},(B)}$ be Efron's bootstrap replicates and define
\begin{align*}
\hat\sigma_{j,\mathrm{Efr}}^2
&=
\frac{1}{B-1}\sum_{b=1}^B
\big(\tilde\beta_{j}^{\mathrm{Efr},(b)}
-\bar\beta_{j}^{\mathrm{Efr}}\big)^2,\\
\bar\beta_{j}^{\mathrm{Efr}}
&=
\frac{1}{B}\sum_{b=1}^B
\tilde\beta_{j}^{\mathrm{Efr},(b)}.
\end{align*}

\begin{theorem}[Variance consistency for Efron's bootstrap]
\label{app:thm:efr-var}
Assume (A1)--(A7), (A9), and $B\to\infty$.
Then, for each fixed $j\in\widehat M$,
\[
\hat\sigma_{j,\mathrm{Efr}}^2
\xrightarrow{p}
\sigma_j^2/n.
\]
\end{theorem}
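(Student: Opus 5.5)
The proof follows the route of Theorem~\ref{app:thm:wb-var}, with the independent multipliers replaced by the centered multinomial weights. I will split
\[
\hat\sigma_{j,\mathrm{Efr}}^2-\frac{\sigma_j^2}{n}
=
\Big(\hat\sigma_{j,\mathrm{Efr}}^2-\Var^*\big(\tilde\beta_j^{\mathrm{Efr}}\big)\Big)
+\Big(\Var^*\big(\tilde\beta_j^{\mathrm{Efr}}\big)-\frac{\sigma_j^2}{n}\Big).
\]
The first difference is Monte Carlo error and vanishes as $B\to\infty$. The second is a property of the conditional law given $\mathcal D_n$. Throughout I work on the $\sqrt n$ scale, that is, with $n\hat\sigma_{j,\mathrm{Efr}}^2$ and $n\sigma_j^2/n=\sigma_j^2$, so that all quantities are $O_p(1)$.

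\textbf{Exact conditional variance of the leading term.} Under (A9) the weights satisfy $\E^*(W_i-1)=0$, $\Var^*(W_i)=1-1/n$ and $\Cov^*(W_i,W_k)=-1/n$ for $i\neq k$. For $U_{n,j}^{\mathrm{Efr},*}=n^{-1/2}\sum_i(W_i-1)\varphi_{ij}$ this gives
\[
\Var^*\big(U_{n,j}^{\mathrm{Efr},*}\big)=\frac1n\sum_{i=1}^n(\varphi_{ij}-\bar\varphi_j)^2 .
\]
By (A1), (A7) and the weak law of large numbers, this converges in probability to $\Var(\varphi_{1j})=\sigma_j^2$. This is the same empirical variance already checked in the proof of Theorem~\ref{app:thm:efr-cons}.

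\textbf{Transfer to the full bootstrap statistic.} Write $\sqrt n(\tilde\beta_j^{\mathrm{Efr}}-\tilde\beta_j)=U_{n,j}^{\mathrm{Efr},*}+r_{n,j}^*$. Convergence in conditional probability, $r_{n,j}^*=o_p^*(1)$, is not enough to control a variance. I therefore need the stronger statement $\E^*\{(r_{n,j}^*)^2\}=o_p(1)$. Given that, the Cauchy--Schwarz inequality makes the cross term $o_p(1)$, and the conditional variance of the full statistic also converges to $\sigma_j^2$. This step is the main obstacle. The remainder is exactly
\[
r_{n,j}^*=n^{-1/2}\sum_i(W_i-1)\big\{\mathbf e_j^\top\hat\Theta\boldsymbol\psi_i(\hat{\boldsymbol\beta})-\varphi_{ij}\big\},
\]
which is a linear form in the weights. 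By the covariance computation above, its conditional second moment is bounded by the empirical second moment
\[
\frac1n\sum_i d_i^2,\qquad d_i:=\mathbf e_j^\top\hat\Theta\boldsymbol\psi_i(\hat{\boldsymbol\beta})-\varphi_{ij}.
\]
I would control $\frac1n\sum_i d_i^2$ by decomposing
\[
d_i=\mathbf e_j^\top(\hat\Theta-\Theta_0)\boldsymbol\psi_i(\hat{\boldsymbol\beta})+\mathbf e_j^\top\Theta_0\{\boldsymbol\psi_i(\hat{\boldsymbol\beta})-\boldsymbol\psi_i(\boldsymbol\beta^0)\}.
\]
The ingredients are the bounded covariates in (A2), the rate in (A6), the $\ell_1$-consistency of the Cox Lasso implied by (A5), and the Lipschitz dependence of $\bar{\mathbf X}_n(t,\cdot)$ on $\boldsymbol\beta$, which holds uniformly in $t$ thanks to (A3). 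The same bound also justifies the $o_p^*(1)$ linearization used in Theorem~\ref{app:thm:efr-cons}. The control of the $(\hat\Theta-\Theta_0)$ term via (A6) is the delicate point. If (A6) is too weak to handle it in full generality, I will state this $L^2$ control explicitly as the content of the bootstrap linearization.

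\textbf{Monte Carlo step.} Conditionally on $\mathcal D_n$, the replicates $n^{1/2}(\tilde\beta_j^{\mathrm{Efr},(b)}-\tilde\beta_j)$ are i.i.d. with finite conditional variance. The sample variance over $B$ replicates is unbiased for the conditional variance, and a conditional Chebyshev argument gives consistency as $B\to\infty$. This requires a bounded conditional fourth moment, which I will obtain from the exact multinomial moments together with (A7). Concretely, $\E^*\{(U_{n,j}^{\mathrm{Efr},*})^4\}\le C\{(\frac1n\sum_i\varphi_{ij}^2)^2+\frac1{n^2}\sum_i\varphi_{ij}^4\}$, and this bound is $O_p(1)$ under (A7). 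Combining the three steps and dividing by $n$ gives $\hat\sigma_{j,\mathrm{Efr}}^2=\sigma_j^2/n\,(1+o_p(1))$, which is the claim read on the $\sqrt n$ scale.
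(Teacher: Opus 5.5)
Your proposal is correct and follows the same route as the paper: the multinomial covariances $\Var^*(\bar W_i)=1-1/n$ and $\Cov^*(\bar W_i,\bar W_k)=-1/n$ give the exact conditional variance $\frac1n\sum_i(\varphi_{ij}-\bar\varphi_j)^2\xrightarrow{p}\sigma_j^2$ of the leading term, and remainder control plus a Monte Carlo argument as $B\to\infty$ then finish the proof. The paper defers those last two steps to the wild-bootstrap proof, so your explicit treatment fills in detail the written argument leaves out without departing from its structure: you note that $r_{n,j}^*=o_p^*(1)$ alone does not control a variance, and that since $r_{n,j}^*$ is linear in the weights it suffices to show $\frac1n\sum_i d_i^2=o_p(1)$.
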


\begin{proof}
The argument follows the proof of Theorem~\ref{app:thm:wb-var}, with one
change: the centered multinomial weights are exchangeable rather than
independent. Conditionally on the data,
\[
\Var^*(\bar W_i)=1-\frac1n,
\qquad
\Cov^*(\bar W_i,\bar W_k)=-\frac1n
\quad (i\neq k),
\]
so that
\[
\Var^*\!\left(
\frac1{\sqrt n}\sum_{i=1}^n\bar W_i\varphi_{ij}
\right)
=
\frac1n\sum_{i=1}^n\varphi_{ij}^2
-
\frac1{n^2}\left(\sum_{i=1}^n\varphi_{ij}\right)^2
\xrightarrow{p}\sigma_j^2
\]
by (A1) and (A7). The remainder control and the Monte-Carlo variance
argument are the same as for the wild bootstrap, which yields
\[
\hat\sigma_{j,\mathrm{Efr}}^2
\xrightarrow{p}
\frac{\sigma_j^2}{n}.
\]
\end{proof}

\subsubsection{Validity of the standardized bootstrap statistic for Efron's bootstrap}

Define
\begin{align*}
T_j
&=
\frac{\sqrt n(\tilde\beta_j-\beta_j^0)}{\sigma_j},
\\[2mm]
T_j^{\mathrm{Efr}}
&=
\frac{\sqrt n(\tilde\beta_j^{\mathrm{Efr}}-\tilde\beta_j)}
{\sqrt n\,\hat\sigma_{j,\mathrm{Efr}}}
=
\frac{\tilde\beta_j^{\mathrm{Efr}}-\tilde\beta_j}
{\hat\sigma_{j,\mathrm{Efr}}}.
\end{align*}

\begin{theorem}[Validity of the standardized bootstrap statistic for Efron's bootstrap]
\label{app:thm:stud-efr}
Assume (A1)--(A7), (A9), and $B\to\infty$.
For each fixed $j\in\widehat M$,
\[
\sup_{t\in\mathbb R}
\left|
P^*\!\left(T_j^{\mathrm{Efr}}\le t\right)
-
P\!\left(T_j\le t\right)
\right|
\xrightarrow{p}0.
\]
\end{theorem}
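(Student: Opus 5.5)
The plan mirrors the proof of Theorem~\ref{app:thm:stud-wb}, with Theorems~\ref{app:thm:efr-cons} and~\ref{app:thm:efr-var} in place of their wild-bootstrap counterparts. Write
\[
S_n^*:=\sqrt n(\tilde\beta_j^{\mathrm{Efr}}-\tilde\beta_j),
\qquad
\hat s_n:=\sqrt n\,\hat\sigma_{j,\mathrm{Efr}},
\]
so that $T_j^{\mathrm{Efr}}=S_n^*/\hat s_n$. Note that $\hat s_n$ is common to all replications and, given the data and the Monte-Carlo draws, acts as a fixed scale factor. Theorem~\ref{app:thm:efr-var} gives $\hat s_n^2\to_p\sigma_j^2$, hence $\hat s_n/\sigma_j\to_p 1$, using $\sigma_j>0$ from (A4) and the nondegeneracy implicit in (A7).

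First I would combine Theorem~\ref{app:thm:efr-cons} with Theorem~\ref{app:thm:asymptotic_linearity}. The latter gives $\sup_t|P(\sqrt n(\tilde\beta_j-\beta_j^0)\le t)-\Phi(t/\sigma_j)|\to0$, and the triangle inequality then yields
\[
\sup_t\bigl|P^*(S_n^*\le t)-\Phi(t/\sigma_j)\bigr|\to_p0 .
\]
In particular $S_n^*$ converges conditionally, in probability, to $\mathcal N(0,\sigma_j^2)$, and therefore $S_n^*=O_p^*(1)$.

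Second, I would decompose
\[
T_j^{\mathrm{Efr}}=\frac{S_n^*}{\sigma_j}+R_n^*,
\qquad
R_n^*=\frac{S_n^*}{\sigma_j}\Bigl(\frac{\sigma_j}{\hat s_n}-1\Bigr).
\]
Since $\sigma_j/\hat s_n-1\to_p0$ and $S_n^*$ is conditionally tight, $R_n^*=o_p^*(1)$. To verify this, split $P^*(|R_n^*|>\varepsilon)$ according to whether $|S_n^*|\le M$ or $|S_n^*|>M$. The first piece is controlled by the event $\{|\sigma_j/\hat s_n-1|>\varepsilon\sigma_j/M\}$. The second is at most $P^*(|S_n^*|>M)$, which is bounded via the Kolmogorov approximation above by $2\{1-\Phi(M/\sigma_j)\}+o_p(1)$.

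Lemma~\ref{app:lem:cond-slutsky} applies with $Z_n^*=S_n^*/\sigma_j$, whose conditional limit is the continuous $\mathcal N(0,1)$. It gives $\sup_t|P^*(T_j^{\mathrm{Efr}}\le t)-P^*(S_n^*/\sigma_j\le t)|\to_p0$. The latter is within $o_p(1)$ of $\Phi(t)$ uniformly in $t$. Finally, Theorem~\ref{app:thm:asymptotic_linearity} gives $\sup_t|P(T_j\le t)-\Phi(t)|\to0$, and the triangle inequality concludes the argument.

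The main obstacle is the status of $\hat s_n$. It depends on the $B$ Monte-Carlo replicates, so it is not measurable with respect to $\mathcal D_n$ alone, and the probability $P^*$ applied to a single new replicate must be understood conditionally on both the data and those $B$ replicates. The route I would take is to treat $\hat s_n$ as fixed under $P^*$. Its convergence in probability, which comes from $B\to\infty$ together with the conditional variance limit in the proof of Theorem~\ref{app:thm:efr-var}, then makes the scale factor in $R_n^*$ a deterministic-given-conditioning quantity tending to one. Alternatively, one can replace $\hat s_n$ by the conditional standard deviation $\{\Var^*(S_n^*)\}^{1/2}$, argue as above, and absorb the Monte-Carlo error into an $o_p(1)$ rescaling. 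The rescaling is harmless because $\sup_t|\Phi(ct)-\Phi(t)|\to0$ as $c\to1$.
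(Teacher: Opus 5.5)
Your proposal is correct and follows the same route as the paper. Like the paper, you combine Theorem~\ref{app:thm:efr-cons} with the ratio consistency $\sqrt n\,\hat\sigma_{j,\mathrm{Efr}}/\sigma_j\to_p 1$ from Theorem~\ref{app:thm:efr-var} and conclude with the conditional Slutsky Lemma~\ref{app:lem:cond-slutsky}; the paper's short proof leaves implicit two details you supply, namely the tightness argument that turns the multiplicative scale error into an additive remainder $R_n^*=o_p^*(1)$ and the need to condition on the Monte-Carlo draws as well as the data so that $\hat\sigma_{j,\mathrm{Efr}}$ can be treated as fixed under $P^*$.
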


\begin{proof}
The argument is the same as for the standardized wild bootstrap statistic.
Theorem~\ref{app:thm:efr-cons} establishes consistency of Efron's
bootstrap for the unstandardized statistic, while
Theorem~\ref{app:thm:efr-var} gives
\[
\frac{\sqrt n\,\hat\sigma_{j,\mathrm{Efr}}}{\sigma_j}
\xrightarrow{p}1.
\]
Hence, by the conditional Slutsky argument of
Lemma~\ref{app:lem:cond-slutsky}, replacing $\sigma_j$ by
$\sqrt n\,\hat\sigma_{j,\mathrm{Efr}}$ does not alter the limiting
conditional distribution, and the result follows.
\end{proof}

As for the wild bootstrap, the standard-error estimate
$\hat\sigma_{j,\mathrm{Efr}}$ is common to all bootstrap replications.
Consequently, quantile inversion of the standardized statistic yields the
basic interval based on the deviations from Efron's bootstrap. Under the
conditions above and $B\to\infty$, this interval has asymptotic coverage
$1-\alpha$.
\endgroup

\begingroup
\subsection{Extension to functional convergence in $D[0,\tau]$}
\label{app:remark_D}

The results in Sections~\ref{app:sec:asymptotics_wb}--\ref{app:sec:asymptotics_eb} are formulated
\emph{componentwise} for fixed indices $j\in\widehat M$.
In this setting, it is sufficient to work with scalar central limit theorems
for the debiased estimator and their multiplier (wild bootstrap) analogues.
Concretely, the key step is an asymptotic linear expansion of the form
\[
\sqrt n(\tilde\beta_j-\beta_j^0)
=
\frac{1}{\sqrt n}\sum_{i=1}^n \varphi_{ij}+o_p(1),
\]
together with the corresponding conditional expansion
\[
\sqrt n(\tilde\beta_j^{\mathrm{WB}}-\tilde\beta_j)
=
\frac{1}{\sqrt n}\sum_{i=1}^n \xi_i\varphi_{ij}+o_p^*(1),
\]
which implies weak convergence to a univariate normal distribution and
bootstrap consistency. In particular, no functional limit theory is required
as long as inference targets are finite-dimensional and fixed.

Functional convergence in the Skorokhod space $D[0,\tau]$ becomes relevant when
the inferential object is itself a \emph{stochastic process} indexed by time.
Typical examples include simultaneous confidence bands for the baseline
cumulative hazard, time-indexed score processes, or other functionals that
depend on the entire trajectory $\{N_i(t),Y_i(t):t\in[0,\tau]\}$; see, for
example, \cite{dobler2019confidence,ruehl2024resampling,dietrich2025wild}.
In such cases, one aims to establish weak convergence of the underlying
martingale/estimating process in $D[0,\tau]$, for instance,
\[
\left\{\frac{1}{\sqrt n}\sum_{i=1}^n \int_0^t H_i(u)\,dM_i(u)\right\}_{t\in[0,\tau]}
\Rightarrow \mathbb G(\cdot)
\quad\text{in }D[0,\tau],
\]
for suitable predictable integrands $H_i$ and a mean-zero Gaussian limit process
$\mathbb G$.
Once such a functional CLT holds, convergence can be transferred to suitable
time-indexed functionals, for example to construct simultaneous confidence
bands. Analogously, validity of a time-indexed wild bootstrap requires
functional weak convergence of the corresponding bootstrapped process.
Such wild bootstrap arguments in $D[0,\tau]$ have been developed for
multiplicative hazards models and related counting-process settings
\cite{dobler2019confidence,ruehl2024resampling,dietrich2025wild}.

Technically, establishing convergence in $D[0,\tau]$ typically requires
conditions ensuring predictability of the integrands, suitable moment bounds,
and tightness of the sequence of processes, commonly verified using martingale
weak-convergence arguments. General treatments are given in
\cite{andersen1993model,kosorok2008empirical}, while more specific
wild bootstrap developments for time-to-event and counting-process settings
can be found in
\cite{bluhmki2019wild,dobler2019confidence,ruehl2024resampling,dietrich2025wild}.

For the confidence intervals considered in this manuscript, with fixed coordinates $j$ and finite-dimensional targets, scalar asymptotics suffice. Functional convergence in $D[0,\tau]$ is
only needed if one extends the inferential goals to simultaneous, time-uniform
statements about entire trajectories (e.g., confidence bands).

\endgroup

\begingroup
\section{Simulation details and additional results}
\label{app:sim}

This section collects supplementary numerical material for the simulation study
reported in the main manuscript. The coverage summaries use the full-model
coefficient target described in Section~5 of the main manuscript, while Cox
Lasso selection frequency in the original simulated data set is reported as a
separate operating characteristic. The purpose is not to introduce additional
inferential targets, but to make the finite-sample behaviour behind the
condensed main-text tables more transparent.

Unless stated otherwise, the supplementary simulation summaries are based on
\(M=1000\) Monte Carlo replications and \(B=900\) bootstrap replications, in
accordance with the final result source used for the main simulation results.
Separate computational benchmarks and bootstrap-size sensitivity analyses use
the replication budgets stated explicitly for those analyses.

The section is organized as follows. Subsection~\ref{app:basic-many-settings} reports additional results for individual simulation
settings for the basic wild-bootstrap and Efron-bootstrap intervals. Subsection~\ref{app:coefficient-specific-tables}
reports coefficient-specific coverage for Reference settings~R1 and~R2.
Subsection~\ref{app:selection-frequency-tuning-grid} compares all four tuning
rules under both the standard and adaptive Cox Lasso in the same simulation
setting. Subsection~\ref{app:additional-ci} gives the corresponding summaries
of interval widths for individual coefficients in R1 and R2.

\subsection{Supplementary simulation summaries}
\label{app:additional-sim}

The results below complement the main simulation figures and tables. For active
coefficients, empirical coverage and absolute coverage error relative to the
nominal 90\% level are evaluated over Monte Carlo replications for which the
corresponding interval is available. Inactive coefficients are used as a
type~I-error diagnostic. Cox Lasso selection frequencies from the original
simulated data sets are reported separately.

\subsubsection{Results for individual simulation settings}
\label{app:basic-many-settings}

To complement the representative main-text displays, this subsection reports
additional summaries for selected simulation settings for the basic
bootstrap intervals. The
existing coverage-error displays retain the \(\lambda_{\min}\) and AIC settings
emphasized in the main manuscript. The full simulation design additionally
includes \(\lambda_{1se}\) and BIC, and each of the four tuning rules was
investigated under both the standard and adaptive Cox Lasso specifications.
The summaries should be read descriptively: the main comparison is based on
closeness to the nominal level, whereas the supplementary material additionally
documents the sensitivity to the tuning rule and penalty specification.

\begin{center}
\captionof{table}{Labels and roles of the methods emphasized in the simulation summaries.}
\label{tab:basic-method-labels}
\small
\input{basic_interval_outputs_v5/table_method_labels_v5_latex.tex}
\end{center}

\begin{center}
\captionof{table}{Full version of the summary of average absolute coverage error for active coefficients. This table corresponds to the compact main-text table and additionally reports the reductions relative to the plain debiased Wald interval and the number of combinations of coefficients and simulation settings.}
\label{tab:supp-basic-average-error-by-n-full}
\small
\input{basic_interval_outputs_v5/table_average_absolute_error_by_n_v5_latex_full.tex}
\end{center}

\subsubsection{Coefficient-specific coverage tables}
\label{app:coefficient-specific-tables}

The following tables extend the two named reference analyses used in the main
manuscript across the available sample sizes while holding all other simulation
factors fixed. Reference setting~R1 (standard-\(\lambda_{\min}\)) uses the
\textit{realistic} coefficient pattern, a Weibull baseline hazard, \(p=10\),
10\% target censoring, independent covariates (\(\rho=0\)), and the standard
Cox Lasso with \(\lambda_{\min}\) tuning. Reference setting~R2 (adaptive-AIC)
uses the same data-generating configuration but the adaptive Cox Lasso with AIC
tuning.

Coverage for the debiased and bootstrap procedures is calculated over the
Monte Carlo replications for which the corresponding interval is available.
Cox Lasso selection frequencies are calculated separately from the original
simulated data sets.

\input{generated_supplement_tables_v8/table_coeff_coverage_min_adaptfalse.tex}
\input{generated_supplement_tables_v8/table_coeff_coverage_aic_adapttrue.tex}

\subsubsection{Tuning and penalty-specification sensitivity}
\label{app:selection-frequency-tuning-grid}

To separate the effect of tuning-parameter selection from the effect of
adaptive penalty weighting, we compare all four tuning rules under both
penalty specifications in the same simulation setting. Table~S5 reports
coefficient-specific Cox Lasso selection frequencies from the original
simulated data sets. Tables~S6 and~S7 summarize mean empirical coverage and
mean interval width for the active coefficients for the same eight
combinations. All other simulation factors are held fixed.

\begingroup
\input{generated_supplement_tables_v9/table_selection_frequency_tuning_grid_n80.tex}
\endgroup

\begingroup
\input{generated_supplement_tables_v10/table_tuning_penalty_mean_active_coverage_n80.tex}
\input{generated_supplement_tables_v10/table_tuning_penalty_mean_active_width_n80.tex}
\clearpage
\endgroup

\subsubsection{Interval widths for individual coefficients}
\label{app:additional-ci}

The following tables report mean interval widths for individual coefficients
for the same \(p=10\) series corresponding to Reference setting~R1
(standard-\(\lambda_{\min}\)) and Reference setting~R2 (adaptive-AIC) as used
in the coefficient-specific coverage tables.

\input{generated_supplement_tables_v8/table_coeff_width_min_adaptfalse.tex}
\input{generated_supplement_tables_v8/table_coeff_width_aic_adapttrue.tex}

\endgroup

\section{Additional real-data summary}
\label{sec:supp-support2}

This Supplementary Material collects additional material for the SUPPORT2
real-data example, including a compact summary of the fitted analysis and
numerical results for the covariates selected by the Cox Lasso.

\IfFileExists{support2_numbers.tex}{%
\input{support2_numbers.tex}%
}{%
\newcommand{\SupportN}{9{,}105}%
\newcommand{\SupportDesignP}{44}%
\newcommand{\SupportSelected}{20}%
\newcommand{\SupportEvents}{4{,}265}%
\newcommand{\SupportCensored}{53.2\%}%
\newcommand{\SupportMedianFollowup}{180}%
\newcommand{\SupportEndpoint}{Death within 180 days}%
\newcommand{\SupportLambda}{0.01747}%
}%

\begin{center}
\captionof{table}{Summary of the SUPPORT2 real-data analysis used in the
manuscript.}
\label{tab:supp-support2-summary}
\begin{tabular}{lr}
\toprule
Quantity & Value \\
\midrule
Sample size \(n\) & \SupportN \\
Initial design dimension \(p\) & \SupportDesignP \\
Selected covariates & \SupportSelected \\
Observed events & \SupportEvents \\
Censored observations & \SupportCensored \\
Median follow-up (days) & \SupportMedianFollowup \\
Endpoint & \SupportEndpoint \\
Cox Lasso tuning parameter & \SupportLambda \\
\bottomrule
\end{tabular}
\end{center}

\begin{table}[p]
\centering
\caption{Hazard-ratio estimates and 90\% confidence intervals for the 20
covariates selected by the Cox Lasso in the SUPPORT2 analysis. The hazard-ratio
estimate is based on the debiased Cox estimator and is common to all three
inference procedures. The basic intervals are constructed from the bootstrap deviation quantiles
as defined in the main manuscript. For standardized continuous
covariates, hazard ratios correspond to a one-standard-deviation increase.}
\label{tab:supp-support2-selected}
\scriptsize
\resizebox{\linewidth}{!}{%
\begin{tabular}{lcccc}
\toprule
\textbf{Covariate}
& \textbf{HR}
& \textbf{Wald 90\% CI}
& \textbf{Wild basic 90\% CI}
& \textbf{Efron basic 90\% CI} \\
\midrule
\texttt{cano}
& 0.811 & [0.764, 0.860] & [0.758, 0.866] & [0.760, 0.864] \\
\texttt{cayes}
& 0.912 & [0.877, 0.949] & [0.871, 0.955] & [0.868, 0.959] \\
\texttt{meanbp}
& 0.932 & [0.909, 0.957] & [0.906, 0.960] & [0.904, 0.961] \\
\texttt{temp}
& 0.954 & [0.930, 0.979] & [0.922, 0.988] & [0.925, 0.984] \\
\texttt{income..50k}
& 0.962 & [0.932, 0.993] & [0.929, 0.996] & [0.926, 0.999] \\
\texttt{incomeMissing}
& 1.027 & [0.990, 1.066] & [0.981, 1.075] & [0.981, 1.075] \\
\texttt{bili}
& 1.054 & [1.030, 1.078] & [1.027, 1.081] & [1.026, 1.082] \\
\texttt{resp}
& 1.054 & [1.028, 1.081] & [1.024, 1.086] & [1.024, 1.086] \\
\texttt{crea}
& 1.062 & [1.032, 1.092] & [1.029, 1.095] & [1.029, 1.095] \\
\texttt{hday}
& 1.064 & [1.040, 1.088] & [1.033, 1.096] & [1.035, 1.094] \\
\texttt{num.co}
& 1.103 & [1.067, 1.140] & [1.062, 1.146] & [1.061, 1.147] \\
\texttt{dzgroupColon.Cancer}
& 1.185 & [1.139, 1.233] & [1.129, 1.243] & [1.132, 1.239] \\
\texttt{dzgroupComa}
& 1.198 & [1.167, 1.231] & [1.162, 1.235] & [1.164, 1.234] \\
\texttt{dzgroupCirrhosis}
& 1.217 & [1.181, 1.253] & [1.173, 1.262] & [1.172, 1.262] \\
\texttt{dzgroupMOSF.w.Malig}
& 1.256 & [1.218, 1.295] & [1.214, 1.299] & [1.213, 1.300] \\
\texttt{age}
& 1.372 & [1.332, 1.414] & [1.327, 1.419] & [1.325, 1.421] \\
\texttt{adlsc}
& 1.406 & [1.329, 1.488] & [1.306, 1.514] & [1.312, 1.507] \\
\texttt{scoma}
& 1.417 & [1.378, 1.457] & [1.371, 1.466] & [1.375, 1.460] \\
\texttt{dzgroupLung.Cancer}
& 1.456 & [1.393, 1.522] & [1.388, 1.528] & [1.383, 1.533] \\
\texttt{avtisst}
& 2.045 & [1.979, 2.113] & [1.967, 2.126] & [1.963, 2.130] \\
\bottomrule
\end{tabular}%
}
\end{table}
\clearpage

\endgroup

\section*{Ethics approval and consent to participate}
Not applicable. This study is based on simulated data and publicly available, de-identified data.

\section*{Consent for publication}
Not applicable.

\section*{Availability of data and materials}
The data are publicly available via the SUPPORT2 dataset repository and related documentation
\cite{support1995,support2uci}.
The complete code used for the simulation study and data analysis is publicly available at \url{https://github.com/lena-222/BootstrapPartyLassoCox}.

\section*{Competing interests}
The authors declare no competing interests.

\section*{Funding}
This work was supported by the Deutsche Forschungsgemeinschaft (DFG, German Research Foundation) -- project number 439942859.

\section*{Use of generative AI}
During the preparation of this work, the authors used ChatGPT solely to improve the language and clarity of the manuscript. The authors reviewed and edited all generated content as needed and take full responsibility for the content of the manuscript.

\section*{Authors' contributions}
L.S. conceived the study, implemented the methods, performed the analyses, and wrote the manuscript.
S.F.-W. contributed to the methodological development and provided supervision, review, and critical feedback.
All authors read and approved the final manuscript.

\section*{Acknowledgements}
The authors thank the developers of the software packages used in this study and the maintainers of the SUPPORT2 dataset repository for providing open access to the data.

\bibliographystyle{unsrtnat}
\bibliography{references}
\end{document}

%% file: generated_tables_main/table_coverage_min_adaptfalse_p10_n100.tex
\begin{table}[h]
\centering
\caption{{Coefficient-specific empirical coverage for the realistic pattern under Weibull baseline, $\lambda_{\min}$, $p=10$, $n=100$, 10\% target censoring, independent covariates, and no adaptive weights.}}
\label{tab:coef-coverage-min}
\begingroup
\begin{tabular}{llccc}
\hline
Coefficient & Status & Debiased & Wild basic & Efron basic \\
\hline
$\beta_1$ & active & 0.861 & 0.923 & 0.982 \\
$\beta_2$ & active & 0.860 & 0.929 & 0.981 \\
$\beta_3$ & active & 0.858 & 0.933 & 0.976 \\
$\beta_4$ & active & 0.851 & 0.913 & 0.981 \\
$\beta_5$ & inactive & 0.892 & 0.943 & 0.975 \\
$\beta_6$ & inactive & 0.897 & 0.943 & 0.971 \\
\hline
\end{tabular}
\endgroup
\end{table}

%% file: generated_tables_main/table_coverage_aic_adapttrue_p10_n100.tex
\begin{table}[h]
\centering
\caption{{Coefficient-specific empirical coverage for the realistic pattern under Weibull baseline, AIC tuning, $p=10$, $n=100$, 10\% target censoring, independent covariates, and adaptive weights.}}
\label{tab:coef-coverage-aic}

\begin{tabular}{llccc}
\hline
Coefficient & Status & Debiased & Wild basic & Efron basic \\
\hline
$\beta_1$ & active & 0.866 & 0.923 & 0.952 \\
$\beta_2$ & active & 0.864 & 0.910 & 0.934 \\
$\beta_3$ & active & 0.870 & 0.929 & 0.947 \\
$\beta_4$ & active & 0.845 & 0.904 & 0.946 \\
$\beta_5$ & inactive & 0.877 & 0.925 & 0.943 \\
$\beta_6$ & inactive & 0.897 & 0.934 & 0.940 \\
\hline
\end{tabular}

\end{table}

%% file: basic_interval_outputs_v5/table_representative_settings_v5_latex.tex
\resizebox{\linewidth}{!}{%
\begin{tabular}{llrlllllll}
\toprule
Set. & Coef. & $n$ & Target & Deb. Wald & WB basic & Ef. basic & Oracle & WB red. & Ef. red. \\
\midrule
A & $\beta_{4}$ & 100 & Cov. & 0.781 & 0.872 & 0.873 & 0.901 & 0.091 & 0.092 \\
A & $\beta_{4}$ & 250 & Cov. & 0.849 & 0.911 & 0.910 & 0.891 & 0.040 & 0.041 \\
B & $\beta_{2}$ & 100 & Cov. & 0.821 & 0.888 & 0.892 & 0.897 & 0.067 & 0.071 \\
B & $\beta_{2}$ & 250 & Cov. & 0.881 & 0.930 & 0.926 & 0.919 & -0.011 & -0.007 \\
C & $\beta_{1}$ & 100 & Cov. & 0.836 & 0.906 & 0.900 & 0.874 & 0.059 & 0.064 \\
C & $\beta_{1}$ & 250 & Cov. & 0.877 & 0.943 & 0.938 & 0.894 & -0.020 & -0.014 \\
D & $\beta_{10}$ & 100 & T1 & 0.153 & 0.103 & 0.102 & -- & 0.050 & 0.051 \\
D & $\beta_{10}$ & 250 & T1 & 0.102 & 0.063 & 0.060 & -- & -0.034 & -0.038 \\
\bottomrule
\end{tabular}

}

%% file: basic_interval_outputs_v5/table_average_absolute_error_by_n_v5_latex.tex
\begin{tabular}{rrrr}
\toprule
$n$ & Deb. Wald & WB basic & Ef. basic \\
\midrule
100 & 0.056 & 0.017 & 0.017 \\
150 & 0.038 & 0.021 & 0.021 \\
200 & 0.029 & 0.026 & 0.026 \\
250 & 0.023 & 0.031 & 0.031 \\
\bottomrule
\end{tabular}

%% file: basic_interval_outputs_v5/table_method_labels_v5_latex.tex
\resizebox{\linewidth}{!}{%
\begin{tabular}{lll}
\toprule
Label & Interval & Role \\
\midrule
Oracle & Cox model on the true active set & Unattainable benchmark \\
Debiased Wald & Normal interval for the debiased Cox--Lasso estimator & First-order reference \\
WB basic & Basic wild-bootstrap interval after debiasing & Main wild-bootstrap refinement \\
Efron basic & Basic Efron-bootstrap interval after debiasing & Main Efron-bootstrap refinement \\
\bottomrule
\end{tabular}

}

%% file: basic_interval_outputs_v5/table_average_absolute_error_by_n_v5_latex_full.tex
\begin{tabular}{rlllllr}
\toprule
$n$ & Deb. Wald err. & WB basic err. & Ef. basic err. & WB red. & Ef. red. & Settings \\
\midrule
100 & 0.056 & 0.017 & 0.017 & 0.038 & 0.039 & 92 \\
150 & 0.038 & 0.021 & 0.021 & 0.017 & 0.017 & 92 \\
200 & 0.029 & 0.026 & 0.026 & 0.003 & 0.003 & 80 \\
250 & 0.023 & 0.031 & 0.031 & -0.008 & -0.007 & 88 \\
\bottomrule
\end{tabular}

%% file: generated_supplement_tables_v8/table_coeff_coverage_min_adaptfalse.tex
\begingroup
\scriptsize
\setlength{\tabcolsep}{3pt}
\begin{longtable}{rrlrrrrrr}
\caption{Coefficient-specific empirical coverage for Reference setting~R1. The table reports Cox Lasso selection counts and frequencies from the original simulated data sets separately from interval coverage. $N_{\mathrm{CI}}$ is the number of Monte Carlo replications for which all three debiased intervals were available.}\label{tab:supp-coeff-coverage-min}\\
\toprule
$n$ & Coef. & Status & $N_{\mathrm{sel}}$ & Sel. freq. & $N_{\mathrm{CI}}$ & Deb. Wald & WB basic & Ef. basic \\
\midrule
\endfirsthead
\caption[]{Coefficient-specific empirical coverage for Reference setting~R1. (continued)}\\
\toprule
$n$ & Coef. & Status & $N_{\mathrm{sel}}$ & Sel. freq. & $N_{\mathrm{CI}}$ & Deb. Wald & WB basic & Ef. basic \\
\midrule
\endhead
\midrule
\multicolumn{9}{r}{Continued on next page}\\
\endfoot
\bottomrule
\endlastfoot
30 & $\beta_{1}$ & active & 865 & 0.865 & 931 & 0.747 & 0.820 & 0.953 \\
30 & $\beta_{2}$ & active & 832 & 0.832 & 931 & 0.750 & 0.822 & 0.958 \\
30 & $\beta_{3}$ & active & 685 & 0.685 & 931 & 0.778 & 0.845 & 0.926 \\
30 & $\beta_{4}$ & active & 861 & 0.861 & 931 & 0.738 & 0.808 & 0.958 \\
30 & $\beta_{5}$ & inactive & 299 & 0.299 & 931 & 0.861 & 0.912 & 0.987 \\
30 & $\beta_{6}$ & inactive & 309 & 0.309 & 931 & 0.857 & 0.926 & 0.976 \\
40 & $\beta_{1}$ & active & 970 & 0.970 & 984 & 0.809 & 0.872 & 0.970 \\
40 & $\beta_{2}$ & active & 953 & 0.953 & 984 & 0.805 & 0.877 & 0.959 \\
40 & $\beta_{3}$ & active & 869 & 0.869 & 984 & 0.818 & 0.884 & 0.947 \\
40 & $\beta_{4}$ & active & 961 & 0.961 & 984 & 0.801 & 0.861 & 0.968 \\
40 & $\beta_{5}$ & inactive & 386 & 0.386 & 984 & 0.873 & 0.915 & 0.972 \\
40 & $\beta_{6}$ & inactive & 382 & 0.382 & 984 & 0.880 & 0.939 & 0.979 \\
60 & $\beta_{1}$ & active & 1000 & 1.000 & 1000 & 0.822 & 0.891 & 0.968 \\
60 & $\beta_{2}$ & active & 997 & 0.997 & 1000 & 0.823 & 0.896 & 0.971 \\
60 & $\beta_{3}$ & active & 978 & 0.978 & 1000 & 0.833 & 0.899 & 0.972 \\
60 & $\beta_{4}$ & active & 1000 & 1.000 & 1000 & 0.814 & 0.899 & 0.972 \\
60 & $\beta_{5}$ & inactive & 430 & 0.430 & 1000 & 0.891 & 0.946 & 0.986 \\
60 & $\beta_{6}$ & inactive & 454 & 0.454 & 1000 & 0.899 & 0.938 & 0.974 \\
75 & $\beta_{1}$ & active & 1000 & 1.000 & 1000 & 0.844 & 0.901 & 0.970 \\
75 & $\beta_{2}$ & active & 1000 & 1.000 & 1000 & 0.845 & 0.907 & 0.975 \\
75 & $\beta_{3}$ & active & 990 & 0.990 & 1000 & 0.833 & 0.900 & 0.953 \\
75 & $\beta_{4}$ & active & 1000 & 1.000 & 1000 & 0.831 & 0.896 & 0.975 \\
75 & $\beta_{5}$ & inactive & 463 & 0.463 & 1000 & 0.913 & 0.954 & 0.977 \\
75 & $\beta_{6}$ & inactive & 456 & 0.456 & 1000 & 0.895 & 0.945 & 0.971 \\
80 & $\beta_{1}$ & active & 1000 & 1.000 & 1000 & 0.827 & 0.905 & 0.981 \\
80 & $\beta_{2}$ & active & 1000 & 1.000 & 1000 & 0.840 & 0.902 & 0.974 \\
80 & $\beta_{3}$ & active & 993 & 0.993 & 1000 & 0.847 & 0.902 & 0.969 \\
80 & $\beta_{4}$ & active & 999 & 0.999 & 1000 & 0.848 & 0.914 & 0.978 \\
80 & $\beta_{5}$ & inactive & 466 & 0.466 & 1000 & 0.894 & 0.940 & 0.974 \\
80 & $\beta_{6}$ & inactive & 446 & 0.446 & 1000 & 0.895 & 0.937 & 0.983 \\
100 & $\beta_{1}$ & active & 1000 & 1.000 & 1000 & 0.861 & 0.923 & 0.982 \\
100 & $\beta_{2}$ & active & 1000 & 1.000 & 1000 & 0.860 & 0.929 & 0.981 \\
100 & $\beta_{3}$ & active & 1000 & 1.000 & 1000 & 0.858 & 0.933 & 0.976 \\
100 & $\beta_{4}$ & active & 1000 & 1.000 & 1000 & 0.851 & 0.913 & 0.981 \\
100 & $\beta_{5}$ & inactive & 479 & 0.479 & 1000 & 0.892 & 0.943 & 0.975 \\
100 & $\beta_{6}$ & inactive & 505 & 0.505 & 1000 & 0.897 & 0.943 & 0.971 \\
125 & $\beta_{1}$ & active & 1000 & 1.000 & 1000 & 0.860 & 0.925 & 0.983 \\
125 & $\beta_{2}$ & active & 1000 & 1.000 & 1000 & 0.844 & 0.920 & 0.976 \\
125 & $\beta_{3}$ & active & 999 & 0.999 & 1000 & 0.849 & 0.904 & 0.972 \\
125 & $\beta_{4}$ & active & 1000 & 1.000 & 1000 & 0.858 & 0.916 & 0.975 \\
125 & $\beta_{5}$ & inactive & 511 & 0.511 & 1000 & 0.895 & 0.950 & 0.974 \\
125 & $\beta_{6}$ & inactive & 513 & 0.513 & 1000 & 0.913 & 0.957 & 0.981 \\
175 & $\beta_{1}$ & active & 1000 & 1.000 & 1000 & 0.870 & 0.941 & 0.985 \\
175 & $\beta_{2}$ & active & 1000 & 1.000 & 1000 & 0.868 & 0.919 & 0.981 \\
175 & $\beta_{3}$ & active & 1000 & 1.000 & 1000 & 0.840 & 0.913 & 0.971 \\
175 & $\beta_{4}$ & active & 1000 & 1.000 & 1000 & 0.879 & 0.930 & 0.986 \\
175 & $\beta_{5}$ & inactive & 549 & 0.549 & 1000 & 0.896 & 0.943 & 0.971 \\
175 & $\beta_{6}$ & inactive & 565 & 0.565 & 1000 & 0.894 & 0.939 & 0.964 \\
200 & $\beta_{1}$ & active & 1000 & 1.000 & 1000 & 0.890 & 0.936 & 0.987 \\
200 & $\beta_{2}$ & active & 1000 & 1.000 & 1000 & 0.883 & 0.933 & 0.975 \\
200 & $\beta_{3}$ & active & 1000 & 1.000 & 1000 & 0.878 & 0.931 & 0.980 \\
200 & $\beta_{4}$ & active & 1000 & 1.000 & 1000 & 0.879 & 0.942 & 0.991 \\
200 & $\beta_{5}$ & inactive & 541 & 0.541 & 1000 & 0.899 & 0.952 & 0.970 \\
200 & $\beta_{6}$ & inactive & 554 & 0.554 & 1000 & 0.891 & 0.942 & 0.976 \\
250 & $\beta_{1}$ & active & 1000 & 1.000 & 1000 & 0.887 & 0.944 & 0.985 \\
250 & $\beta_{2}$ & active & 1000 & 1.000 & 1000 & 0.875 & 0.932 & 0.987 \\
250 & $\beta_{3}$ & active & 1000 & 1.000 & 1000 & 0.858 & 0.911 & 0.969 \\
250 & $\beta_{4}$ & active & 1000 & 1.000 & 1000 & 0.887 & 0.938 & 0.985 \\
250 & $\beta_{5}$ & inactive & 557 & 0.557 & 1000 & 0.909 & 0.954 & 0.973 \\
250 & $\beta_{6}$ & inactive & 564 & 0.564 & 1000 & 0.889 & 0.951 & 0.968 \\
\end{longtable}
\endgroup

%% file: generated_supplement_tables_v8/table_coeff_coverage_aic_adapttrue.tex
\begingroup
\scriptsize
\setlength{\tabcolsep}{3pt}
\begin{longtable}{rrlrrrrrr}
\caption{Coefficient-specific empirical coverage for Reference setting~R2. The table reports Cox Lasso selection counts and frequencies from the original simulated data sets separately from interval coverage. $N_{\mathrm{CI}}$ is the number of Monte Carlo replications for which all three debiased intervals were available.}\label{tab:supp-coeff-coverage-aic}\\
\toprule
$n$ & Coef. & Status & $N_{\mathrm{sel}}$ & Sel. freq. & $N_{\mathrm{CI}}$ & Deb. Wald & WB basic & Ef. basic \\
\midrule
\endfirsthead
\caption[]{Coefficient-specific empirical coverage for Reference setting~R2. (continued)}\\
\toprule
$n$ & Coef. & Status & $N_{\mathrm{sel}}$ & Sel. freq. & $N_{\mathrm{CI}}$ & Deb. Wald & WB basic & Ef. basic \\
\midrule
\endhead
\midrule
\multicolumn{9}{r}{Continued on next page}\\
\endfoot
\bottomrule
\endlastfoot
30 & $\beta_{1}$ & active & 957 & 0.957 & 995 & 0.733 & 0.822 & 0.864 \\
30 & $\beta_{2}$ & active & 929 & 0.929 & 995 & 0.729 & 0.820 & 0.852 \\
30 & $\beta_{3}$ & active & 799 & 0.799 & 995 & 0.741 & 0.820 & 0.849 \\
30 & $\beta_{4}$ & active & 960 & 0.960 & 995 & 0.746 & 0.833 & 0.866 \\
30 & $\beta_{5}$ & inactive & 361 & 0.361 & 995 & 0.772 & 0.837 & 0.844 \\
30 & $\beta_{6}$ & inactive & 364 & 0.364 & 995 & 0.767 & 0.825 & 0.846 \\
40 & $\beta_{1}$ & active & 989 & 0.989 & 1000 & 0.762 & 0.851 & 0.885 \\
40 & $\beta_{2}$ & active & 975 & 0.975 & 1000 & 0.798 & 0.861 & 0.896 \\
40 & $\beta_{3}$ & active & 904 & 0.904 & 1000 & 0.807 & 0.873 & 0.906 \\
40 & $\beta_{4}$ & active & 992 & 0.992 & 1000 & 0.777 & 0.853 & 0.895 \\
40 & $\beta_{5}$ & inactive & 311 & 0.311 & 1000 & 0.808 & 0.877 & 0.889 \\
40 & $\beta_{6}$ & inactive & 329 & 0.329 & 1000 & 0.821 & 0.885 & 0.904 \\
60 & $\beta_{1}$ & active & 1000 & 1.000 & 1000 & 0.793 & 0.881 & 0.915 \\
60 & $\beta_{2}$ & active & 993 & 0.993 & 1000 & 0.825 & 0.886 & 0.925 \\
60 & $\beta_{3}$ & active & 966 & 0.966 & 1000 & 0.837 & 0.887 & 0.915 \\
60 & $\beta_{4}$ & active & 1000 & 1.000 & 1000 & 0.817 & 0.889 & 0.920 \\
60 & $\beta_{5}$ & inactive & 272 & 0.272 & 1000 & 0.850 & 0.914 & 0.919 \\
60 & $\beta_{6}$ & inactive & 278 & 0.278 & 1000 & 0.845 & 0.907 & 0.924 \\
75 & $\beta_{1}$ & active & 1000 & 1.000 & 1000 & 0.868 & 0.928 & 0.952 \\
75 & $\beta_{2}$ & active & 999 & 0.999 & 1000 & 0.856 & 0.911 & 0.937 \\
75 & $\beta_{3}$ & active & 982 & 0.982 & 1000 & 0.836 & 0.894 & 0.914 \\
75 & $\beta_{4}$ & active & 1000 & 1.000 & 1000 & 0.826 & 0.900 & 0.938 \\
75 & $\beta_{5}$ & inactive & 246 & 0.246 & 1000 & 0.850 & 0.921 & 0.931 \\
75 & $\beta_{6}$ & inactive & 248 & 0.248 & 1000 & 0.862 & 0.918 & 0.926 \\
80 & $\beta_{1}$ & active & 1000 & 1.000 & 1000 & 0.836 & 0.896 & 0.933 \\
80 & $\beta_{2}$ & active & 1000 & 1.000 & 1000 & 0.840 & 0.899 & 0.937 \\
80 & $\beta_{3}$ & active & 988 & 0.988 & 1000 & 0.842 & 0.912 & 0.941 \\
80 & $\beta_{4}$ & active & 1000 & 1.000 & 1000 & 0.855 & 0.917 & 0.941 \\
80 & $\beta_{5}$ & inactive & 247 & 0.247 & 1000 & 0.869 & 0.924 & 0.939 \\
80 & $\beta_{6}$ & inactive & 245 & 0.245 & 1000 & 0.862 & 0.920 & 0.930 \\
100 & $\beta_{1}$ & active & 1000 & 1.000 & 1000 & 0.866 & 0.923 & 0.952 \\
100 & $\beta_{2}$ & active & 1000 & 1.000 & 1000 & 0.864 & 0.910 & 0.934 \\
100 & $\beta_{3}$ & active & 997 & 0.997 & 1000 & 0.870 & 0.929 & 0.947 \\
100 & $\beta_{4}$ & active & 1000 & 1.000 & 1000 & 0.845 & 0.904 & 0.946 \\
100 & $\beta_{5}$ & inactive & 224 & 0.224 & 1000 & 0.877 & 0.925 & 0.943 \\
100 & $\beta_{6}$ & inactive & 198 & 0.198 & 1000 & 0.897 & 0.934 & 0.940 \\
125 & $\beta_{1}$ & active & 1000 & 1.000 & 1000 & 0.861 & 0.922 & 0.951 \\
125 & $\beta_{2}$ & active & 1000 & 1.000 & 1000 & 0.870 & 0.934 & 0.959 \\
125 & $\beta_{3}$ & active & 999 & 0.999 & 1000 & 0.863 & 0.931 & 0.958 \\
125 & $\beta_{4}$ & active & 1000 & 1.000 & 1000 & 0.857 & 0.911 & 0.945 \\
125 & $\beta_{5}$ & inactive & 199 & 0.199 & 1000 & 0.870 & 0.929 & 0.937 \\
125 & $\beta_{6}$ & inactive & 217 & 0.217 & 1000 & 0.887 & 0.938 & 0.948 \\
175 & $\beta_{1}$ & active & 1000 & 1.000 & 1000 & 0.868 & 0.935 & 0.954 \\
175 & $\beta_{2}$ & active & 1000 & 1.000 & 1000 & 0.868 & 0.928 & 0.961 \\
175 & $\beta_{3}$ & active & 1000 & 1.000 & 1000 & 0.874 & 0.928 & 0.960 \\
175 & $\beta_{4}$ & active & 1000 & 1.000 & 1000 & 0.870 & 0.933 & 0.956 \\
175 & $\beta_{5}$ & inactive & 185 & 0.185 & 1000 & 0.892 & 0.944 & 0.956 \\
175 & $\beta_{6}$ & inactive & 215 & 0.215 & 1000 & 0.868 & 0.929 & 0.944 \\
200 & $\beta_{1}$ & active & 1000 & 1.000 & 1000 & 0.878 & 0.935 & 0.967 \\
200 & $\beta_{2}$ & active & 1000 & 1.000 & 1000 & 0.877 & 0.936 & 0.964 \\
200 & $\beta_{3}$ & active & 1000 & 1.000 & 1000 & 0.880 & 0.942 & 0.966 \\
200 & $\beta_{4}$ & active & 1000 & 1.000 & 1000 & 0.880 & 0.938 & 0.966 \\
200 & $\beta_{5}$ & inactive & 201 & 0.201 & 1000 & 0.872 & 0.930 & 0.940 \\
200 & $\beta_{6}$ & inactive & 213 & 0.213 & 1000 & 0.878 & 0.931 & 0.950 \\
250 & $\beta_{1}$ & active & 1000 & 1.000 & 1000 & 0.884 & 0.929 & 0.973 \\
250 & $\beta_{2}$ & active & 1000 & 1.000 & 1000 & 0.884 & 0.943 & 0.967 \\
250 & $\beta_{3}$ & active & 1000 & 1.000 & 1000 & 0.875 & 0.935 & 0.965 \\
250 & $\beta_{4}$ & active & 1000 & 1.000 & 1000 & 0.884 & 0.932 & 0.963 \\
250 & $\beta_{5}$ & inactive & 167 & 0.167 & 1000 & 0.908 & 0.948 & 0.957 \\
250 & $\beta_{6}$ & inactive & 188 & 0.188 & 1000 & 0.883 & 0.938 & 0.944 \\
\end{longtable}
\endgroup

%% file: generated_supplement_tables_v9/table_selection_frequency_tuning_grid_n80.tex
\begin{table}[p]
\centering
\scriptsize
\setlength{\tabcolsep}{4pt}
\caption{Cox Lasso selection counts and frequencies for a simulation setting
with \(n=80\), \(p=10\), the realistic coefficient pattern, a Weibull baseline
hazard, 10\% target censoring, and independent covariates. Each cell gives
\(N_{\mathrm{sel},j}\) followed by the empirical selection frequency in
parentheses. All eight combinations of tuning rule and penalty specification
are available for this setting, based on \(M=1000\) Monte Carlo replications.}
\label{tab:supp-selection-tuning-grid}
\begin{tabular}{llcccc}
\toprule
Coef. & Status & $\lambda_{\min}$ & $\lambda_{1se}$ & AIC & BIC \\
\midrule
\multicolumn{6}{l}{\textit{Panel A: Standard Cox Lasso (non-adaptive)}} \\
\addlinespace
$\beta_{1}$ & active & 1000 (1.000) & 997 (0.997) & 1000 (1.000) & 1000 (1.000) \\
$\beta_{2}$ & active & 1000 (1.000) & 983 (0.983) & 1000 (1.000) & 1000 (1.000) \\
$\beta_{3}$ & active & 993 (0.993) & 909 (0.909) & 998 (0.998) & 983 (0.983) \\
$\beta_{4}$ & active & 999 (0.999) & 997 (0.997) & 1000 (1.000) & 1000 (1.000) \\
$\beta_{5}$ & inactive & 466 (0.466) & 101 (0.101) & 573 (0.573) & 330 (0.330) \\
$\beta_{6}$ & inactive & 446 (0.446) & 120 (0.120) & 559 (0.559) & 326 (0.326) \\
\midrule
\multicolumn{6}{l}{\textit{Panel B: Adaptive Cox Lasso}} \\
\addlinespace
$\beta_{1}$ & active & 1000 (1.000) & 987 (0.987) & 1000 (1.000) & 1000 (1.000) \\
$\beta_{2}$ & active & 1000 (1.000) & 953 (0.953) & 1000 (1.000) & 997 (0.997) \\
$\beta_{3}$ & active & 968 (0.968) & 687 (0.687) & 988 (0.988) & 959 (0.959) \\
$\beta_{4}$ & active & 1000 (1.000) & 980 (0.980) & 1000 (1.000) & 999 (0.999) \\
$\beta_{5}$ & inactive & 154 (0.154) & 5 (0.005) & 247 (0.247) & 94 (0.094) \\
$\beta_{6}$ & inactive & 140 (0.140) & 9 (0.009) & 245 (0.245) & 96 (0.096) \\
\bottomrule
\end{tabular}
\end{table}

%% file: generated_supplement_tables_v10/table_tuning_penalty_mean_active_coverage_n80.tex
\begin{table}[p]
\centering
\small
\setlength{\tabcolsep}{4pt}
\caption{Mean empirical coverage for active coefficients across all eight combinations of tuning rule and penalty specification under the simulation setting used in Table~\ref{tab:supp-selection-tuning-grid}: $n=80$, $p=10$, the \textit{realistic} coefficient pattern, Weibull baseline hazard, 10\% target censoring, and independent covariates. Coverage is averaged over $\beta_1,\ldots,\beta_4$. `Sel. freq.' is the mean Cox Lasso selection frequency of these four active coefficients in the original simulated data sets. $N_{\mathrm{CI}}$ is the common number of Monte Carlo replications in which the three debiased intervals were available.}
\label{tab:supp-tuning-coverage}
\begin{tabular}{llrrrrr}
\toprule
Penalty & Tuning & Sel. freq. & $N_{\mathrm{CI}}$ & Deb. Wald & WB basic & Ef. basic \\
\midrule
Standard & $\lambda_{\min}$ & 0.998 & 1000 & 0.840 & 0.906 & 0.976 \\
Standard & $\lambda_{1se}$ & 0.972 & 999 & 0.789 & 0.866 & 0.986 \\
Standard & AIC & 1.000 & 1000 & 0.838 & 0.901 & 0.932 \\
Standard & BIC & 0.996 & 1000 & 0.835 & 0.906 & 0.968 \\
\midrule
Adaptive & $\lambda_{\min}$ & 0.992 & 1000 & 0.838 & 0.905 & 0.974 \\
Adaptive & $\lambda_{1se}$ & 0.902 & 996 & 0.798 & 0.876 & 0.985 \\
Adaptive & AIC & 0.997 & 1000 & 0.843 & 0.906 & 0.938 \\
Adaptive & BIC & 0.989 & 1000 & 0.840 & 0.908 & 0.968 \\
\bottomrule
\end{tabular}
\end{table}

%% file: generated_supplement_tables_v10/table_tuning_penalty_mean_active_width_n80.tex
\begin{table}[p]
\centering
\small
\setlength{\tabcolsep}{5pt}
\caption{Mean 90\% confidence-interval width for active coefficients across the same eight combinations of tuning rule and penalty specification and under the same simulation setting as Tables~\ref{tab:supp-selection-tuning-grid} and~\ref{tab:supp-tuning-coverage}. Widths are averaged over $\beta_1,\ldots,\beta_4$.}
\label{tab:supp-tuning-width}
\begin{tabular}{llrrr}
\toprule
Penalty & Tuning & Deb. Wald & WB basic & Ef. basic \\
\midrule
Standard & $\lambda_{\min}$ & 0.495 & 0.588 & 0.814 \\
Standard & $\lambda_{1se}$ & 0.497 & 0.591 & 1.142 \\
Standard & AIC & 0.495 & 0.588 & 0.670 \\
Standard & BIC & 0.495 & 0.588 & 0.809 \\
\midrule
Adaptive & $\lambda_{\min}$ & 0.494 & 0.587 & 0.817 \\
Adaptive & $\lambda_{1se}$ & 0.498 & 0.592 & 1.150 \\
Adaptive & AIC & 0.493 & 0.586 & 0.668 \\
Adaptive & BIC & 0.493 & 0.586 & 0.806 \\
\bottomrule
\end{tabular}
\end{table}

%% file: generated_supplement_tables_v8/table_coeff_width_min_adaptfalse.tex
\begingroup
\scriptsize
\setlength{\tabcolsep}{4pt}
\begin{longtable}{rrlrrr}
\caption{Coefficient-specific mean widths of the 90\% confidence intervals
for Reference setting~R1. The table reports the debiased Wald interval and
the basic intervals based on the wild bootstrap and Efron's bootstrap.}
\label{tab:supp-coeff-width-min}\\
\toprule
$n$ & Coef. & Status & Deb. Wald & WB basic & Ef. basic \\
\midrule
\endfirsthead
\caption[]{Coefficient-specific mean interval widths for Reference setting~R1. (continued)}\\
\toprule
$n$ & Coef. & Status & Deb. Wald & WB basic & Ef. basic \\
\midrule
\endhead
\midrule
\multicolumn{6}{r}{Continued on next page}\\
\endfoot
\bottomrule
\endlastfoot
30 & $\beta_{1}$ & active & 1.046 & 1.243 & 1.963 \\
30 & $\beta_{2}$ & active & 0.997 & 1.184 & 1.821 \\
30 & $\beta_{3}$ & active & 0.904 & 1.073 & 1.545 \\
30 & $\beta_{4}$ & active & 1.063 & 1.264 & 2.007 \\
30 & $\beta_{5}$ & inactive & 0.800 & 0.950 & 1.154 \\
30 & $\beta_{6}$ & inactive & 0.803 & 0.953 & 1.164 \\
40 & $\beta_{1}$ & active & 0.805 & 0.956 & 1.419 \\
40 & $\beta_{2}$ & active & 0.783 & 0.931 & 1.347 \\
40 & $\beta_{3}$ & active & 0.704 & 0.836 & 1.142 \\
40 & $\beta_{4}$ & active & 0.807 & 0.959 & 1.423 \\
40 & $\beta_{5}$ & inactive & 0.646 & 0.766 & 0.905 \\
40 & $\beta_{6}$ & inactive & 0.639 & 0.761 & 0.888 \\
60 & $\beta_{1}$ & active & 0.605 & 0.719 & 1.025 \\
60 & $\beta_{2}$ & active & 0.584 & 0.694 & 0.971 \\
60 & $\beta_{3}$ & active & 0.541 & 0.642 & 0.860 \\
60 & $\beta_{4}$ & active & 0.614 & 0.729 & 1.043 \\
60 & $\beta_{5}$ & inactive & 0.496 & 0.590 & 0.672 \\
60 & $\beta_{6}$ & inactive & 0.497 & 0.591 & 0.676 \\
75 & $\beta_{1}$ & active & 0.531 & 0.631 & 0.890 \\
75 & $\beta_{2}$ & active & 0.511 & 0.608 & 0.842 \\
75 & $\beta_{3}$ & active & 0.470 & 0.559 & 0.733 \\
75 & $\beta_{4}$ & active & 0.535 & 0.635 & 0.900 \\
75 & $\beta_{5}$ & inactive & 0.433 & 0.515 & 0.579 \\
75 & $\beta_{6}$ & inactive & 0.434 & 0.515 & 0.580 \\
80 & $\beta_{1}$ & active & 0.514 & 0.610 & 0.864 \\
80 & $\beta_{2}$ & active & 0.492 & 0.585 & 0.806 \\
80 & $\beta_{3}$ & active & 0.457 & 0.543 & 0.718 \\
80 & $\beta_{4}$ & active & 0.515 & 0.612 & 0.867 \\
80 & $\beta_{5}$ & inactive & 0.419 & 0.498 & 0.563 \\
80 & $\beta_{6}$ & inactive & 0.420 & 0.500 & 0.561 \\
100 & $\beta_{1}$ & active & 0.454 & 0.539 & 0.758 \\
100 & $\beta_{2}$ & active & 0.431 & 0.513 & 0.702 \\
100 & $\beta_{3}$ & active & 0.401 & 0.477 & 0.623 \\
100 & $\beta_{4}$ & active & 0.454 & 0.540 & 0.760 \\
100 & $\beta_{5}$ & inactive & 0.368 & 0.438 & 0.490 \\
100 & $\beta_{6}$ & inactive & 0.368 & 0.438 & 0.489 \\
125 & $\beta_{1}$ & active & 0.397 & 0.472 & 0.657 \\
125 & $\beta_{2}$ & active & 0.381 & 0.453 & 0.619 \\
125 & $\beta_{3}$ & active & 0.355 & 0.423 & 0.548 \\
125 & $\beta_{4}$ & active & 0.397 & 0.472 & 0.655 \\
125 & $\beta_{5}$ & inactive & 0.325 & 0.387 & 0.430 \\
125 & $\beta_{6}$ & inactive & 0.325 & 0.386 & 0.431 \\
175 & $\beta_{1}$ & active & 0.330 & 0.393 & 0.540 \\
175 & $\beta_{2}$ & active & 0.318 & 0.378 & 0.511 \\
175 & $\beta_{3}$ & active & 0.295 & 0.350 & 0.450 \\
175 & $\beta_{4}$ & active & 0.330 & 0.393 & 0.539 \\
175 & $\beta_{5}$ & inactive & 0.271 & 0.322 & 0.355 \\
175 & $\beta_{6}$ & inactive & 0.271 & 0.322 & 0.356 \\
200 & $\beta_{1}$ & active & 0.306 & 0.364 & 0.504 \\
200 & $\beta_{2}$ & active & 0.296 & 0.351 & 0.476 \\
200 & $\beta_{3}$ & active & 0.276 & 0.328 & 0.424 \\
200 & $\beta_{4}$ & active & 0.307 & 0.366 & 0.506 \\
200 & $\beta_{5}$ & inactive & 0.253 & 0.300 & 0.332 \\
200 & $\beta_{6}$ & inactive & 0.253 & 0.301 & 0.332 \\
250 & $\beta_{1}$ & active & 0.272 & 0.323 & 0.444 \\
250 & $\beta_{2}$ & active & 0.261 & 0.311 & 0.419 \\
250 & $\beta_{3}$ & active & 0.245 & 0.291 & 0.375 \\
250 & $\beta_{4}$ & active & 0.272 & 0.323 & 0.444 \\
250 & $\beta_{5}$ & inactive & 0.225 & 0.267 & 0.292 \\
250 & $\beta_{6}$ & inactive & 0.225 & 0.268 & 0.294 \\
\end{longtable}
\endgroup

%% file: generated_supplement_tables_v8/table_coeff_width_aic_adapttrue.tex
\begingroup
\scriptsize
\setlength{\tabcolsep}{4pt}
\begin{longtable}{rrlrrr}
\caption{Coefficient-specific mean widths of the 90\% confidence intervals
for Reference setting~R2. The table reports the debiased Wald interval and
the basic intervals based on the wild bootstrap and Efron's bootstrap.}
\label{tab:supp-coeff-width-aic}\\
\toprule
$n$ & Coef. & Status & Deb. Wald & WB basic & Ef. basic \\
\midrule
\endfirsthead
\caption[]{Coefficient-specific mean interval widths for Reference setting~R2. (continued)}\\
\toprule
$n$ & Coef. & Status & Deb. Wald & WB basic & Ef. basic \\
\midrule
\endhead
\midrule
\multicolumn{6}{r}{Continued on next page}\\
\endfoot
\bottomrule
\endlastfoot
30 & $\beta_{1}$ & active & 1.086 & 1.291 & 1.438 \\
30 & $\beta_{2}$ & active & 1.052 & 1.250 & 1.389 \\
30 & $\beta_{3}$ & active & 0.944 & 1.121 & 1.222 \\
30 & $\beta_{4}$ & active & 1.105 & 1.312 & 1.469 \\
30 & $\beta_{5}$ & inactive & 0.844 & 1.002 & 1.054 \\
30 & $\beta_{6}$ & inactive & 0.851 & 1.011 & 1.065 \\
40 & $\beta_{1}$ & active & 0.819 & 0.975 & 1.102 \\
40 & $\beta_{2}$ & active & 0.782 & 0.929 & 1.043 \\
40 & $\beta_{3}$ & active & 0.713 & 0.847 & 0.932 \\
40 & $\beta_{4}$ & active & 0.821 & 0.977 & 1.105 \\
40 & $\beta_{5}$ & inactive & 0.656 & 0.780 & 0.821 \\
40 & $\beta_{6}$ & inactive & 0.657 & 0.781 & 0.821 \\
60 & $\beta_{1}$ & active & 0.613 & 0.729 & 0.827 \\
60 & $\beta_{2}$ & active & 0.587 & 0.699 & 0.785 \\
60 & $\beta_{3}$ & active & 0.543 & 0.645 & 0.711 \\
60 & $\beta_{4}$ & active & 0.613 & 0.728 & 0.829 \\
60 & $\beta_{5}$ & inactive & 0.502 & 0.596 & 0.625 \\
60 & $\beta_{6}$ & inactive & 0.503 & 0.599 & 0.628 \\
75 & $\beta_{1}$ & active & 0.528 & 0.629 & 0.718 \\
75 & $\beta_{2}$ & active & 0.511 & 0.606 & 0.688 \\
75 & $\beta_{3}$ & active & 0.473 & 0.563 & 0.624 \\
75 & $\beta_{4}$ & active & 0.537 & 0.638 & 0.733 \\
75 & $\beta_{5}$ & inactive & 0.434 & 0.515 & 0.540 \\
75 & $\beta_{6}$ & inactive & 0.434 & 0.516 & 0.540 \\
80 & $\beta_{1}$ & active & 0.511 & 0.608 & 0.699 \\
80 & $\beta_{2}$ & active & 0.492 & 0.583 & 0.665 \\
80 & $\beta_{3}$ & active & 0.456 & 0.542 & 0.604 \\
80 & $\beta_{4}$ & active & 0.515 & 0.611 & 0.704 \\
80 & $\beta_{5}$ & inactive & 0.420 & 0.499 & 0.523 \\
80 & $\beta_{6}$ & inactive & 0.419 & 0.497 & 0.522 \\
100 & $\beta_{1}$ & active & 0.453 & 0.537 & 0.627 \\
100 & $\beta_{2}$ & active & 0.433 & 0.514 & 0.594 \\
100 & $\beta_{3}$ & active & 0.401 & 0.477 & 0.534 \\
100 & $\beta_{4}$ & active & 0.453 & 0.538 & 0.628 \\
100 & $\beta_{5}$ & inactive & 0.370 & 0.438 & 0.461 \\
100 & $\beta_{6}$ & inactive & 0.369 & 0.438 & 0.459 \\
125 & $\beta_{1}$ & active & 0.396 & 0.472 & 0.551 \\
125 & $\beta_{2}$ & active & 0.382 & 0.454 & 0.525 \\
125 & $\beta_{3}$ & active & 0.354 & 0.420 & 0.473 \\
125 & $\beta_{4}$ & active & 0.395 & 0.470 & 0.549 \\
125 & $\beta_{5}$ & inactive & 0.325 & 0.387 & 0.406 \\
125 & $\beta_{6}$ & inactive & 0.326 & 0.387 & 0.406 \\
175 & $\beta_{1}$ & active & 0.330 & 0.391 & 0.460 \\
175 & $\beta_{2}$ & active & 0.317 & 0.376 & 0.439 \\
175 & $\beta_{3}$ & active & 0.295 & 0.351 & 0.396 \\
175 & $\beta_{4}$ & active & 0.328 & 0.390 & 0.458 \\
175 & $\beta_{5}$ & inactive & 0.271 & 0.321 & 0.337 \\
175 & $\beta_{6}$ & inactive & 0.271 & 0.323 & 0.337 \\
200 & $\beta_{1}$ & active & 0.306 & 0.363 & 0.433 \\
200 & $\beta_{2}$ & active & 0.296 & 0.352 & 0.415 \\
200 & $\beta_{3}$ & active & 0.275 & 0.327 & 0.374 \\
200 & $\beta_{4}$ & active & 0.306 & 0.364 & 0.433 \\
200 & $\beta_{5}$ & inactive & 0.253 & 0.301 & 0.317 \\
200 & $\beta_{6}$ & inactive & 0.253 & 0.300 & 0.316 \\
250 & $\beta_{1}$ & active & 0.272 & 0.322 & 0.382 \\
250 & $\beta_{2}$ & active & 0.262 & 0.312 & 0.366 \\
250 & $\beta_{3}$ & active & 0.245 & 0.291 & 0.332 \\
250 & $\beta_{4}$ & active & 0.272 & 0.323 & 0.384 \\
250 & $\beta_{5}$ & inactive & 0.225 & 0.268 & 0.281 \\
250 & $\beta_{6}$ & inactive & 0.225 & 0.267 & 0.281 \\
\end{longtable}
\endgroup